\numberwithin{equation}{section}
\newtheorem{Theorem}{Theorem}[section]
\newtheorem*{Theorem*}{Theorem}
\newtheorem{Corollary}[Theorem]{Corollary}
\newtheorem{Lemma}[Theorem]{Lemma}
\newtheorem{Proposition}[Theorem]{Proposition}
\newtheorem{Conjecture}[Theorem]{Conjecture}
 { \theoremstyle{definition}
\newtheorem{Definition}[Theorem]{Definition}

\newtheorem{Example}[Theorem]{Example}
\newtheorem{Remark}[Theorem]{Remark} }
\def\p{\partial}
\def\cl{{\cal L}}
\def\<{\langle}
\def\>{\rangle}
\def\cO{\mathcal{O}}
\def\be{\begin{equation}}
\def\ee{\end{equation}}
\def\beq{\be\begin{array}{c}}
\def\eeq{\end{array}\ee}
\def\bes{\be\begin{split}}
\def\ees{\end{split} \ee}
\def\bs{\begin{split}}
\def\es{\end{split} }
\def\nn{\nonumber}
\def\b{{\beta}}
\def\a{{\alpha}}
\def\g{{ \gamma}}
\def\d{{\delta}}
\def\e{{\epsilon}}
\def\ve{{\varepsilon}}
  \let\over=\@@over \let\overwithdelims=\@@overwithdelims
  \let\atop=\@@atop \let\atopwithdelims=\@@atopwithdelims
  \let\above=\@@above \let\abovewithdelims=\@@abovewithdelims
\renewcommand\section{\@startsection {section}{1}{\z@}%
                                   {-3.5ex \@plus -1ex \@minus -.2ex}
                                   {2.3ex \@plus.2ex}%
                                   {\normalfont\large\bfseries}}
\renewcommand\subsection{\@startsection{subsection}{2}{\z@}%
                                     {-3.25ex\@plus -1ex \@minus -.2ex}%
                                     {1.5ex \@plus .2ex}%
                                     {\normalfont\bfseries}}
\begin{document}
\begin{titlepage}
\unitlength = 1mm

\vskip 1cm
\begin{center}
{ \LARGE {\textsc{BV-refinement of the on-shell supersymmetry and localization}}}

\vspace{0.8cm}

Andrey Losev\\
\vspace{0.3cm}
{\it Shanghai Institute for Mathematics and Interdisciplinary Sciences, \\
Building 3, 62 Weicheng Road, Yangpu District, Shanghai, 200433, China\\
\vspace{0.3cm}
National Research University Higher School of Economics, Laboratory of Mirror Symmetry, NRU HSE, 6 Usacheva str., Moscow, Russia, 119048}

\vspace{1cm}
 Vyacheslav Lysov\\
 \vspace{0.3cm}
 {\it London Institute for Mathematical Sciences, \\
 Royal Institution,  21 Albemarle St, London W1S 4BS, UK}\\
  \vspace{0.3cm}
{\it Okinawa Institute of Science and Technology,\\
 1919-1 Tancha, Onna-son, Okinawa 904-0495, Japan}
\vspace{1cm}

\vspace{0.8cm}
\vspace{0.8cm}

\begin{abstract}
We generalize the BV formalism for the physical theories on supermanifolds with graded symmetry algebras realized off-shell and on-shell. An application of such generalization to supersymmetric theories allows us to formulate  the new classification which refines the usual  off-shell/on-shell   classification.  Our new classification is  based on the type of higher order antifield  terms in the corresponding BV action.   We provide  explicit examples for each class of the refined classification.   We show that the supersymmetric localization for the off-shell theories is a BV integral over particular Lagrangian submanifold.  We generalize the supersymmetric localization to the on-shell supersymmetric theories of quadratic type 
in our classification. The partition function for such theories becomes Gaussian integral for a particular choice of Lagrangian submanifold.
\end{abstract}

\end{center}

\end{titlepage}

\pagestyle{empty}
\pagestyle{plain}

\pagenumbering{arabic}

\tableofcontents

\section{Introduction}

Supersymmetric localization is a tool with  wide range of useful applications from Duistermaat-Heckman formula in math literature to supersymmetric partition functions in gauge theories. 

The key ingredients for localization are an integral over a supermanifold, an invariance of the classical  action  and measure under the parity-odd vector field transformations,  and   nilpotency of a vector field. In physics terminology the integral over supermanifold is a partition function for a physical system,  the odd vector field is a supersymmetry of the system, while the  nilpotecy means that the symmetry is realized off-shell.  Localization reduces an integral over the whole supermanifold to the integral over the neighborhood of the vector field's zeroes.

There are various means of generating the nilpotent symmetry for localization: working in off-shell description of theory, choosing a specific linear combination of supersymmetries in a theory with many supersymmetries or using a  nilpotency  up to an another symmetry and restricting to the invariant subsector of the theory.   Even with various existing methods of the nilpotent  symmetry construction,  there are some systems  with localization properties with  weaker nilpotency condition:  the symmetry vector field  squares to zero on the solutions to the equations of motion. In physics terminology such type of  symmetries are known as the on-shell  symmetries.

The most famous example of the on-shell supersymemtric system with a localization-like properties  is the  zero-dimensional theory of  superpotential. It is commonly used as an  introductory example \cite{horimirror,skinner2018supersymmetry} to supersymmetry and loop cancellation in the partition function. The partition function for such system, under  the smart change of variables, becomes a Gaussian integral and can be expressed in terms of critical points for the superpotential, which matches with the zeros of the parity-odd nilpotent vector field.

The key observation is that only a small portion of the on-shell supersymmetric systems exhibit the localization-like properties.  We suggest a refinement to the notion of the on-shell/off-shell (super)symmetry  based on the properties of the  Batalin-Vilkovisky (BV) \cite{Batalin:1981jr,Batalin:1983ggl}   description of the corresponding system.  The BV action is constructed order by order in antifields: the zeroth order is the classical action, the first order  describes the symmetry, while the higher orders describe the details of the on-shell realization of the symmetry.  Our refined  classification  distinguishes BV actions by the  highest order antifield terms: linear, quadratic, polynomial, formal series and  obstructed.  We present an explicit examples of zero-dimensional supersymmetric systems  for each class.

The  BV action is linear in antifields for the off-shell supersymmetric theories.   For many  known supersymmetric theories with on-shell supersymmetry and  good properties \cite{Baulieu:1990uv} the corresponding BV actions are  quadratic in antifields.  For a zero-dimensional theory of superpotential the BV action is also quadratic in antifields. We provide explicit zero-dimensional examples of the on-shell supersymmetric theories with polynomial and formal series  BV actions. We carefully investigate the possible obstructions to the order by order construction of  BV actions and provide examples when the BV action is obstructed at certain order in antifields. 

We recast the supersymmetric localization for the off-shell supersymmetric systems in the form of the BV integration with a particular choice of Lagrangian submanifold.  Moreover, the BV representation of localization is not restricted to the 
actions linear in antifields and can be applied to the arbitrary BV actions. In particular, we formulate new localization theorem for the  on-shell supersymmetric theories of quadratic type in refined classification.  The surprising localization of the superpotential theory in zero dimensions is an example for such theorem's application. The key point behind the localization is that the quadratic term contains only antifields for parity-odd (fermionic) variables.  Hence, a particular choice of Lagrangian submanifold can make antifields parametrically  large and the quadratic terms dominate the BV integral  turning the partition function into a Gaussian integral.

The structure of our paper is the following. We start with a brief review of  supergeometry, supersymmetry and localization in zero dimensions.  In section 3 we review the BV formalism, generalize it to the case of  theories with graded symmetries, discuss the obstructions and uniqueness of the order by order construction for the BV action. In the next section we present several concrete examples of BV actions for supersymmetric systems and  formulate  the BV-refinement for the notion of on-shell/off-shell supersymmetric theories.  In section 5 we discuss the supersymmetric localization on the language of BV integrals and  apply it to several examples.  In section 6 we recast the 1-dimensional integrals into supersymmetric partition functions and discuss the various aspects of the corresponding BV actions.

\section{$d=0$ Supersymmetry and localization}

In this section we briefly review  the supergeometry and supersymmetry in $d=0$. For more details please read \cite{deligne1999quantum,kostant2006graded,Witten:2012bg,manin2013gauge}.   In particular, we describe the  supersymmetry in zero dimensions, corresponding superspace, notions of on-shell/off-shell supersymmetry   and  localization. 

Our main focus will be the $\mathcal{N}=2$ supersymmetry in $d=0$. Although there is a simpler $\mathcal{N}=1$ supersymmetry its multiplets are too small to  differentiate  between the off-shell and on-shell realizations.

\subsection{Elements of supergeometry}

Supermanifold is defined in terms of its sheaf of functions. 

\begin{Definition} An $(n|m)$-dimensional supermanifold $\mathcal{M}$ is a sheaf $\mathcal{\cO}_{\mathcal{M}}$ on a smooth $n$-manifold $M$ (denoted as the body of $\mathcal{M}$) of graded-commutative algebras.  
The $\mathcal{\cO}_{\mathcal{M}}$ is locally isomorphic to $C^\infty(U) \otimes \wedge^\ast V^\ast$ for open $U\subset M$ and a fixed vector space $V$ of dimension $m$.
\end{Definition}

The function on a supermanifold of dimension $(n|m)$ in a  local chart  $U$ with  parity even coordinates $t^i,\;\; i=1..n$ on $U$ and parity odd coordinates $\theta^\a,\;\; \a=1..m$ is  a smooth function $f(t,\theta) $.
The  commutation relations for the coordinates 
\be
t^i t^j  = t^j t^i,\;\;\; t^i \theta^\a = \theta^\a t^i,\;\;\; \theta^\a \theta^\beta = - \theta^\b \theta^\a  
\ee
imply that all smooth functions of $\theta$ are finite polynomials.  The monomials in $\theta$ naturally carry a $\mathbb{Z}$-grading equal to the number of $\theta$'s in the corresponding monomial. The $\mathbb{Z}_2$ part of the grading for a monomial $g$ is called parity, denoted by   $|g|$ and is  used to keep track of the signs in various expressions.

\begin{Example} For a smooth $n$-manifold $M$ we can construct $(n|n)$-supermanifold $\mathcal{M} = T[1]M$ - the tangent bundle of $M$ with a shifted grading of the fibers. Let $x^i$ be local coordinates in open set  $U \subset M$.
The corresponding chart on $T[1]M$ has local coordinates $x^i$ of parity 0 and fiber coordinates $\psi^i$ of parity 1.   An element of $\mathcal{O}_{\mathcal{M}}$ locally  has the form $f(x, \psi)$. Globally we can identify functions on $T[1]M$
with differential forms on $M$,  i.e. $\cO_{\mathcal{M}} \simeq \Omega^\ast (M)$.  For a function $f(x ,\psi)$ we replace $\psi^i$ by the $dx^i$ to get  the corresponding form.
\end{Example}

\begin{Example} \label{ex_polyvectors_functions} For a smooth $n$-manifold $M$ we can construct $(n|n)$-supermanifold $\mathcal{M} = T^\ast[1]M$ - the cotangent bundle of $M$ with a shifted grading of the fibers. Let $x^i$ be local coordinates in open chart $U \subset M$.
The corresponding chart on $T^\ast[1]M$ has local coordinates $x^i$ of parity 0 and fiber coordinates $\theta_i$ of parity 1.   An element of $\mathcal{O}_{\mathcal{M}}$ locally  has the form $f(x, \theta)$. Globally we can identify functions on $T^\ast[1]M$
with polyvector fields on $M$,  i.e. $\cO_{\mathcal{M}} \simeq  \mathfrak{X} (M)$.  For a function $f(x ,\theta)$ we replace $\theta_i$ by the $\p_i$ to get  the corresponding polyvector field.
\end{Example}

\begin{Definition} A vector field $v \in \hbox{Vect}(\mathcal{M})$ of parity $|v| \in \{0,1\}$ is a derivation of $\mathcal{O}_{\mathcal{M}}$ of parity $|v|$, i.e. an $\mathbb{R}$-linear map 
$v : \cO_{\mathcal{M}} \to \cO_{\mathcal{M}}$, such that 
\begin{itemize}
\item it obeys the graded Leibniz rule 
\be
v(fg) = v(f) \cdot g + (-1)^{|v||f|} f \cdot v(g);
\ee
\item it is consistent with the  grading on functions 
\be
|v(f)| = |v| + |f|.
\ee
\end{itemize} 
\end{Definition}

Vector fields  $ \hbox{Vect}(\mathcal{M})$ on $\mathcal{M}$ form a superalgebra with Lie bracket in the form of the graded commutator, i.e. 
\be
[v,u\} = v\circ u - (-1)^{|v||u|} u\circ v.
\ee
In a local chart $U$ with coordinates $t^i, \theta^\a$ a vector field takes the form 
 \be\label{eq_local_vect_field_chart}
 v = \sum v^i (t, \theta) \frac{\p}{\p t^i} + \sum v^\a (t, \theta) \frac{\p}{\p \theta^\a}.
 \ee
Vector fields  $ \hbox{Vect}(\mathcal{M})$ also form a module over the ring of functions $\mathcal{O}_{\mathcal{M}}$.
The differential 1-forms form a dual $\mathcal{O}_{\mathcal{M}}$-module generated by $dt^j, d\theta^\b$, i.e.
\be
dt^j ( \p_{t^i})  = \delta^j_i,\;\;\; dt^j ( \p_{\theta^\a}) =0,\;\;\; d\theta^\a (\p_{t^i})  =0,\;\; d\theta^\a(\p_{\theta^\b}) = \delta^\a_\b.
\ee
The   generators  for superalgebra of differential forms  have the following supercommutative relations
\be
dt^i dt^j = -dt^j dt^i,\;\;\; d\theta^{\a} d\theta^{\beta} = d\theta^{\beta} d\theta^{\a},\;\;\; dt^i d\theta^{\a} = d\theta^{\a} dt^i.
\ee
More generally, the  differential forms $\Omega^\bullet (\mathcal{M})$ form a   supergalgebra  with the grading being the sum of parity and degree,   i.e.
\be\label{eq_diff_commute_sign}
\omega_1\cdot \omega_2 = (-1)^{(k_1+|\omega_1|)(k_2+|\omega_2|)} \omega_2\cdot \omega_1 ,\;\;\; \omega_i \in \Omega^{k_i}(\mathcal{M}).
\ee

\begin{Remark} There are several  conventions for the sign in   (\ref{eq_diff_commute_sign}) in existing literature on  BV formalism and supermanifolds. In particular, another commonly used convention  determines the sign in (\ref{eq_diff_commute_sign}) using the composition of signs for degree and parity.     Our conventions are used in \cite{manin2013gauge,Witten:2012bg,mnev2019quantum} while the alternative conventions  are used in \cite{kostant2006graded,khudaverdian2000semidensities, deligne1999quantum,henneaux1992quantization}. In the appendix to the section 1 of \cite{deligne1999quantum} authors provide a detailed analysis for the pros and cons 
of the both convention choices.
\end{Remark}
The super de Rham differential  is defined via
\be
d = dt^i \frac{\p}{\p t^i} + d\theta^{\a} \frac{\p}{\p \theta^\a}: \Omega^k(\mathcal{M}) \to \Omega^{k+1}(\mathcal{M}).
\ee
The Berezin integral for a single odd variable is defined via
\be
\int D\theta\;\; (f_0+f_1\theta) = f_1.
\ee
Berezin integral for two odd variables is defined as a pair  of single-variable  Berezin integrals 
\be
\int D\theta_2 D\theta_1\;\; f(\theta_1, \theta_2) = \int D\theta_2 \left(\int D\theta_1 \;\;f(\theta_1, \theta_2)\right).
\ee
In particular, for a generic function 
\be
\int D\theta_2 D\theta_1\; (f_0 +f_1 \theta_1+f_2 \theta_2+f_{12} \theta_1\theta_2) = \int D\theta_2\; (f_1 + f_{12} \theta_2) = f_{12}.
\ee
The higher dimensional version of Berezin integration is given by 
\be
\int D\theta_m\cdots D\theta_1 \; f(\theta_1,\ldots,\theta_n)  =  \frac{\p}{\p \theta^m} \cdots   \frac{\p}{\p \theta^2} \frac{\p}{\p \theta^1}  f. 
\ee
The  integration measure in the integral above is commonly referred to as the coordinate Berezinian 
\be
\mu = D\theta_m\cdots D\theta_1.
\ee
An integral of a function $f(t, \theta)$ over a supermanifold  $\mathcal{M}$ with body $M$ and  Berezinian $\mu  =\rho(t,\theta)d^nt D^m \theta$ is defined via 
\be
\int_{\mathcal{M}} \mu \; f = \int_{\mathcal{M}}\;  \rho(t,\theta) d^n t\; D^m \theta \; f(t,\theta) = \int_M d^n t\;  \frac{\p}{\p \theta^m} \cdots   \frac{\p}{\p \theta^2} \frac{\p}{\p \theta^1} ( f \rho).
\ee
Using Berezin  integration rules we can define a divergence of a vector field  on a supermanifold via 
\be\label{def_superman_divergence}
\int_{\mathcal{M}} \mu \; v( f) = - \int_{\mathcal{M}} \mu \; \hbox{div}_\mu v \cdot  f. 
\ee
  For  a vector field in components  (\ref{eq_local_vect_field_chart})  and Berezinian $ \mu=\rho(t,\theta)d^nt D^m \theta$ the  divergence is
 \be
 \hbox{div}_{\mu} v = \sum \frac{\p}{\p x^i} v^i - (-1)^{|v|} \sum \frac{\p}{\p \theta^\a} v^\a  + v (\ln \rho).
 \ee

\subsection{Superspace formalism}\label{sec_superspace_formalism}
Let us consider a supermanifold $\mathbb{R}^{0|2}$, the 2-dimensional odd Euclidean space,  with coordinates $\theta$ and $\bar{\theta}$ and a map
\be
\hat{x} : \mathbb{R}^{0|2} \to \mathbb{R} \;:\; (\theta, \bar{\theta}) \mapsto \hat{x}(\theta, \bar{\theta}).
\ee
Let us use  the notation  $x, F$ for even coordinates  and $\psi, \bar{\psi}$  for odd coordinates on the space of maps  $\hbox{Maps} (\mathbb{R}^{0|2} , \mathbb{R})  =  \mathbb{R}^{2|2}$, i.e. each map is written in the form 
\be
\hat{x}(\theta, \bar{\theta}) = x +\theta \bar{\psi} + \bar{\theta} \psi + F \theta \bar{\theta}.
\ee
The  function $\hat{x}(\theta, \bar{\theta})$ in physics literature is known as the  {\it superfield}. There are odd translations on  $\mathbb{R}^{0|2}$  
\be
\theta \to \theta +\e,\;\;\; \bar{\theta} \to \bar{\theta}+\bar{\e}
\ee
generated by the odd vector fields 
\be
\mathfrak{Q} = \frac{\p}{\p \theta},\;\; \bar{\mathfrak{Q}}  = \frac{\p}{\p \bar{\theta}}
\ee
which form  the   superalgebra 
\be\label{N=2_offshel_susy}
[\mathfrak{Q}, \bar{\mathfrak{Q}}\} =[\mathfrak{Q},\mathfrak{Q}\}= [\bar{\mathfrak{Q}},\bar{\mathfrak{Q}}\} = 0.
\ee
The superalgebra (\ref{N=2_offshel_susy}) in physics is known as  the {\it  $d=0$ $\mathcal{N}=2$   supersymmetry algebra}.   
\begin{Remark}
In our naming for the superalgebra we  used the terminology  from supersymmetric  sigma models. A sigma model is the (quantum) theory of   maps $\Sigma \to M$ and the supersymmetry acts on the space of maps.  The  $d=0$ refers to the parity-even dimension of the source  space, which is  $\Sigma = \mathbb{R}^{0|2}$ in our case.
The  $\mathcal{N}=2$ refers to the number of supersymmetries, the dimension of the parity-odd  part of the supersymmetry algebra.  In our case we have two odd vectorfields  $\mathfrak{Q}, \bar{\mathfrak{Q}}$, so the dimension equals to 2.
\end{Remark}
An   infinitesimal translations  on  $\mathbb{R}^{0|2}$ generate  infinitesimal translations on the space of maps via 
\be\label{eq_gen_superfield_diff_susy}
\begin{split}
 (\e \mathfrak{Q} + \bar{\e}\bar{\mathfrak{Q}})   \hat{x}(\theta,\bar{\theta}) &= \e \bar{\psi} + \bar{\e} \psi  +\theta \bar{\e} F -\bar{\theta} \e F \\
& =  \delta x + \theta \delta \bar{\psi} +\bar{\theta} \delta \psi + \theta\bar{\theta} \delta F =(\e Q +\bar{\e} Q) \hat{x}(\theta,\bar{\theta}). 
\end{split}
\ee
The action of translations on the space of maps in components 
\be\label{def_d=0_n=2_offshell_susy}
\begin{split}
\delta x &= (\e Q +\bar{\e} Q)x =  \e \bar{\psi} +\bar{\e} \psi,\;\;\;\delta \psi  =(\e Q +\bar{\e} Q)\psi =  -\e F,\\
\delta \bar{\psi} &= (\e Q +\bar{\e} Q)\bar{\psi} = \bar{\e} F,\;\;\;\delta F =(\e Q +\bar{\e} Q)F =  0.
\end{split}
\ee
 There are additional  types of superfields in our model 
\be
 \mathfrak{D} \hat{x}  = \p_\theta \hat{x} =  \psi + \bar{\theta} F, \;\; \bar{\mathfrak{D}} \hat{x}  = \p_{\bar{\theta}}\hat{x} =  \bar{\psi}  -\theta F,
\ee
in the sense that 
\be\label{eq_der_superfield_diff_susy}
 (\e \mathfrak{Q} + \bar{\e}\bar{\mathfrak{Q}}) \mathfrak{D} \hat{x}  = (\e Q +\bar{\e} Q)\mathfrak{D} \hat{x},\;\;  (\e \mathfrak{Q} + \bar{\e}\bar{\mathfrak{Q}}) \bar{\mathfrak{D}} \hat{x} = (\e Q +\bar{\e} Q)\bar{\mathfrak{D}} \hat{x},
\ee
which often named  {\it derivative superfields} or {\it fermionic fuperfields} and  {\it trivial superfield} $F =  \bar{\mathfrak{D}}\mathfrak{D}  \hat{x}$. The term ``fermionic"  is due to the lowest  component (in  $\theta$-expansion)  being  parity-odd, 
in contrast to the superfield $\hat{x}(\theta, \bar{\theta})$ with the lowest  component being parity-even. 

The key feature of the superfield formalism is an ability to construct function on the space of maps, invariant under the supersymmetry transformations (\ref{def_d=0_n=2_offshell_susy}), from an arbitrary function of superfields.

\begin{Proposition} The integral over the  superspace  $\mathbb{R}^{0|2}$ for  an  arbitrary function of $\hat{x}$ and   derivatives  is  invariant under the supersymmetry  transformations (\ref{def_d=0_n=2_offshell_susy}).
\end{Proposition}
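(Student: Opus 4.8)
The plan is to show that the Berezin integral over $\mathbb{R}^{0|2}$ of any function built from $\hat x$ and its $\mathfrak D$-derivatives is annihilated by the infinitesimal supersymmetry variation $\delta = \e Q + \bar\e Q$ acting on the component fields. The key structural fact, established in equations \eqref{eq_gen_superfield_diff_susy} and \eqref{eq_der_superfield_diff_susy}, is that the component-field transformation $\delta$ on $\mathbb{R}^{2|2}$ is \emph{intertwined} with the geometric odd translations $\e\mathfrak Q + \bar\e\bar{\mathfrak Q} = \e\,\partial_\theta + \bar\e\,\partial_{\bar\theta}$ on the superspace $\mathbb{R}^{0|2}$: for any composite superfield $\Phi(\theta,\bar\theta)$ formed from $\hat x$, $\mathfrak D\hat x$, $\bar{\mathfrak D}\hat x$, $F$ by products and sums we have $\delta\big(\Phi(\theta,\bar\theta)\big) = (\e\,\partial_\theta + \bar\e\,\partial_{\bar\theta})\Phi(\theta,\bar\theta)$. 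This extends from the generators to arbitrary functions because both sides act as graded derivations that agree on generators.

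First I would make precise the class of integrands: let $f$ be any function of $\hat x$ and finitely many $\mathfrak D$- and $\bar{\mathfrak D}$-derivatives thereof, so that $\Phi(\theta,\bar\theta) := f(\hat x(\theta,\bar\theta), \mathfrak D\hat x, \dots)$ is itself a superfield on $\mathbb{R}^{0|2}$, i.e. a polynomial in $\theta,\bar\theta$ with coefficients that are functions of the component fields $x,\psi,\bar\psi,F$. The object of interest is $I[f] = \int D\bar\theta\, D\theta\; \Phi(\theta,\bar\theta)$, a function on $\mathbb{R}^{2|2}$. Second, I would compute $\delta I[f]$ by pulling the variation inside the Berezin integral — legitimate since Berezin integration is $\mathbb{R}$-linear in the integrand and the variation acts only on the component fields, not on $\theta,\bar\theta$ — obtaining $\delta I[f] = \int D\bar\theta\, D\theta\; \delta\Phi(\theta,\bar\theta)$. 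Third, I would invoke the intertwining property to replace $\delta\Phi$ by $(\e\,\partial_\theta + \bar\e\,\partial_{\bar\theta})\Phi$, reducing the claim to the statement that $\int D\bar\theta\, D\theta\; \partial_\theta\Phi = 0$ and $\int D\bar\theta\, D\theta\; \partial_{\bar\theta}\Phi = 0$ for every superfield $\Phi$. This last fact is an elementary consequence of the definition of Berezin integration: writing $\Phi = \phi_0 + \theta\phi_1 + \bar\theta\phi_2 + \theta\bar\theta\,\phi_{12}$, one has $\partial_\theta\Phi = \phi_1 - \bar\theta\,\phi_{12}$, whose $D\bar\theta\,D\theta$ integral picks out the coefficient of $\theta\bar\theta$, which is absent; similarly for $\partial_{\bar\theta}$.

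The main obstacle is bookkeeping rather than conceptual: one must check carefully that the intertwining relation genuinely extends from the four basic superfields $\hat x, \mathfrak D\hat x, \bar{\mathfrak D}\hat x, F$ to arbitrary polynomial (and smooth) functions of them, i.e. that $\delta$ and $\e\partial_\theta + \bar\e\partial_{\bar\theta}$ both obey the graded Leibniz rule with consistent sign conventions so that agreement on generators forces agreement on all composites. Here one must be attentive to the parity of $\e,\bar\e$ (odd) and the signs in \eqref{eq_diff_commute_sign}; a clean way is to note that $\e\mathfrak Q + \bar\e\bar{\mathfrak Q}$ is by construction an even derivation of $\cO_{\mathbb{R}^{0|2}}$ with values in functions on $\mathbb{R}^{2|2}$, and that \eqref{eq_gen_superfield_diff_susy}–\eqref{eq_der_superfield_diff_susy} say precisely that this derivation coincides with $\delta$ on the generating superfields, hence on the subalgebra they generate. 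Once this is in place the vanishing of the integral is immediate, and one may further remark that the same argument shows invariance of the measure $D\bar\theta\,D\theta$ itself, since $\partial_\theta$ and $\partial_{\bar\theta}$ have vanishing divergence with respect to the coordinate Berezinian.
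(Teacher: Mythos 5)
Your proposal is correct and follows essentially the same route as the paper: pull the variation inside the Berezin integral, use the intertwining relations (\ref{eq_gen_superfield_diff_susy})--(\ref{eq_der_superfield_diff_susy}) to trade the component-field action of $\e Q+\bar\e\bar Q$ for the superspace translations $\e\p_\theta+\bar\e\p_{\bar\theta}$, and conclude by the vanishing of Berezin integrals of total odd derivatives. You merely make explicit two points the paper leaves implicit (the extension of the intertwining from generators to arbitrary composites via the graded Leibniz rule, and the component check that total $\theta$-derivatives integrate to zero), which is fine.
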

\begin{proof} The most general superspace integral is of the form 
\be
S(x,F,\psi,\bar{\psi}) = \int d\theta d\bar{\theta}\;\;  \mathcal{F} (\hat{x},  \mathfrak{D} \hat{x} , \bar{\mathfrak{D}} \hat{x},   \bar{\mathfrak{D}}  \mathfrak{D}\hat{x})
\ee
The  supersymmetry transformation of the integral evaluates into 
\be
\begin{split}
\delta_\e S &=(\e Q + \bar{\e} \bar{Q}) S =    (\e Q + \bar{\e} \bar{Q}) \int d\theta d\bar{\theta}\;\;  \mathcal{F} (\hat{x},  \mathfrak{D} \hat{x} , \bar{\mathfrak{D}} \hat{x}, \bar{\mathfrak{D}}  \mathfrak{D}\hat{x})\\
&= \int d\theta d\bar{\theta}\;\; ( \e Q + \bar{\e} \bar{Q})\mathcal{F} (\hat{x},  \mathfrak{D} \hat{x} , \bar{\mathfrak{D}} \hat{x},  \bar{\mathfrak{D}}  \mathfrak{D}\hat{x})  \\
&= \int d\theta d\bar{\theta}\;\; ( \e\mathfrak{Q} + \bar{\e} \bar{\mathfrak{Q}}) \mathcal{F} (\hat{x},  \mathfrak{D} \hat{x} , \bar{\mathfrak{D}} \hat{x},  \bar{\mathfrak{D}}  \mathfrak{D}\hat{x})   = \int d\theta d\bar{\theta}\;\; ( \e \p_\theta + \bar{\e} \p_{\bar{\theta}}) \mathcal{F} = 0.
 \end{split}
\ee
An equality in the third line is due to the relations (\ref{eq_gen_superfield_diff_susy}, \ref{eq_der_superfield_diff_susy}) between the source  field action and the supersymmetry transformations for the superfield components.   The last equality holds due to Berezin integration rules.
\end{proof}
\begin{Example} The superspace integral  for the theory of superpotential
\be
\begin{split}
S_W &= \int d\theta d\bar{\theta}\;\;  \left[ \frac14  \mathfrak{D} \hat{x}  \bar{\mathfrak{D}}\hat{x} -H(\hat{x}) \right] = FH'(x) +  \psi \bar{\psi} H''(x)- \frac14  F^2.
\end{split}
\ee
For our analysis is convenient to make a   redefinition $W(x) = H'(x)$, so that the  action   simplifies into 
\be\label{eq_off_shell_susy_action}
S_W (x,F,\psi,\bar{\psi}) = -\frac14  F^2+  FW(x) + \psi  \bar{\psi} W'(x). 
\ee
\end{Example}

\subsection{Off-shell superpotential}\label{sub_sect_offshel_superpotential}
In our discussion of supersymmetric examples let us point out an important feature about the supersymmetry transformations.  The action  (\ref{eq_off_shell_susy_action})
is invariant under the supersymmetry transformations generated by  the vector fields
\be\label{eq_off_shell_susy_transform}
Q = \bar{\psi} \frac{\p}{\p x} - F \frac{\p }{\p \psi},\;\; \bar{Q} = \psi \frac{\p}{\p x} + F \frac{\p }{\p \bar{\psi}}.  
\ee
The algebra of vector fields 
\be\label{eq_on_shell_susy_algebra}
[ Q, \bar{Q}\} =[Q,Q\} = [\bar{Q},\bar{Q}\}= 0.
\ee
Algebra (\ref{eq_on_shell_susy_algebra})  is the  $d=0$ $\mathcal{N}=2$ supersymmetry algebra.

\subsection{On-shell superpotential}\label{sub_sect_onshel_superpotential}

We can   perform Gaussian integral over $F$  for the theory of superpotential (\ref{eq_off_shell_susy_action}), i.e. 
\be
\sqrt{-4\pi \hbar}\;e^{-\frac{1}{\hbar}S_{ind}(x,\psi,\bar{\psi})} = \int dF\; e^{-\frac{1}{\hbar}S_W(x,F,\psi, \bar{\psi})}
\ee 
to obtain another, referred to  as  the {\it induced}  (effective in physics literature) action 
\be\label{eff_action_W}
S_{ind} (x,\psi,\bar{\psi})= W^2(x) + W'(x) \psi\bar{\psi}.
\ee
The Gaussian nature of integration implies that the induced action is the original action, evaluated at the classical solution $F_{cl}$ to the equations of motions
\be\label{eq_F_class}
\frac{\p S}{\p F} = -\frac12 F + W(x) = 0\;\;\;\Longrightarrow\;\; F_{cl} =  2 W(x)
\ee
 for the  field $F$, i.e.
\be
S_{ind} (x,\psi,\bar{\psi}) = S(x,F,\psi, \bar{\psi})\Big|_{F = F_{cl} (x, \psi, \bar{\psi})}.
\ee
The induced action $S_{ind}$ is invariant under the induced supersymmetry transformations, generated by  
\be\label{eq_induced_susy}
 Q_{ind} =  \bar{\psi} \frac{\p}{\p x} - 2W(x) \frac{\p }{\p \psi},\;\; \bar{Q}_{ind}= \psi \frac{\p}{\p x} + 2W(x) \frac{\p }{\p \bar{\psi}} ,
\ee
which means that 
\be
Q_{ind} (S_{ind}) =  \bar{Q}_{ind} (S_{ind}) =0.
\ee
Induced supersymmetry  vector fields  (\ref{eq_induced_susy})   are  supersymmetries   (\ref{eq_off_shell_susy_transform}) evaluated on  the classical solution (\ref{eq_F_class}), i.e.
\be
 Q_{ind}  = Q\Big|_{F = F_{cl} (x, \psi, \bar{\psi})},\;\; \bar{Q}_{ind}  = \bar{Q}\Big|_{F = F_{cl} (x, \psi, \bar{\psi})}.
\ee
Note that the superalgebra of induced symmetries  is different from the superalegbra (\ref{eq_on_shell_susy_algebra})
\be\label{eq_on_shell_susy_algebra_supermult}
\begin{split}
[Q_{ind}, Q_{ind}\} &=-4W'(x) \bar{\psi} \frac{\p}{\p \psi} ,\;\;\;[\bar{Q}_{ind}, \bar{Q}_{ind}\} =4W'(x) \psi \frac{\p}{\p \bar{\psi}}  ,\\
 [ \bar{Q}_{ind},Q_{ind}\} &= 2 W'(x) \left(  \bar{\psi} \frac{\p}{\p \bar{\psi}}-\psi \frac{\p}{\p \psi}\right).
\end{split}
\ee
The supersymmetry algebra (\ref{eq_on_shell_susy_algebra_supermult}) in the physics notations is  $d=0$ $\mathcal{N}=2$ {\it on-shell} supersymmetry algebra. 
The term {\it on-shell} indicates that the nontrivial  commutators of supersymmetries  vanish on the solutions to  the equations of motion for induced  action $S_{ind}$ 
\be
\frac{\p S_{ind}}{ \p\psi} = W'(x) \bar{\psi},\;\;\; \frac{\p S_{ind}}{ \p \bar{\psi}} = -W'(x) \psi.
\ee
Therefore, as long as equations of motions are satisfied, the on-shell supersymmetry algebra (\ref{eq_on_shell_susy_algebra_supermult}) is the same as the off-shell algebra  (\ref{eq_on_shell_susy_algebra}).
The theory with the action (\ref{eff_action_W}) and on-shell supersymmetries (\ref{eq_on_shell_susy_algebra_supermult}) is known as the {\it on-shell theory of superpotential}.

\subsection{Localization} \label{sec_localization}

In this section we will briefly review the key methods for supersymmetric localization, for further details see \cite{pestun2017introduction}.

The quantum properties of a classical physical system are often captured (according to the Feynman's approach) in terms of partition function $Z$ (usually with extra sources) defined by 
\be\label{def_susy_part_fun}
Z = \int_X \mu_X\; e^{-\frac{1}{\hbar} S(\phi)}
\ee
for the integration measure $\mu_X$ on $X$. 

For a physical system with a symmetry, the Lie group  $G$ action on $X$, preserving the action $S$, the partition function integral can be simplified.  
If the action of $G$ is free then the integral reduces to the lower dimensional integral over the space $X/G$ of $G$-orbits on $X$ multiplied by the volume of the group, i.e
\be\label{eq_part_funct_reduction_orbits}
\int_X \mu_X\; e^{-\frac{1}{\hbar} S(\phi)} =  \hbox{Vol}(G) \int_{X/G} \mu_{X/G}\; e^{-\frac{1}{\hbar} S(\phi)},
\ee
The volume of the group $G$   is determined using the Haar measure on $G$. If there are regions where action is not free we need to treat them separately.

The key property of supersymmetric theories is that the contribution from the regions with the free symmetry action vanishes and the partition function equals to the integrals over neighborhoods of the fixed points. Such phenomenon is known as the localization: the supersymmetric partition function localizes to the zeroes of the supersymmetry vector field.

\begin{Proposition}
The integral over superspace region where the action of the supersymmetry  vector field  is free vanishes.
\end{Proposition}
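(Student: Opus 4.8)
The plan is to show that on any region $U$ of superspace where the supersymmetry vector field $Q$ acts freely, we can find adapted coordinates in which $Q$ becomes a single partial derivative $\partial/\partial\lambda$ along an odd direction, and then the Berezin integral over that direction annihilates the integrand. Concretely, freeness of the odd vector field $Q$ on $U$ means $Q$ is nowhere vanishing there, so by a version of the straightening/flow-box theorem adapted to odd vector fields one can choose local coordinates $(y^a,\lambda,\ldots)$ on $U$ with $Q=\partial/\partial\lambda$, where $\lambda$ is parity-odd. Since $Q$ is a symmetry of both the action and the measure, $Q(S)=0$ and the Berezinian $\mu$ is $Q$-invariant, i.e. $\operatorname{div}_\mu Q=0$, so in these coordinates the measure can be taken as $\mu = \rho\, D\lambda\,(\cdots)$ with $\partial\rho/\partial\lambda=0$.

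First I would make precise what ``free action'' means for an odd vector field: $Q$ generates an action of the odd abelian group $\mathbb{R}^{0|1}$ (since $[Q,Q\}=0$ in the cases of interest, e.g. $Q^2=0$ on-shell or the relevant linear combination squares to zero), and freeness is the statement that this action has no fixed points, equivalently $Q|_p\neq 0$ for all $p\in U$. Second, I would invoke (or prove in a line) the odd flow-box lemma: near any point where $Q\neq 0$ there is a coordinate system in which $Q=\partial/\partial\lambda$; this is the super-analogue of the classical rectification theorem and follows by picking a transversal and using the (nilpotent, hence polynomial and exactly integrable) flow of $Q$. Third, I would combine $Q(S)=0$ and $\operatorname{div}_\mu Q = 0$ — the latter being the precise form of ``$Q$ preserves the measure'' — with the formula for the divergence given in \eqref{def_superman_divergence} to conclude that in the adapted chart the density $\rho$ is independent of $\lambda$ and $e^{-S/\hbar}$ is independent of $\lambda$. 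Fourth, writing the integral over $U$ as an iterated integral with the $\lambda$-Berezin integral performed first, the integrand $\rho\, e^{-S/\hbar}$ has no $\lambda$-dependence, so $\int D\lambda\,(\text{$\lambda$-independent}) = 0$, and hence the contribution of $U$ vanishes. A partition-of-unity argument patches the local charts together.

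The main obstacle I anticipate is the odd rectification step: unlike the even case one must be careful that an odd vector field cannot by itself be ``straightened'' using only other odd coordinates — a nonzero odd vector field $Q=q^i\partial_{t^i}+q^\alpha\partial_{\theta^\alpha}$ has odd-valued even components $q^i$ and even-valued odd components $q^\alpha$, so its ``nonvanishing'' must be interpreted in the right sense, and the normal form typically requires pairing $Q$ with an even vector field or using that $Q$ together with the coordinate it differentiates spans a rank $(0|1)$ (or $(1|1)$) distribution. The cleanest route is probably to not rectify $Q$ globally but to argue directly: freeness gives a function $\lambda$ (an ``action'' coordinate along the orbits) with $Q(\lambda)=1$, then extend to a coordinate system, and show $Q=\partial/\partial\lambda$ in the induced frame; the existence of such a $\lambda$ on a free region is exactly where freeness (no zeros, no periodicity issues — automatic here since $\mathbb{R}^{0|1}$ orbits are ``points with an odd direction'') is used. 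I would also need to check the sign bookkeeping in $\operatorname{div}_\mu Q$ from the displayed divergence formula so that $Q$-invariance of $\mu$ genuinely forces $\partial_\lambda\rho=0$ rather than some twisted condition.
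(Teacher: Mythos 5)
Your argument is correct, but it implements the vanishing differently from the paper. The paper argues globally: it factorizes the integral over the free region as $\mathrm{Vol}(G)\cdot\int_{X/G}$ using the orbit-space reduction formula, takes $G=\mathbb{R}^{0|1}$ with translation-invariant Haar measure $d\epsilon$, and concludes from the Berezin rule $\int d\epsilon = 0$ that the volume factor, hence the whole contribution, vanishes. You instead work locally: rectify $Q$ to $\partial/\partial\lambda$ on the free region, use $Q(S)=0$ and $\mathrm{div}_\mu Q=0$ to make the integrand $\lambda$-independent (and indeed, with the paper's divergence formula, $\mathrm{div}_\mu(\partial_\lambda)=\partial_\lambda\ln\rho$, so no twisted condition arises), and then kill it with the single-variable Berezin integral, patching charts by a partition of unity. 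The underlying mechanism is the same one-liner --- the Berezin integral over the odd orbit direction of a $\lambda$-independent density is zero, which is exactly the statement $\mathrm{Vol}(\mathbb{R}^{0|1})=0$ --- but your key lemma is the odd flow-box theorem rather than the quotient-space factorization, and this buys rigor: the paper's reduction to $X/G$ on a supermanifold (choice of measure $\mu_{X/G}$, the product structure along orbits) is precisely what your local rectification makes explicit, at the cost of having to state carefully that "free" means the reduced (non-nilpotent) part of $Q$ is nonvanishing and that rectification needs $[Q,Q\}=0$, i.e.\ that $Q$ genuinely generates an $\mathbb{R}^{0|1}$-action --- assumptions the paper uses implicitly as well. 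Your anticipated obstacle is the genuine one, and the route you sketch (an odd function $\lambda$ with $Q(\lambda)=1$, which exists exactly on the free region, extended to a chart) is the standard and adequate fix; with that in place the proposal is complete and, if anything, sharper than the paper's version.
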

\begin{proof}   In the region with the free action we use the equality (\ref{eq_part_funct_reduction_orbits}) to reduce the original integral to the integral over the space of orbits.   The Haar measure for the group of supersymmetry transformations $G = \mathbb{R}^{0|1}$ is the translation-invariant measure, i.e. $\mu_{G}(\e) = d\e$. The volume of the group with respect to the Haar measure  vanishes due to the Berezin rules for the  integration over odd variables, i.e.
\be
\hbox{vol}(\mathbb{R}^{0|1}) = \int_{\mathbb{R}^{0|1}} \mu_G(\e) = \int_{\mathbb{R}^{0|1} } d\e = 0.
\ee
Since the volume vanishes and the integral over $X/G$ is finite, then  the integral over superspace with the free action of supersymmetry  vanishes and the proof is complete.
\end{proof}

Our arguments on supersymmetric localization can further refined in the form the propositions below.
\begin{Proposition}\label{prop_susy_localization} For a physical system on a supermanifold $X$ with action $S\in C^\infty(X)_{even}$ and measure $\mu_X$, invariant under the divergence-free odd nilpotent  vector field $Q$, i.e. $Q^2=div_{\mu_X} Q=0$,  the deformed partition  function 
\be
Z(t) = \int_X \mu_X\;\; e^{-S(\phi) -tQ(V)} 
\ee
 is independent on $t$ for arbitrary an Grassmann-odd function $V \in C^\infty(X)_{odd}$.
\end{Proposition}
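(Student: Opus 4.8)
The plan is to show $\frac{d}{dt} Z(t) = 0$ by differentiating under the integral sign and recognizing the integrand as a total derivative in the BV/localization sense. First I would compute
\be
\frac{d}{dt} Z(t) = -\int_X \mu_X\;\; Q(V)\; e^{-S(\phi) -tQ(V)}.
\ee
The key observation is that since $Q(S) = 0$ by assumption (invariance of the action) and $Q^2 = 0$, the function $Q\big(V e^{-S - tQ(V)}\big)$ expands via the graded Leibniz rule to exactly $Q(V)\, e^{-S-tQ(V)}$, because $Q$ annihilates the exponent: $Q(-S - tQ(V)) = -Q(S) - t Q^2(V) = 0$. Hence the integrand is $-Q$ applied to the odd function $V e^{-S-tQ(V)}$.

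Next I would invoke the definition of the divergence on a supermanifold, equation (\ref{def_superman_divergence}): for any function $f$, $\int_X \mu_X\; Q(f) = -\int_X \mu_X\; \hbox{div}_{\mu_X}Q \cdot f$. Since $\hbox{div}_{\mu_X} Q = 0$ by hypothesis (the measure is $Q$-invariant), this gives $\int_X \mu_X\; Q(f) = 0$ for all $f$ — in particular for $f = V e^{-S - tQ(V)}$. Therefore $\frac{d}{dt}Z(t) = \int_X \mu_X\; Q\big(V e^{-S-tQ(V)}\big) = 0$, which is the claim. One should note $V$ is Grassmann-odd so $Q(V)$ is even and the exponential is well-defined, and all sign factors in the Leibniz rule work out since $Q$ acting on the even exponent produces no sign subtlety.

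The main obstacle — really the only subtlety — is justifying the formal manipulation at the level of convergence: differentiation under the integral sign, and the validity of (\ref{def_superman_divergence}) applied to a non-polynomial (exponential) integrand, require that the integrals actually converge and that boundary terms at infinity in the body $M$ of $X$ vanish. In the physics setting one typically assumes $S$ (and the deformation $Q(V)$) grow sufficiently fast, or one works formally in perturbation theory; I would state this as a standing assumption. A secondary point worth spelling out is that $Q(V e^{g}) = Q(V) e^g$ when $Q(g) = 0$ holds regardless of the parity of $V$, so the argument is insensitive to the Grassmann grading; the essential input is purely $Q(S) = 0$, $Q^2 = 0$, and $\hbox{div}_{\mu_X} Q = 0$.
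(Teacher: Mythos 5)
Your proof is correct and follows essentially the same route as the paper: differentiate under the integral, use $Q(S)=0$ and $Q^2=0$ to rewrite the integrand as $-Q\bigl(V e^{-S-tQ(V)}\bigr)$, and conclude via the divergence definition (\ref{def_superman_divergence}) with $\hbox{div}_{\mu_X}Q=0$. The only difference is cosmetic — the paper peels off the exponential factors in two steps rather than one — and your remark on convergence/boundary terms is a reasonable standing assumption the paper leaves implicit.
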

 \begin{proof} The $t$-derivative of  the deformed partition function evaluates into 
\be\label{eq_t_deriv_part_function}
\begin{split}
\p_t Z(t) &=  \p_t \int \mu_X\;\; e^{-S(\phi) -tQ(V)}   = -  \int \mu_X\;\; Q(V) e^{-S -tQ(V)}  \\
&= -  \int \mu_X\;\; Q(V e^{-S}) e^{ -tQ(V)} = -  \int \mu_X\;\; Q(V e^{-S} e^{ -tQ(V)}) =  0.
\end{split}
\ee
We used the invariance of action, $Q(S)=0$, in the third equality. For the fourth equality we simplified the vector field action   
\be\nn
Q (e^{-tQ(V)}) = Q \left(1 - t Q(V) + \frac{t^2}2 Q(V) Q(V)  +\ldots \right)=-t Q^2(V) e^{- Q(V)}=0
\ee 
and used the nilpotency condition in the last equality.
In the last equality of (\ref{eq_t_deriv_part_function}) we used the definition of divergence (\ref{def_superman_divergence}) and divergence-free assumption for the odd vector field $Q$ to complete the proof.
\end{proof}

\begin{Corollary} The deformed partition function for $t=0$ matches with the supersymmetric partition function $Z$.  Moreover, we can  take $t\to\infty$ limit without changing the partition function to get an equality
\be
Z =  Z(0) = Z(t) = \lim_{t\to \infty} Z(t) .
\ee
\end{Corollary}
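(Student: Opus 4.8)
The plan is to separate the statement into its two claims. The first claim---that $Z(0)$ equals the supersymmetric partition function $Z$---is immediate: setting $t=0$ in the integrand of $Z(t)$ removes the deformation term, since $e^{-S(\phi)-0\cdot Q(V)}=e^{-S(\phi)}$, so that $Z(0)=\int_X\mu_X\,e^{-S(\phi)}$, which is precisely the definition (\ref{def_susy_part_fun}) of $Z$. No computation is needed beyond unwinding definitions.

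For the second claim I would invoke Proposition \ref{prop_susy_localization} directly. Under its hypotheses ($Q^2=0$, $\hbox{div}_{\mu_X}Q=0$, $Q(S)=0$, and $V\in C^\infty(X)_{odd}$), that proposition gives $\p_t Z(t)=0$ for all $t$, hence $t\mapsto Z(t)$ is a constant function on $\mathbb{R}$. Therefore $Z(t)=Z(0)=Z$ for every finite $t$, and since a constant function has a limit equal to its constant value, $\lim_{t\to\infty}Z(t)=Z$ as well. Chaining these equalities yields $Z=Z(0)=Z(t)=\lim_{t\to\infty}Z(t)$, which is the statement of the Corollary.

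The one point that deserves care---and which I expect to be the main, if mild, obstacle---is treating the $t\to\infty$ limit as a genuine statement about convergent integrals rather than a purely formal manipulation. Proposition \ref{prop_susy_localization} delivers $t$-independence only while $Z(t)$ is an absolutely convergent integral and the integration-by-parts step in its proof is legitimate; for a generic odd $V$ the deformed action $S+tQ(V)$ need not be bounded below along the even directions, so convergence at large $t$ is not automatic. In the localization applications one sidesteps this by choosing the gauge fermion $V$ so that $Q(V)$ is non-negative on the bosonic fibre and grows at infinity, in which case the $t\to\infty$ limit both exists and concentrates the integral near the zero locus of $Q$; I would record this as a remark accompanying the Corollary. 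Modulo that caveat, the Corollary follows at once from Proposition \ref{prop_susy_localization} and the definition (\ref{def_susy_part_fun}).
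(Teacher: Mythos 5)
Your proposal is correct and follows essentially the same route as the paper: the Corollary is an immediate consequence of Proposition \ref{prop_susy_localization} ($\p_t Z(t)=0$ gives constancy in $t$, and $t=0$ recovers the undeformed partition function), which is exactly how the paper treats it. Your convergence caveat for the $t\to\infty$ limit is the same point the paper addresses right after the Corollary by restricting to positive definite $Q(V)$.
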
 
For positive definite  even function $Q(V)$ the limit   $t\to \infty$ simplifies the   integral ito
\be
 \lim_{t\to \infty} \int_X \mu_X\;\; e^{-S -tQ(V)}  =  \int_{X_{Q,V}} \mu_X\;\; e^{-S} 
\ee
The   integration region changes from  $X$  to the neighborhood $X_{Q,V} = \{ p \in X\;|\; Q(V)(p) <\sigma \}$ of size $\sigma>0$. 
Authors in \cite{Schwarz:1995dg} showed that  the  assumptions of the proposition \ref{prop_susy_localization}  are enough  to choose $V$, such that the localization subspace 
$X_{Q,V}$ is the  neighborhood of the zeroes for the odd vector field $Q$.

\begin{Remark} We can generalize the localization construction for the case $Q^2\neq 0$, if we further restrict the choice of  $V$ by an additional  condition  $Q^2 (V) = 0$.

\end{Remark}

\begin{Example}\label{ex_off_shell_deform_superpotential}  The proposition  \ref{prop_susy_localization} implies that the supersymmetric partition function  is invariant under the  certain  deformations.  For an action (\ref{eq_off_shell_susy_action}), invariant under the supersymmetries  (\ref{eq_off_shell_susy_transform}) we can choose an odd function 
\be\label{eq_odd_function_choice_off_super}
V = -H(x)\psi. 
\ee
to generate the deformation 
\be\label{eq_susy_deform_superpotential}
\begin{split}
S_W+ tQ(V) &=-\frac14  F^2+  FW(x) + \psi  \bar{\psi} W'(x) +t (\bar{\psi} \p_x - F \p_\psi) (-H(x) \psi) = S_{W+tH}.  
 \end{split}
\ee
Hence, the   proposition  \ref{prop_susy_localization} implies that the partition functions for two  supersymmetric systems with superpotentials  $W$ and $W+t H$ are  the same.  Indeed, using the change of variables $y = W(x)$, we can verify  that the partition function is independent on the shape of $W(x)$, but sensitive to the  behavior at $\pm \infty$.
\end{Example}

\begin{Example} We can attempt to use a localization for the on-shell system from section \ref{sub_sect_offshel_superpotential} by choosing  $V$ to obey $Q^2 (V )= 0$. 
By construction localization function $V$ is odd hence the most general expression is $V = f(x) \psi + g(x) \bar{\psi}$.
Using supersymmetry vector fields (\ref{eq_induced_susy}) and the on-shell supersymmetry algebra (\ref{eq_on_shell_susy_algebra_supermult}) we evaluate 
\be
\begin{split}
 Q_{ind}^2 V &=  -2W'(x) f(x) \bar{\psi}   = 0\;\; \Longrightarrow f =0\;\; \Longrightarrow Q_{ind} V =- f'(x) \psi \bar{\psi} - 2Wf(x) = 0, \\
  \bar{Q}_{ind}^2 V &=  2W'(x) g(x)\psi  = 0\;\; \Longrightarrow g =0\;\; \Longrightarrow \bar{Q}_{ind} V = g'(x) \psi \bar{\psi} + 2Wg(x) = 0.
 \end{split}
\ee
Hence, the only odd functions $V$,  which additionally satisfy $Q^2(V)= 0$, describe the trivial deformations.
\end{Example}

\subsection{Two deformations of the on-shell theory with superpotential} \label{sec_two_deformations}

In this section we will start with the simple on-shell supersymmetric system and describe the  two families of  deformations, which preserve the same amount of supersymmetry. 

 The starting system is the on-shell superpotential theory (\ref{eff_action_W}) with the  linear superpotential $W(x)=x$. The corresponding action is  quadratic in even and odd fields 
\be\label{eq_gauss_susy_action}
S_0 (x,\psi,\bar{\psi})= x^2 +  \psi\bar{\psi}.
\ee
The action  (\ref{eq_gauss_susy_action}) is   invariant under the  supersymmetry transformations 
generated by the parity odd vector fields 
\be\label{eq_gauss_susy_transform}
 Q  = \bar{\psi} \frac{\p}{\p x} - 2x \frac{\p }{\p \psi},\;\; \bar{Q} = \psi \frac{\p}{\p x} + 2x \frac{\p }{\p \bar{\psi}}.
\ee
The  algebra  of  symmetries is the $d=0\;\;\; \mathcal{N}=2$ on-shell  supersymmetry algebra 
\be
\begin{split}
[Q, Q\} &=-4 \bar{\psi} \frac{\p}{\p \psi},\;\;\; [\bar{Q}, \bar{Q}\} =4 \psi \frac{\p}{\p \bar{\psi}},\\
[ \bar{Q},Q\} &= -2  \left(\psi \frac{\p}{\p \psi} - \bar{\psi} \frac{\p}{\p \bar{\psi}}\right).
\end{split}
\ee
The partition function for the quadratic theory becomes the  Gaussian integral  and is easy to evaluate
\be\label{eq_gauss_part_function}
Z_0 = \frac{1}{\sqrt{\pi}}\int_{\mathbb{R}^{1|2}} dxd\psi d\bar{\psi}\; e^{-S_0} = \frac{1}{\sqrt{\pi}}\int_{\mathbb{R}^{1|2}} dxd\psi d\bar{\psi}\; e^{-x^2} (-\psi\bar{\psi})  = 1.
\ee

Let us consider the two  families of deformations,  parametrized by an even parameter $g$,  of the theory (\ref{eq_gauss_susy_action}), preserving the same amount of  on-shell supersymmetry
\begin{enumerate}

\item  {\bf Superpotential deformation}:  We can deform the quadratic action (\ref{eq_gauss_susy_action}) by deforming   the superpotential with the higher order term while preserving behavior at infinity, i.e. 
\be
W = x + \frac12 g x^3.
\ee  
The action changes into
\be
S_W = W^2(x) + W'(x) \psi\bar{\psi} = x^2 + g x^4 + \frac14 g^2 x^6  + \psi\bar{\psi} \left(1+\frac32 g x^2\right).
\ee
The action $S_W$ is invariant under  the supersymmetry  transformations
\be
 Q_W  = \bar{\psi} \frac{\p}{\p x} - (2x + g x^3) \frac{\p }{\p \psi},\;\; \bar{Q}_W = \psi \frac{\p}{\p x} + (2x+g x^3) \frac{\p }{\p \bar{\psi}}.
\ee
The  algebra  of  symmetries
\be
\begin{split}
[Q_W, Q_W\} &=-2(2+3g x^2) \bar{\psi} \frac{\p}{\p \psi},\;\;\;[\bar{Q}_W, \bar{Q}_W\} =2(2+3g x^2) \psi \frac{\p}{\p \bar{\psi}},\\
[ \bar{Q}_W,Q_W\} &= -(2+3g x^2)  \left(\psi \frac{\p}{\p \psi} - \bar{\psi} \frac{\p}{\p \bar{\psi}}\right).
\end{split}
\ee
is the  $d=0\;\; \mathcal{N}=2$ on-shell  supersymmetry algebra. 
\item {\bf Quadratic deformation}:  The $d=0$ analog of the $T\bar{T}$-deformations  is the modification of the  quadratic action  (\ref{eq_gauss_susy_action})  by  adding the square of the original action i.e.
\be\label{eq_action_quadr_deform}
S_g (x,\psi,\bar{\psi})=S_0 + gS_0^2 =  x^2 +  \psi\bar{\psi} + g ( x^2 +  \psi\bar{\psi})^2 = x^2 +g x^4 + \psi\bar{\psi} (1+2g x^2).
\ee
The action (\ref{eq_action_quadr_deform}) is  invariant under the supersymmetry transformations (\ref{eq_gauss_susy_transform}).
\end{enumerate}

\subsection{Two partition functions}

For the two deformations $S_W$ and $S_g$ we define  the corresponding partition functions
\be
Z_W = \frac{1}{\sqrt{\pi}}\int_{\mathbb{R}^{1|2}} dxd\psi d\bar{\psi}\; e^{-S_W},\;\;\; Z_g = \frac{1}{\sqrt{\pi}}\int_{\mathbb{R}^{1|2}}  dxd\psi d\bar{\psi}\; e^{-S_g}.
\ee
Since both theories are deformation of the quadratic theory we can view both partition functions  as deformations of the Gaussian partition function (\ref{eq_gauss_part_function}). Let us  evaluate them   perturbatively in $g$ as an expansion around the Gaussian case, i.e.
\be\label{eq_part_funct_quadr_deform_exact}
Z_g(g) = \frac{1}{\sqrt{\pi}}\int dxd\psi d\bar{\psi}\; e^{-x^2 -g x^4 - \psi\bar{\psi} (1+2g x^2)}  =Z_0 \cdot\left\<  e^{-g x^4 -2g x^2\psi\bar{\psi}}\right\>
\ee
and 
\be
Z_W(g)=\frac{1}{\sqrt{\pi}}\int dxd\psi d\bar{\psi}\; e^{-x^2 -g x^4 - \frac14 g^2 x^6  - \psi\bar{\psi} \left(1+\frac32 g x^2\right)}  =Z_0\cdot  \left\<  e^{-g x^4  -\frac32 g x^2\psi\bar{\psi}  - \frac14 g^2 x^6 } \right\>.
\ee
We  introduced  the super-Gaussian average 
\be
\<F(x,\psi, \bar{\psi})\>  =  \frac{1}{\sqrt{\pi}}\int dxd\psi d\bar{\psi} \;\;  F(x,\psi, \bar{\psi})\;e^{-x^2  - \psi\bar{\psi}}. 
\ee
The super-Gaussian averages for even monomials are
 \be\label{eq_super_gauss_integrals}
  \begin{split}
 \< x^{2k} (\psi\bar{\psi})^m \>  &=\left(-\frac{\p}{\p b}\right)^m  \left(-\frac{\p}{\p a}\right)^k \frac{1}{\sqrt{\pi}} \int_{\mathbb{R}^{1|2}}  dx d\psi d\bar{\psi}\;\; e^{ - ax^2 -b\psi\bar{\psi} } \Big|_{a=b=1} \\
 & = \left(-\frac{\p}{\p b}\right)^m  \left(-\frac{\p}{\p a}\right)^k \frac{b}{\sqrt{a}} \Big|_{a=b=1} .
  \end{split}
 \ee
Note that the source of   cancelations in supersymmetric partition function is an   extra minus sign for the fermionic ($\psi\bar{\psi}$) loops, i.e.
 \be
 \<x^{2n} \psi \bar{\psi}\> = - \<x^{2n}\>.
 \ee
The corrections at $\cO(g)$ are 
\be
\begin{split}
-\p_g\Big|_{g=0} Z_g &=  \<  x^4\> +2 \<\psi\bar{\psi}  x^2\>  =  \frac{3}{4} - 2 \cdot \frac12 =- \frac14,\\
-\p_g\Big|_{g=0} Z_W &=  \<  x^4\> +\frac32  \<\psi\bar{\psi}  x^2\>=  \frac{3}{4} - \frac32  \cdot \frac12 =0,
\end{split}
\ee
hence  the leading order corrections to the two partition functions
\be\label{eq_quadr_deform_part_fun_pert}
Z_g(g)  = Z_0 \left(1 +\frac14 g+ \cO(g^2)\right),\;\;\; Z_W(g)  = Z_0 \left(1+ \cO(g^2)\right).
\ee 
Note that the deformation of the superpotential does not have linear in $g$ corrections, while the quadratic deformation does.  Both integrals can be evaluated in terms of standard special functions and the results follow the same pattern: 
the superpotential deformation does not have corrections in all orders of $g$, while  the quadratic deformation has corrections at all orders in $g$.

The partition function   for the superpotential deformation  is  simplified if we integrate over the fermions first and then perform a variable change, i.e.
\be
\begin{split}
Z_W &= \frac{1}{\sqrt{\pi}}  \int_{\mathbb{R}} dx \;  \left(1+\frac32 g x^2\right) e^{ -x^2 -g x^4- \frac14 g^2 x^6 } 
=  \frac{1}{\sqrt{\pi}} \int^{+\infty}_{-\infty}dy\; e^{-y^2} =1.
\end{split}
\ee
The partition function  for the quadratic  deformation  is simplified by integrating out the fermions
\be
Z_g = \frac{1}{\sqrt{\pi}}  \int_{\mathbb{R}} dx \;  (1+2gx^2) e^{ -x^2 -g x^4 } = (1-2g \p_a) I(a,b)\Big|_{a=1, b=g}.
\ee
We can use  an integral formula for the quartic integral
\be
I(a,b) =\frac{1}{\sqrt{\pi}}  \int_{\mathbb{R}} dx\; e^{-a x^2 - bx^4} = \frac12 \sqrt{\frac{a}{\pi b}} e^{\frac{a^2}{8b}} K_{\frac14} \left(\frac{a^2}{8b}\right),\;\; a>0,\;\; b>0.
\ee
Large argument expansion for the modified Bessel function 
\be\nn
K_\a(z) = \sqrt{\frac{\pi}{2z}} e^{-z} \left(1 + \frac{4\a^2-1}{8z} + \frac{(4\a^2-1)(4\a^2-9)}{2! (8z)^2} +\ldots\right),\;\; |z|\gg1 
\ee
reproduces the perturbative corrections (\ref{eq_quadr_deform_part_fun_pert}) and proves that there are non-trivial   higher order order corrections in $g$-expansion.

We presented the two  deformations of the quadratic  on-shell supersymmetric action which preserve the on-shell supersymmetry.  Although both deformations are in the same class the corresponding  
partition functions are very different: one is independent on the deformation parameter, while the other has non-trivial dependence, even on the perturbative level.  These two examples will serve a prime motivating towards our introduction of the refined notion of the on-shell supersymmetry in section \ref{sec_refined_bv_susy}. 

\section{BV formalism}\label{sec_bv_formalism}

In this section we briefly review the BV formalism following lecture notes \cite{mnev2019quantum} and specify the formalism  to the symmetries with graded algebra symmetries.

\subsection{Definitions and theorems}

Let $\mathcal{M}$ be a supermanifold  of dimension $(n|m)$ with even coordinates $t^i,\;\; i=1,..,n$ and odd coordinates $\theta^\a,\;\; \a = 1,...,m $. 

\begin{Definition}
An {\it odd-symplectic  structure} $(\mathcal{M}, \omega)$  on $\mathcal{M}$ is a 2-form $\omega$ which is 
\begin{itemize}
\item closed, $d\omega=0$;
\item odd, i.e. in local coordinates $\omega =  \omega_{i \a}(t, \theta) dt^i \wedge d\theta^\a$, for a collection of even functions $\omega_{i \a}$ on $\mathcal{M}$;
\item is non-degenerate, i.e. the matrix of coefficients   $\omega_{i \a}(t, \theta) $ is invertible. 
\end{itemize}
\end{Definition}
The non-degeneracy  of $\omega$ implies that the dimension of $\mathcal{M}$ is $(n|n)$.  There is an analog to the Darboux theorem for the odd symplectic form, i.e. there exist a collection of Darboux charts on $\mathcal{M}$,  such that 
$\omega =  dx^i\wedge d\xi_i $.  Moreover, the Darboux coordinates exist globally, i.e. every odd symplectic manifold $\mathcal{M}$  with body $M$ is symplectomorphic to the $T^\ast[1] M$.   The non-degeneracy of $\omega$ allows to define a map 
\be\label{def_function_vect_field_map}
C^\infty(\mathcal{M}) \to \hbox{Vect} (\mathcal{M}):\; f \mapsto X_f,\;\;\;\; \iota_{X_f}\omega =  df.
\ee
In Darboux coordinates the corresponding vector field takes the form
\be
X_f = (-1)^{|f|} \p_{x^i} f \p_{\xi_i}  +  \p_{\xi_i} f \p_{x^i}.
\ee

\begin{Definition}\label{def_BV_bracket}
The map (\ref{def_function_vect_field_map})  defines the odd analogue of Poisson bracket, known as the BV-bracket 
\be
\{ f, g\} =  (-1)^{|f|} X_f (g). 
\ee 
\end{Definition}
 The BV bracket in Darboux coordinates  evaluates into
\be
\{ f, g\} =  \frac{\p f }{\p x^i}\frac{\p g }{\p \xi_i}+ (-1)^{|f|} \frac{\p f }{\p \xi_i}\frac{\p g }{\p x^i}. 
\ee
\begin{Remark} The commonly  (for example see \cite{mnev2019quantum,Schwarz:1992nx})  used expression for the BV bracket
\be
\{ f, g\} =  f \left(\frac{\overset{\leftarrow}{\p}  }{\p x^i}\frac{\overset{\rightarrow}{\p}  }{\p \xi_i} -  \frac{\overset{\leftarrow}{\p}  }{\p \xi_i}\frac{\overset{\rightarrow}{\p}  }{\p x^i}\right) g
\ee
is given in terms of  the left and right derivatives
\be
\overset{\rightarrow}{\p}_A f = \p_A f,\;\;\;  f   \overset{\leftarrow}{\p}_A  =  (-1)^{|z^A| (|f|+1)} \p_A f,\;\; z^A = (x^i, \xi_i).
\ee
An explicit substitution for the left and right derivatives gives a BV bracket  identical  to our definition \ref{def_BV_bracket}.
\end{Remark}
 \begin{Definition}  For an odd symplectic manifold $(\mathcal{M}, \omega)$ with the Berezinian $\mu$ we introduce the odd second order operator $\Delta_\mu: C^\infty(\mathcal{M}) \to C^\infty(\mathcal{M})$, the {\it Batalin-Vilkovisly (BV) Laplacian}, defined by
 \be
 \Delta_\mu f =\frac12 \hbox{div}_\mu  X_f.
 \ee 
  \end{Definition}

 \begin{Remark}   In symplectic geometry the divergence of the Hamiltonian vector field  with respect to the Liouville measure is always zero. The main reason for that is that the Liouville measure is, up to a numerical coefficient, equal to the 
 $\omega^n$ for a symplectic $2n$-fold $(M, \omega)$. Hence, the divergence of the vector field, expressed as   the Lie derivative of the measure is proportional to the Lie derivative of symplectic form, which is zero for the for the Hamiltonian vector field. In case of the odd symplectic geometry the Berezenian $\mu$ is not  equal to the power of the odd  symplectic form!  Moreover, the   square of the odd symplectic form vanishes. Namely,  in Darboux coordinates  the square evaluates into 
 \be\nn
 \omega^2 = (dx^i \wedge d\xi_i) \wedge  (dx^j \wedge d\xi_j)  =  dx^i \wedge dx^j \wedge  d\xi_i  \wedge d\xi_j =0.
 \ee
 \end{Remark}
 For a Berezinian  $\mu  = \rho (x,\xi)\;  d^n x D^n \xi $ written in Darboux coordinates the BV Laplacian 
 \be\label{eq_BV_lapl_darboux_coordinates_berezinian}
  \Delta_\mu f   = \frac{\p}{\p x^i} \frac{\p}{\p \xi_i} f + \frac12 \{\ln \rho, f\}.
 \ee
The BV Laplacian squares to zero, i.e.  $ \Delta_\mu^2 =0$, if  the Berezinian $\mu = \rho (x,\xi) d^nx D^n\xi$ obeys 
 \be\label{eq_measure_factor_BV_lapl_nilp}
  \frac{\p}{\p x^i} \frac{\p}{\p \xi_i} \sqrt{\rho} = 0.
 \ee
 
  \begin{Definition}
 For $(n|n)$-dimensional odd symplectic manifold $(\mathcal{M}, \omega)$ a Berezinian $\mu$ on $\mathcal{M}$ is called {\it compatible} with $\omega$, if there exists an atlas of Darboux charts, known as the special Darboux charts,  
 $(x^i, \xi_i)$ on $\mathcal{M}$ such that for all charts $\mu = d^nx D^n \xi$. 
 \end{Definition}
 The condition (\ref{eq_measure_factor_BV_lapl_nilp}) is necessary and sufficient for existence of special  Darboux charts.  Hence, the corresponding Berezinians are compatible. In special Darboux charts  the BV Laplacian (\ref{eq_BV_lapl_darboux_coordinates_berezinian}) always squares to zero and  simplifies  to 
 \be
 \Delta_\mu = \frac{\p}{\p x^i} \frac{\p}{\p \xi_i}.
 \ee 

 \begin{Remark}
 In the terminology of Schwarz \cite{Schwarz:1992nx} an odd-symplectic manifold $(\mathcal{M}, \omega, \mu)$ with compatible Berezenian us called an {\it SP-manifold}. The ``P-structure" refers to the odd-symplectic structure, while the 
 ``S-structure" refers to the  compatible Berezinian.
 \end{Remark}
Below we list  several useful identities for BV bracket and BV Laplacian.
\begin{itemize}
\item Graded-symmetry 
\be
\{f,g\} = -  (-1)^{(|f|+1)(|g|+1)} \{ g, f\}.
\ee
In particular,  when $f$ and $g$ are both even function  the BV-bracket is symmetric, i.e.
\be
\{f,g\}=  \{ g, f\}.
\ee
\item Leibniz  identity
\be
\begin{split}
\{ f, gh\} &= \{f, g\}\cdot h + (-1)^{(|f|+1)|g|} g\{ f, h\},\\
\{ fg, h\} &=f \{g, h\} + (-1)^{(|h|+1)|g|} \{ f, h\}\cdot g;
\end{split}
\ee
\item Jacobi identity 
\be\label{eq_bv_brack_jacobi}
\{f,\{g,h\}\}= \{\{f,g\}, h\} + (-1)^{(|f|+1)(|g|+1)} \{ g, \{ f, h\}\};
\ee
\item Quasi-Leibniz identity 
\be
\Delta (fg) = \Delta f \cdot  g +  (-1)^{|f|} f \cdot \Delta g +  (-1)^{|f|} \{ f, g\};
\ee
\item Derivation identity 
\be\label{eq_bv_brack_derivation}
\Delta \{ f, g\} = \{ \Delta f, g\} + (-1)^{|f|+1} \{ f, \Delta g\}.
\ee
\end{itemize}

\begin{Definition} A {\it Lagrangian submanifold} $\mathcal{L}$ of an odd symplectic manifold $(\mathcal{M}, \omega)$ is 
\begin{itemize}
\item isotropic, i.e. $\omega|_{\mathcal{L}} = 0$;
\item not a proper submanifold of another isotropic submanifold of $\mathcal{M}$.
\end{itemize} 
\end{Definition}
The definition implies that a Lagrangian submanifold $\mathcal{L}$ of an odd-symplectic manifold $\mathcal{M}$ of dimension $(n|n)$ has dimension $(k|n-k)$ for some $0\leq k\leq n$.

  For a  $(k|n-k)$-dimensional supermanifold $\mathcal{N}$ there is an  odd symplectic $(n|n)$-dimensional supermanifold $T^\ast[1]\mathcal{N}$. 
 Let $X^a$ be local coordinates on $\mathcal{N}$, while $(X^a, \Xi_a)$ are coordinates on $T^\ast[1] \mathcal{N}$ then the canonical odd symplectic form is 
 \be
 \omega = (-1)^{|X^a|} dX^a \wedge d\Xi_a.
 \ee
An extra sign factor is required by the invariance of the canonical symplectic form with respect to the coordinate transformations on the base supermanifold $\mathcal{N}$. The detailed construction of the odd cotangent bundle for the supermanifold and the  derivation of the canonical symplectic form is presented in \cite{manin2013gauge}.

For an odd  function $\Psi \in C^\infty(\mathcal{N})_{odd}$ on $\mathcal{N}$  we can associate a Lagrangian  submanifold
 \be
 \mathcal{N}_\Psi  = \hbox{graph} (d\Psi) \subset T^\ast[1] \mathcal{N}.
 \ee
The Lagrangian submanifold  $\mathcal{N}_\Psi$  in local coordinates is given by 
 \be
 \Xi_a  = \frac{\p }{\p X^a} \Psi(X).
 \ee
 For a supermanifold $\mathcal{N}$  we have an identity
 \be
 \hbox{Ber} (T^\ast[1] \mathcal{N})|_{\mathcal{N}} = \hbox{Ber} (\mathcal{N})^{\otimes 2},
 \ee
 hence we have a canonical map sending Berezinians $\mu$ on $T^\ast[1] \mathcal{N}$ to Berezinians $\sqrt{\mu|_\mathcal{N}}$ on $\mathcal{N}$. Locally, for  local coordinates $X^a$
 on $\mathcal{N}$, coordinates $(X^a, \Xi_a)$  on $T^\ast[1] \mathcal{N}$ and a Berezinian $\mu = \rho(X, \Xi) d^nX d^n\Xi$ the corresponding Berezinian in $\mathcal{N}$
 is $\sqrt{\mu|_\mathcal{N}} = \sqrt{\rho(X, 0)} d^nX$.
 
 \begin{Definition}\label{def_bv_integral}
 For odd symplectic manifold with compatible Berezenian  $(\mathcal{M}, \omega, \mu)$ a {\it BV-integral} of a function $f \in C^\infty(\mathcal{M})$, such that $\Delta_\mu f = 0$, over Lagrangian submanifold $\mathcal{L}$    is an integral of the form
 \be\label{eq_def_bv_int}
 \int_{\mathcal{L}} f\; \sqrt{\mu|_{\mathcal{L}}}.
 \ee
 \end{Definition}

\begin{Theorem}\label{thm_bv_invariance} (Batalin-Vilkovisky-Schwarz \cite{Schwarz:1992nx})  For an odd-symplectic manifold with compatible Berezinian  $(\mathcal{M}, \omega, \mu)$ and  compact body the following holds:
\begin{enumerate}
\item For any $g \in C^\infty(\mathcal{M})$  and Lagrangian submanifold  $\mathcal{L} \subset \mathcal{M}$ the BV integral 
 \be
 \int_{\mathcal{L}} \Delta_\mu g\; \sqrt{\mu|_{\mathcal{L}}} = 0.
 \ee
 \item For pair of Lagrangian submanifolds $\mathcal{L}$ and $\mathcal{L}'$ with homologous bodies in the body of $\mathcal{M}$ and a function, such that $\Delta_\mu f = 0$ the following holds
  \be
 \int_{\mathcal{L}}  f\; \sqrt{\mu|_{\mathcal{L}}} =   \int_{\mathcal{L}'}  f\; \sqrt{\mu|_{\mathcal{L}'}} = 0.
\ee
\end{enumerate}
\end{Theorem}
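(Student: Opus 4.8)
The plan is to reduce both statements to an elementary computation on gauge-fermion Lagrangians and then globalize with a partition of unity, the homological hypothesis in the second part entering only to connect the two Lagrangians through ``close'' ones. For part~1, use that $\mu$ is compatible with $\omega$ to cover $\mathcal{M}$ by special Darboux charts $(U_\alpha; x^i,\xi_i)$ in which $\mu = d^nx\,D^n\xi$ and $\Delta_\mu = \p_{x^i}\p_{\xi_i}$. By the standard local normal form for Lagrangian submanifolds of an odd-symplectic manifold, we may further arrange, after a linear canonical change of coordinates exchanging some of the $x$'s with the corresponding $\xi$'s (which preserves compatibility of the Berezinian), that $\mathcal{L}\cap U_\alpha$ is the graph $\Xi_a = \p\Psi_\alpha/\p X^a$ of a gauge fermion $\Psi_\alpha$, where $(X^a,\Xi_a)$ are the position/momentum coordinates of that chart and, since $\rho\equiv 1$ there, $\sqrt{\mu|_{\mathcal L}}$ is the coordinate Berezinian $dX$ in the $X$ variables. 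Pick a partition of unity $\{\phi_\alpha\}$ subordinate to $\{U_\alpha\}$ (a compact body gives a finite one) and write $g = \sum_\alpha \phi_\alpha g$; by linearity of $\Delta_\mu$ it suffices to treat one term, so we may assume $g$ is supported in a single chart.

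In that chart,
\[
\int_{\mathcal{L}} \Delta_\mu g \; \sqrt{\mu|_{\mathcal{L}}} \;=\; \int dX\;(\p_{X^a}\p_{\Xi_a} g)\bigl(X, \p\Psi_\alpha/\p X\bigr),
\]
and the graded chain rule rewrites the integrand as $\p_{X^a}\bigl[(\p_{\Xi_a}g)(X,\p\Psi_\alpha/\p X)\bigr]$ up to the term obtained by contracting $\p^2\Psi_\alpha/\p X^a\p X^b$ against $\p_{\Xi_b}\p_{\Xi_a}g$; this correction vanishes because $\p^2\Psi_\alpha/\p X^a\p X^b$ is graded-symmetric in $(a,b)$ while $\p_{\Xi_a}\p_{\Xi_b}g$ is graded-antisymmetric, so the double sum cancels against its relabelling — the same sign bookkeeping that is carried out in \cite{Schwarz:1992nx,mnev2019quantum}. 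Hence the integrand is a total $X$-derivative, and the integral vanishes: over the compact body there is no boundary for the even directions, and Berezin integration annihilates the odd total derivatives automatically.

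For part~2, first take $\mathcal{L}$ and $\mathcal{L}'$ to be graphs of gauge fermions $\Psi_0,\Psi_1$ over a common base inside one special Darboux chart. Interpolate by $\Psi_t = (1-t)\Psi_0 + t\Psi_1$, set $\mathcal{L}_t = \{\Xi = \p\Psi_t/\p X\}$, and differentiate: after integration by parts and discarding the graded-symmetric/antisymmetric cross term exactly as above,
\[
\frac{d}{dt}\int f\bigl(X,\p\Psi_t/\p X\bigr)\,\sqrt{\mu|_{\mathcal{L}_t}} \;=\; -\int (\p_{X^a}\p_{\Xi_a}f)\bigl(X,\p\Psi_t/\p X\bigr)\,\dot\Psi_t \;=\; -\int (\Delta_\mu f)\bigl(X,\p\Psi_t/\p X\bigr)\,\dot\Psi_t \;=\; 0 ,
\]
since $\Delta_\mu f = 0$. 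For the general case, the hypothesis that the bodies of $\mathcal{L}$ and $\mathcal{L}'$ are homologous in the body of $\mathcal{M}$ is what allows one to join $\mathcal{L}$ to $\mathcal{L}'$ by a finite chain $\mathcal{L}=\mathcal{L}^{(0)},\mathcal{L}^{(1)},\dots,\mathcal{L}^{(N)}=\mathcal{L}'$ in which consecutive members lie in a common tubular neighbourhood, hence are simultaneously gauge-fermion graphs over a common base; applying the previous step across each link (localized by a partition of unity) yields $\int_{\mathcal{L}}f\,\sqrt{\mu|_{\mathcal{L}}} = \int_{\mathcal{L}'}f\,\sqrt{\mu|_{\mathcal{L}'}}$, and when $f = \Delta_\mu g$ this common value is $0$ by part~1. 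Equivalently, one can phrase the difference $\int_{\mathcal{L}}f - \int_{\mathcal{L}'}f$ directly as $\int_{\widetilde{\mathcal{L}}}\Delta_\mu(\cdot)$ over an auxiliary Lagrangian built from the interpolating family, which is again $0$ by part~1.

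The differential-geometric identities used ($\Delta_\mu^2=0$, the total-derivative computations) are routine; the genuine content, and the step I expect to be the main obstacle, is the geometric bookkeeping: verifying that every Lagrangian is locally a gauge-fermion graph in a \emph{special} Darboux chart and that the requisite canonical changes of coordinates respect compatibility of $\mu$, and turning the homology of the bodies into an explicit chain of Lagrangians connecting $\mathcal{L}$ to $\mathcal{L}'$ through pairwise-close ones so that the one-step interpolation can be iterated.
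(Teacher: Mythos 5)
The paper itself does not prove this theorem --- it is quoted from Schwarz \cite{Schwarz:1992nx} --- so your proposal can only be judged on its own merits. Part 1 of your argument is essentially the standard one and is sound: by linearity you may localize $g$ with a partition of unity, and in a special Darboux chart adapted to $\mathcal{L}$ (after swapping some $(x^i,\xi_i)$ pairs, which indeed preserves the standard Berezinian) the restriction of $\Delta_\mu g$ to a gauge-fermion graph is a total $X$-derivative, the cross term dying by the graded symmetric/antisymmetric contraction. The points you leave implicit --- that the atlas must be refined so that $\mathcal{L}$ is a graph over each chart, and that $\sqrt{\mu|_{\mathcal{L}}}$ is the coordinate Berezinian $dX$ on the graph (e.g.\ via the symplectomorphism $\Xi\mapsto \Xi-\p\Psi/\p X$ sending the graph to the zero section) --- are fixable bookkeeping.

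Part 2, however, has a genuine gap exactly at the step you flag as ``the main obstacle'' but then treat as routine. Linear interpolation of gauge fermions, iterated over a chain of pairwise-close Lagrangians, can only establish invariance under a \emph{homotopy/isotopy} of Lagrangian submanifolds: any two members of such a chain are graphs over a common base and in particular have diffeomorphic bodies. The hypothesis of the theorem is much weaker --- only that the bodies are \emph{homologous} cycles in the body of $\mathcal{M}$ --- and homologous submanifolds need not be isotopic, nor even diffeomorphic (a sphere and a torus can represent the same class), so no finite chain of common-base graphs connecting $\mathcal{L}$ to $\mathcal{L}'$ need exist. The homology hypothesis therefore cannot be exploited by your interpolation mechanism at all; some additional idea is required. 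In Schwarz's argument the condition $\Delta_\mu f=0$ is used to integrate $f$ over the odd fiber directions of a tubular/conormal model of the Lagrangian, producing a \emph{closed} differential form on the body of $\mathcal{M}$ whose integrals over the two bodies are then compared by ordinary Stokes' theorem --- that is where ``homologous'' enters. Your deformation step is a correct proof of a weaker statement (invariance under deformations of the gauge fermion, which is in fact all that the applications in this paper use), but as written it does not prove part 2 as stated.
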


\begin{Definition} (Khudaverdian \cite{khudaverdian2000semidensities}) For an odd symplectic  $(n|n)$-supermanifold $(\mathcal{M}, \omega)$ the  canonical BV Laplacian   on half-densities  $\Delta:$ Dens$^\frac12 \mathcal{M} \to$ Dens$^\frac12 \mathcal{M}$
in a local Darboux chart with coordinates $x^i, \xi_i$ is given by 
\be
\Delta:\qquad \rho(x,\xi) \;\mathcal{D}^\frac n2 \xi d^\frac n2 x  \mapsto  \left( \sum_i \frac{\p}{\p x^i}\frac{\p}{\p \xi_i} \rho(x,\xi) \right) \;\mathcal{D}^\frac n2 \xi d^\frac n2 x.
\ee
\end{Definition}
A Berezinian $\mu$ on $\mathcal{M}$   defines a map $C^\infty(\mathcal{M}) \to $Dens$^\frac12 \mathcal{M}$, given by the $f \mapsto \sqrt{\mu}\; f$, hence we can define a second order odd operator  $\Delta^H_\mu: C^\infty(\mathcal{M}) \to C^\infty(\mathcal{M})$, such that 
\be
\sqrt{\mu} \;\Delta^H_\mu f =  \Delta (\sqrt{\mu}\; f).
\ee 
The operator $\Delta^H_\mu$ and the BV laplacian $\Delta_\mu$  are related by a constant shift
\be
\sqrt{\mu}\; \Delta^H_\mu f =\sqrt{\mu}\;  \Delta_\mu f + \Delta (\sqrt{\mu}).
\ee
For a pair of odd symplectic manifolds $(\mathcal{M}', \omega)$  and $(\mathcal{M}'', \omega'')$  the Cartesian  product  $\mathcal{M} = \mathcal{M}' \times  \mathcal{M}''$ is an odd symplectic manifold with the symplectic form 
$\omega  = \omega' + \omega''$.  Let $P: \mathcal{M} \to \mathcal{M}'$ be a projection on the first factor. 
\begin{Definition}
For a Lagrangian submanifold $\mathcal{L}'' \subset \mathcal{M}''$ we define {\it the fiber BV integral}
\be\label{def_fiber_bv_int}
P_\ast^{(\mathcal{L}'')} = \int_{\mathcal{L}''}\;\;: \hbox{Dens}^\frac12 \mathcal{M} \to \hbox{Dens}^\frac12 \mathcal{M}'.
\ee
\end{Definition}

\begin{Theorem}\label{thm_bv_stokes} (Stokes theorem for fiber BV integrals)   For an odd-symplectic manifold with compatible Berezinian  $(\mathcal{M}, \omega, \mu)$ and  compact body the following holds:
\begin{itemize} \item The  fiber BV integral (\ref{def_fiber_bv_int}) and the  BV Laplacian on half-densities  obey
\be
P_\ast^{(\mathcal{L}'')} \Delta  = \Delta' P_\ast^{(\mathcal{L}'')}  .
\ee
\item For a pair of  Lagrangian submanifolds  $\mathcal{L}''_0$ and $\mathcal{L}''_1$, homotopic in $\mathcal{M}''$ a half-density $\phi\in$ Dens$^{\frac12}\mathcal{M}$, such that $\Delta \phi =0$
\be
P_\ast^{(\mathcal{L}_0'')} \phi -P_\ast^{(\mathcal{L}_1'')}\phi = \Delta'(\ldots).
\ee
\end{itemize}
\end{Theorem}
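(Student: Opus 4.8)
The plan is to prove both parts of Theorem~\ref{thm_bv_stokes} by reducing the fiber BV integral to the fiberwise application of the ordinary BV Laplacian statements from Theorem~\ref{thm_bv_invariance}, working in special Darboux charts adapted to the product structure. First I would fix notation: write coordinates on $\mathcal{M}'$ as $(x^i, \xi_i)$ and on $\mathcal{M}''$ as $(y^a, \eta_a)$, so that on $\mathcal{M} = \mathcal{M}' \times \mathcal{M}''$ the symplectic form is $\omega = dx^i \wedge d\xi_i + dy^a \wedge d\eta_a$ and the canonical BV Laplacian on half-densities splits as $\Delta = \Delta' + \Delta''$, where $\Delta' = \p_{x^i}\p_{\xi_i}$ acts only on the first set of coordinates and $\Delta'' = \p_{y^a}\p_{\eta_a}$ only on the second. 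A half-density on $\mathcal{M}$ in such a chart is $\phi = \rho(x,\xi,y,\eta)\,\mathcal{D}^{\frac n2}\xi\, d^{\frac n2}x\,\mathcal{D}^{\frac {m}2}\eta\, d^{\frac{m}2}y$; restricting to a Lagrangian $\mathcal{L}'' \subset \mathcal{M}''$ of the form $\eta_a = \p_{y^a}\Psi(y)$ (or more generally a split $\mathcal{L}'' = \{y^b = \text{const for } b \in I\}\times\{\eta_a = \p\Psi/\p y^a \text{ for } a \notin I\}$) and taking the fiber integral $P_\ast^{(\mathcal{L}'')}\phi$ produces a half-density on $\mathcal{M}'$ whose coefficient is $\int_{\mathcal{L}''} \rho|_{\mathcal{L}''}\,\sqrt{\mu''|_{\mathcal{L}''}}$ in the sense of Definition~\ref{def_bv_integral} applied fiberwise over each point of $\mathcal{M}'$.

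For the first bullet, I would compute $P_\ast^{(\mathcal{L}'')}\Delta\phi = P_\ast^{(\mathcal{L}'')}(\Delta'\phi) + P_\ast^{(\mathcal{L}'')}(\Delta''\phi)$. The operator $\Delta'$ differentiates only in $(x,\xi)$, which are untouched by the fiber integration over $\mathcal{L}''$, so $P_\ast^{(\mathcal{L}'')}(\Delta'\phi) = \Delta'\,P_\ast^{(\mathcal{L}'')}\phi$ by simply exchanging the $x$-derivatives with the $\mathcal{L}''$-integral (Fubini, valid since the body of $\mathcal{M}$ is compact so everything is finite). For the remaining term $P_\ast^{(\mathcal{L}'')}(\Delta''\phi)$, I invoke part~1 of Theorem~\ref{thm_bv_invariance} applied to the odd-symplectic manifold $\mathcal{M}''$ \emph{with the point of $\mathcal{M}'$ fixed as a parameter}: $\Delta''\phi$ is, fiberwise, of the form $\Delta_{\mu''} g$ times the square root of the compatible Berezinian, so its BV integral over the Lagrangian $\mathcal{L}''$ vanishes pointwise in $(x,\xi)$. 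Hence $P_\ast^{(\mathcal{L}'')}(\Delta''\phi) = 0$ and we obtain $P_\ast^{(\mathcal{L}'')}\Delta = \Delta' P_\ast^{(\mathcal{L}'')}$.

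For the second bullet, with $\Delta\phi = 0$, I would write $\Delta''\phi = -\Delta'\phi$ and consider the one-parameter family $\mathcal{L}''_s$ interpolating between $\mathcal{L}''_0$ and $\mathcal{L}''_1$ inside $\mathcal{M}''$. The difference $P_\ast^{(\mathcal{L}_0'')}\phi - P_\ast^{(\mathcal{L}_1'')}\phi$ is, fiberwise over $\mathcal{M}'$, exactly the difference of BV integrals of the fixed-$(x,\xi)$ half-density over two homologous Lagrangians in $\mathcal{M}''$; by part~2 of Theorem~\ref{thm_bv_invariance} this difference equals $\Delta''$ of something — more precisely, the standard homotopy argument gives $P_\ast^{(\mathcal{L}_0'')}\phi - P_\ast^{(\mathcal{L}_1'')}\phi = \int_0^1 ds\,\frac{d}{ds}P_\ast^{(\mathcal{L}_s'')}\phi$, and each $\frac{d}{ds}P_\ast^{(\mathcal{L}_s'')}\phi$ is $\Delta''$-exact fiberwise (this is the infinitesimal version, a Cartan-type formula: deforming the Lagrangian by a Hamiltonian flow changes the integral by $\Delta''$ of the contraction with the generating function). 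Since $\Delta''\phi = -\Delta'\phi$, pushing the $\Delta''$ out of the fiber integral and using the first bullet converts the obstruction into a $\Delta'$-exact half-density on $\mathcal{M}'$, giving $P_\ast^{(\mathcal{L}_0'')}\phi - P_\ast^{(\mathcal{L}_1'')}\phi = \Delta'(\ldots)$ as claimed.

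I expect the main obstacle to be the second bullet: making the homotopy argument precise requires a clean infinitesimal formula for how a fiber BV integral changes as the Lagrangian $\mathcal{L}''$ is deformed, and showing that the ``boundary term'' one generates is genuinely $\Delta$-exact (not merely closed) on the total space. This is where one must be careful that the generating function of the deformation, the compatibility of the Berezinian along the family, and the exchange of $\p_s$ with the fiber integral all cooperate; the compactness of the body of $\mathcal{M}$ is what guarantees no contributions escape to infinity. Everything else — the splitting $\Delta = \Delta' + \Delta''$, the Fubini exchange for $\Delta'$, and the pointwise application of Theorem~\ref{thm_bv_invariance} — is essentially bookkeeping once the product Darboux chart is set up.
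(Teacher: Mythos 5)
The paper itself states Theorem \ref{thm_bv_stokes} without proof, as part of its review of the BV formalism following \cite{mnev2019quantum}, so the comparison is with the standard argument in that literature --- which is exactly the route you take. Your treatment of the first bullet is correct and complete in outline: in a product special Darboux chart $\Delta=\Delta'+\Delta''$, the $\Delta'$ part commutes with the fiber integral because it differentiates only the parameters $(x^i,\xi_i)$, and the $\Delta''$ part integrates to zero over $\mathcal{L}''$ fiberwise by part 1 of Theorem \ref{thm_bv_invariance} (with compactness of the body of $\mathcal{M}''$ inherited from that of $\mathcal{M}$).

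The gap is in the second bullet, and it is the point you flag yourself. The ``Cartan-type formula'' as you state it --- that deforming the Lagrangian changes the integral by $\Delta''$ of the contraction with the generating function --- cannot be right as written: if $\frac{d}{ds}P_\ast^{(\mathcal{L}_s'')}\phi$ were the fiber integral of a purely $\Delta''$-exact term, then by part 1 of Theorem \ref{thm_bv_invariance} it would vanish for \emph{every} half-density $\phi$, making the fiber integral independent of the Lagrangian with no hypothesis on $\phi$ at all and rendering the assumption $\Delta\phi=0$ irrelevant; that is false. The correct infinitesimal statement has two terms: if the family $\mathcal{L}_s''$ is generated by odd Hamiltonians (gauge fermions) $\Psi_s$ on $\mathcal{M}''$, then, up to conventional signs,
\be
\frac{d}{ds}\int_{\mathcal{L}_s''}\sigma \;=\; \int_{\mathcal{L}_s''}\Delta''\big(\Psi_s\,\sigma\big)\;-\;\int_{\mathcal{L}_s''}\Psi_s\,\Delta''\sigma,
\ee
applied fiberwise with $\sigma=\phi$ and $(x^i,\xi_i)$ as parameters. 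The first term dies by the Schwarz lemma; the second survives, and only then does your step ``use $\Delta''\phi=-\Delta'\phi$ and pull $\Delta'$ out of the fiber integral'' apply, giving $\frac{d}{ds}P_\ast^{(\mathcal{L}_s'')}\phi=\pm\,\Delta'\!\int_{\mathcal{L}_s''}\Psi_s\,\phi$ and, after integrating over $s\in[0,1]$, the asserted $\Delta'$-exactness with the explicit primitive $\pm\int_0^1 ds\int_{\mathcal{L}_s''}\Psi_s\,\phi$. So the skeleton of your argument is the standard one, but the single nontrivial lemma it rests on is misstated in a form that would prove too much, and is left unproven; establishing this two-term variation formula (in Darboux charts adapted to the family, using the compatibility of $\mu$) is the actual content of the second bullet.
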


\subsection{Master equation}

In this section we will  adopt the BV formalism, invariance   theorem  \ref{thm_bv_invariance} and BV-Stokes theorem \ref{thm_bv_stokes} for mathphysical applications. 

For our applications of BV formalism the odd symplectic manifold $\mathcal{M} = T^\ast[1]X$ is an odd cotangent bundle  for a supermanifold $X$ of dimension $(n|m)$. It is conventional to call even coordinates on $X$ as $x^i$ while the odd ones as $\psi^\a$.
The coordinates on $\mathcal{M}$ include the corresponding antifields: odd  $ x^\ast_i$ and even $\psi^\ast_\a$. 

The canonical  symplectic form  on $\mathcal{M} = T^\ast[1]X$ is 
\be\label{ed_can_sympl_cotang_superm}
\omega =  (-1)^{|\phi^a|} d\phi^a \wedge d\phi^\ast_a  = dx^i \wedge dx_i^\ast - d\psi^\a \wedge  d\psi^\ast_\a.
\ee
Note that for a supermanifold $X$ the symplectic form (\ref{ed_can_sympl_cotang_superm}) is not written in  Darboux coordinates, so we need a coordinate transform 
\be
x^I  = (x^i, \psi^\ast_\a ),\;\;\; \xi_I = ( -\psi_\a, x^\ast_i)
\ee
which flips the sign for odd coordinates $\psi_\a$.  The BV bracket for the symplectic form (\ref{ed_can_sympl_cotang_superm}) in field-antifield coordinates
\be
\{ f, g\} =  (-1)^{|\phi^a|}( \p_{\phi^a} f\; \p_{\phi^\ast_a} g+ \p_{\phi^\ast_a} f\; \p_{\phi^a} g ).
\ee
The BV Laplacian  for the standard Berezinian on $T^\ast[1]X$ becomes 
 \be
 \Delta_\mu = \frac{\p}{\p x^I} \frac{\p}{\p \xi_I} = (-1)^{|\phi_a|}  \frac{\p}{\p \phi^a} \frac{\p}{\p \phi^\ast_a} =  \frac{\p}{\p x^i} \frac{\p}{\p x^\ast_i} -  \frac{\p}{\p \psi^\a} \frac{\p}{\p \psi^\ast_\a}.
 \ee
We introduce the exponential parametrization  with  a parameter 
\be\label{form_exp_param_bv_action}
f = e^{-\frac{1}{\hbar} \mathcal{S}},
\ee
so that the condition $\Delta_\mu  f = 0$ becomes  
\be
\Delta_\mu e^{-\frac{1}{\hbar}\mathcal{S}}  =\frac{1}{\hbar^2}  e^{-\frac{1}{\hbar}\mathcal{S}} \left( \frac12 \{\mathcal{S}, \mathcal{S}\}- \hbar \Delta_\mu \mathcal{S} \right) = 0.
\ee
The leading order in  $\hbar$-expansion is the classical master equation.
\begin{Definition} For  an odd symplectic space  $(\mathcal{M}, \omega)$  an even    function $\mathcal{S} \in C^\infty(\mathcal{M})$, known as the classical  BV action,  is a solution to the classical master equation if 
\be\label{cl_mast_eq}
\{ \mathcal{S}, \mathcal{S}\}  = 0.
\ee
\end{Definition}
\begin{Definition} For  an odd symplectic space with compatible Berezenian  $(\mathcal{M}, \omega, \mu)$ a  formal series of even functions  $\mathcal{S}\in C_{even}^\infty(\mathcal{M})[[\hbar]]$, known as the   quantum (or effective) BV action, is a solution to the     quantum master equation if
\be
\label{quantum_mast_eq}
\frac12  \{\mathcal{S}, \mathcal{S}\} =   \hbar \Delta_\mu \mathcal{S}. 
\ee
\end{Definition}
In terms of the expansion coefficients 
\be\label{BV_action_hbar_expansion}
\mathcal{S} =\mathcal{S}_0+\hbar \mathcal{S}_1+\hbar^2 \mathcal{S}_2+...
\ee
the quantum master equation (\ref{quantum_mast_eq}) becomes a system of equations 
\be
\begin{split}
 \{\mathcal{S}_0, \mathcal{S}_0\}& = 0, \\
  \{\mathcal{S}_0, \mathcal{S}_1\}& = \Delta_\mu \mathcal{S}_0, \\
   \{\mathcal{S}_0, \mathcal{S}_2\} &+ \frac12   \{\mathcal{S}_1, \mathcal{S}_1\}=\Delta_\mu \mathcal{S}_1,\\ 
   &...
 \end{split}
\ee
Note that the leading order term $\mathcal{S}_0$  in the $\hbar$-expansion (\ref{BV_action_hbar_expansion})  solves the classical master equation. The construction of the higher order terms $\mathcal{S}_{k>0}$ for a given $\mathcal{S}_0$ is often called the BV-quantization.

\begin{Remark}
There is a special case when the solution $\mathcal{S}$ to the classical master equation (\ref{cl_mast_eq}) is also a  solution to the quantum master equation (\ref{quantum_mast_eq}), i.e. the BV action $\mathcal{S}$ additionally 
satisfies $\Delta_\mu \mathcal{S}=0$.  All our explicit examples for the quantum master equation solutions are of this form.
\end{Remark}

If the formal series  expansion (\ref{BV_action_hbar_expansion}) has non-zero convergent radius we can use the exponential  function (\ref{form_exp_param_bv_action}) and study the corresponding BV  integrals.
In particular, the fiber   BV integral  (\ref{def_fiber_bv_int})  defines an  induced BV action  via
\be\label{eq_BV_induction}
\mathcal{C}(\hbar)\cdot e^{-\frac{1}{\hbar} \mathcal{S}^{ind} (\hbar)}  \sqrt{\mu'}= \int_{\mathcal{L}''} \sqrt{\mu}\; e^{-\frac{1}{\hbar} \mathcal{S}} .
\ee
The  $\mathcal{C}(\hbar)$ is a function of $\hbar$ only. The BV-Stokes theorem \ref{thm_bv_stokes} implies that if $\mathcal{S}$ solves the quantum master equation so is the induced BV action $\mathcal{S}_{ind}$.

\subsection{BV formulation for  symmetries}

A physical system with a symmetry  gives a natural  solution to the master equation. Below we formalize the notion of the physical system with a symmetry and give a detailed form of the master equation solution.

\begin{Definition}\label{def_phys_syst_off_symm} A classical physical system with  a symmetry  is a collection of data $(X, S, \mathfrak{g}, v_\a)$,  such that 
\begin{itemize}
\item  $X$  is a supermanifold with coordinates (also known as fields) $\phi^a$ ;
\item  $S$ is the (classical) action of the system,  an even function on $X$;
\item $\mathfrak{g}$ is a super Lie algebra with structure constants $ f_{\a\b}^\g$;
\item $v_\a$ is a collection of vector fields on $X$ representing the  $\mathfrak{g}$-action   on $X$, i.e. $v : \mathfrak{g} \to \hbox{Vect}(X)$.   The super commutator of vector fields obeys 
\be\label{def_Lie_alg_action}
[v_\a, v_\beta\} = f_{\a\b}^\g v_\g,\;\; \a = 1,..,\dim \mathfrak{g};
\ee

\item  the action $S$ is invariant under the symmetries, i.e. the following relations hold
\be\label{def_action_invar}
v_\a (S) = v_\a^a \frac{\p}{\p \phi^a} S(\phi) = 0.
\ee
\end{itemize}
\end{Definition}

Each classical physical system with a  symmetry provides  a solution to the classical master equation (\ref{cl_mast_eq}) formalized in proposition below.

\begin{Proposition}\label{prop_cls_master_solution_off_shell} A classical physical system with a symmetry $(X, S, \mathfrak{g}, v_\a)$  defines a function  
\be\label{bv_action_offshel_symm}
\mathcal{S} (\phi, \phi^\ast, c, c^\ast)  = S(\phi) + c^\a  v^a_\a   \phi^\ast_a    - \frac12   (-1)^{|v_\a|(|v_\b|+1)}  c^\a c^\beta  f_{\a\b}^\g c_\g^\ast 
\ee
 on the  odd symplectic space 
\be\label{bv_structure_offshel_symm}
\mathcal{M} = T^\ast[1](X\times \mathfrak{g}[1]),\;\;\; \omega = (-1)^{|\phi^a|} d\phi^a \wedge d\phi_a^\ast+ (-1)^{|c^\a|} dc^\a \wedge dc^\ast_\a,
\ee
which solves the   classical master equation (\ref{cl_mast_eq}).  
\end{Proposition}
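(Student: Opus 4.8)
The plan is to verify the classical master equation $\{\mathcal{S}, \mathcal{S}\} = 0$ by direct computation, organizing the calculation according to the antifield degree. I would write $\mathcal{S} = \mathcal{S}^{(0)} + \mathcal{S}^{(1)} + \mathcal{S}^{(2)}$, where $\mathcal{S}^{(0)} = S(\phi)$ has no antifields, $\mathcal{S}^{(1)} = c^\a v^a_\a \phi^\ast_a$ is linear in $\phi^\ast$, and $\mathcal{S}^{(2)} = -\tfrac12 (-1)^{|v_\a|(|v_\b|+1)} c^\a c^\b f_{\a\b}^\g c^\ast_\g$ is linear in $c^\ast$. The BV bracket pairs $\phi^a$ with $\phi^\ast_a$ and $c^\a$ with $c^\ast_\a$, so it lowers total antifield number by one; expanding $\{\mathcal{S},\mathcal{S}\}$ and collecting terms by the number of surviving antifields gives a finite list of identities to check, one per antifield degree.

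First I would handle the lowest piece: $\{\mathcal{S}^{(0)}, \mathcal{S}^{(0)}\} = 0$ trivially since $S$ depends on no antifields, and the first nontrivial equation is $2\{\mathcal{S}^{(0)}, \mathcal{S}^{(1)}\} = 0$, which is exactly the statement that $\p_{\phi^a} S \cdot c^\a v^a_\a = c^\a v_\a(S) = 0$, i.e. the invariance hypothesis (\ref{def_action_invar}). Next, the terms with one antifield surviving: these come from $2\{\mathcal{S}^{(0)}, \mathcal{S}^{(2)}\} + \{\mathcal{S}^{(1)}, \mathcal{S}^{(1)}\}$; the first vanishes because $\mathcal{S}^{(2)}$ carries no $\phi$-dependence that pairs with $S$ (it only involves $c$'s and $c^\ast$), so what remains is $\{\mathcal{S}^{(1)}, \mathcal{S}^{(1)}\}$, which produces the term $c^\a c^\b v^b_\b \p_{\phi^b}(v^a_\a) \phi^\ast_a$; after antisymmetrizing in $\a\b$ (the $c^\a c^\b$ is graded-antisymmetric) this is precisely $\tfrac12 c^\a c^\b [v_\a, v_\b\}^a \phi^\ast_a$, which by the algebra relation (\ref{def_Lie_alg_action}) equals $\tfrac12 c^\a c^\b f_{\a\b}^\g v^a_\g \phi^\ast_a$. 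This must cancel against the $\phi^\ast$-linear part of $2\{\mathcal{S}^{(1)}, \mathcal{S}^{(2)}\}$, where the $c^\ast$-derivative of $\mathcal{S}^{(2)}$ hits the $c^\g$ in $\mathcal{S}^{(1)}$ — so one checks the structure constants and signs line up. Finally the $c^\ast$-linear terms (with no $\phi^\ast$): these arise from $2\{\mathcal{S}^{(1)},\mathcal{S}^{(2)}\} + \{\mathcal{S}^{(2)},\mathcal{S}^{(2)}\}$ restricted to that sector, and vanishing of this piece is exactly the graded Jacobi identity for $\mathfrak{g}$, i.e. $f_{[\a\b}^\delta f_{\g]\delta}^\epsilon = 0$ with the appropriate signs.

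The main obstacle is bookkeeping of the Koszul signs: the vector fields $v_\a$ carry parity $|v_\a| = |c^\a|$ (since $c^\a$ is the ghost partner), the antifield $\phi^\ast_a$ has parity $|\phi^a|+1$, and the BV bracket itself introduces sign factors depending on $|f|$. I would fix conventions once — using the bracket formula $\{f,g\} = (-1)^{|\phi^a|}(\p_{\phi^a} f\, \p_{\phi^\ast_a} g + \p_{\phi^\ast_a} f\, \p_{\phi^a} g)$ from the excerpt, extended additively over the $(\phi,c)$ pairs — and then carefully track signs through each of the three sectors, paying particular attention to the $(-1)^{|v_\a|(|v_\b|+1)}$ prefactor in $\mathcal{S}^{(2)}$, which is engineered precisely so that the structure-constant term from $\{\mathcal{S}^{(1)},\mathcal{S}^{(1)}\}$ cancels cleanly. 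The algebraic content is entirely contained in (\ref{def_Lie_alg_action}), (\ref{def_action_invar}), and the graded Jacobi identity (\ref{eq_bv_brack_jacobi}) applied to $\mathfrak{g}$; the rest is sign verification, which I would present compactly rather than exhaustively.
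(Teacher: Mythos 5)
Your proposal is correct and follows essentially the same route as the paper's own proof: expand $\{\mathcal{S},\mathcal{S}\}$ by antifield degree, so that the $\phi^\ast$-free piece is the invariance $v_\a(S)=0$, the $\phi^\ast$-linear piece combines $\{\mathcal{S}^{(1)},\mathcal{S}^{(1)}\}$ with $2\{\mathcal{S}^{(1)},\mathcal{S}^{(2)}\}$ into the closure relation (\ref{def_Lie_alg_action}), and the $c^\ast$-linear piece $\{\mathcal{S}^{(2)},\mathcal{S}^{(2)}\}$ vanishes by the super Jacobi identity. One small correction to your sign bookkeeping: the ghost $c^\a$ carries parity \emph{opposite} to $v_\a$ (i.e. $|c^\a|=|v_\a|+1$), chosen precisely so that $c^\a v^a_\a\phi^\ast_a$ is even, not $|c^\a|=|v_\a|$ as you state.
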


\begin{proof}  For a super Lie algebra $\mathfrak{g}$ we  have even and  odd  vector fields $v_\a$, so we  choose the parity of $c^\a$ such that the second term in (\ref{bv_action_offshel_symm}) is even. The Lie algebra bracket ensures that the 
last term is even as well, hence   the whole function  (\ref{bv_action_offshel_symm}) is even. We  check the classical master equation (\ref{cl_mast_eq}) order by order in antifields.
\begin{itemize}
\item Trivial equation:  The action $S = S(\phi)$ does not have antifield dependence, hence
\be\label{eq_cl_bv_trivial}
\{ S(\phi), S(\phi)\} = 0. 
\ee
\item   $\cO(\phi^\ast)^0$: The classical master equation at this order  is equivalent to the invariance of the action, i.e.
\be
\{ S(\phi),   c^\a  v^a_\a   \phi^\ast_a  \} =  -c^\a v_\a (S). 
\ee
\item  $\cO(\phi^\ast)^1$:  The classical master equation at this order  is equivalent to the algebra of symmetries, i.e.
\be
\begin{split}
\{ c^\a  v^a_\a \phi^\ast_a &,   c^\b  v^b_\b \phi^\ast_b\} + 2 \left\{ c^\a v^a_\a \phi^\ast_a , -\frac12  (-1)^{|v_\a|(|v_\b|+1)} c^\a c^\b  f_{\a\b}^\g c_\g^\ast   \right\}  \\
&=- (-1)^{|v_\a|(|v_\b|+1)} c^\a c^\beta( [ v_\a, v_\b\}^b - f_{\a\b}^\g v^b_\g) \phi_b^\ast.
\end{split}
\ee
 \item  $\cO(c^\ast)^1$: The classical master equation at this order  is  
 \be
 \begin{split}
  \{ & c^\a c^\b  c_\g^\ast  (-1)^{|v_\a|(|v_\b|+1)}   f_{\a\b}^\g , c^\d c^\e   (-1)^{|v_\d|(|v_\e|+1)}  f_{\d\e}^\sigma c_\sigma^\ast   \} \\
  &= -2  (-1)^{|v_\a|(|v_\b|+1)} c^\a c^\b c^\g f_{\a\b}^\d (-1)^{(|v_\g|+1)|v_\d|} f_{\d\g}^\sigma c^\ast_\sigma = 0.
  \end{split}
  \ee
The last term vanishes due to the  (super) Jacobi identity for the   the structure constants of the super Lie algebra. An explicit form of the Jacobi identity  can be derived using the graded commutator 
of the three identical odd vector fields $c^\a v_\a$, i.e.
\be
0=[  [  c^\a v_\a, c^\b v_\b\}, c^\g v_\g\} =  (-1)^{|v_\a|(|v_\b|+1)} c^\a c^\b c^\g f_{\a\b}^\d (-1)^{(|v^\g|+1)|v_\d|} f_{\d\g}^\sigma  v_\sigma = 0.
\ee
\end{itemize}
We  showed that the classical master equation holds  at all non-trivial orders in antifields hence the proof is complete.
\end{proof}

For the future discussion it is convenient to introduce a separate notations for the second and third term in the BV action (\ref{bv_action_offshel_symm})
\be\label{def_symmetryt_charge_structure_const}
\mathcal{Q} =   c^\a  v^a_\a \phi^\ast_a ,\;\;\;\; \mathcal{F} = -  \frac12  (-1)^{|v_\a|(|v_\b|+1)}  c^\a c^\beta  f_{\a\b}^\g c_\g^\ast.
\ee
The term $\mathcal{Q}$ is responsible for the BV description of the symmetries, while the term $\mathcal{F}$ is responsible for the structure constants of the super-Lie algebra.

\begin{Remark} The order of multiplication in monomial terms was chosen to minimize the the number of sign factors. The alternative representations of the same expression  are
\be
\mathcal{Q} =   c^\a  v^a_\a \phi^\ast_a  = (-1)^{|\phi^\ast_a|}  \phi^\ast_a  c^\a  v^a_\a = -(-1)^{|\phi^a|}  \phi^\ast_a  c^\a  v^a_\a.
\ee
\end{Remark}
\begin{Remark} An extra sign in front of the structure constant term might seem unusual to the readers familiar with the BV formalism for even symmetries. However, the extra  signs  for the case of all even (and all odd as well) symmetries  simply disappear!
\end{Remark}
\begin{Remark}
The presence of the sign factor in our expression (\ref{def_symmetryt_charge_structure_const}) for  structure constant term can be also justified using  the symmetry considerations. In particular, we can relabel the summation indices $(\a \leftrightarrow \b)$, rearrange the $c^\a$ and interchange the indices in the  structure constants to return the expression to the original form.  
\be
\begin{split}
\mathcal{F} &= -  \frac12  (-1)^{|v_\a|(|v_\b|+1)}  c^\a c^\beta  f_{\a\b}^\g c_\g^\ast  =-  \frac12 (-1)^{|v_\b|(|v_\a|+1)}    c^\b c^\a  f_{\b\a}^\g  c_\g^\ast  \\
  &=-  \frac12(-1)^{|v_\b|(|v_\a|+1)} (-1)^{1+|v_\a||v_\b|}   (-1)^{( |v_\a|+1)( |v_\b|+1)}   c^\a c^\b  f_{\a\b}^\g  c_\g^\ast \\
  &=-  \frac12  (-1)^{|v_\a|(|v_\b|+1)}      c^\a c^\b  f_{\a\b}^\g c_\g^\ast =\mathcal{F}.
  \end{split}
\ee
\end{Remark}

A quantum system with a symmetry requires that the symmetry preserves the partition function (\ref{def_susy_part_fun}).
We can modify the  definition \ref{def_phys_syst_off_symm}   to construct a solution to the  quantum master equation (\ref{cl_mast_eq}).

\begin{Definition} A quantum  system with  a  symmetry $(X, S, \mathfrak{g}, v_\a, \mu_X)$ is  a classical physical system with  a symmetry $(X, S, \mathfrak{g}, v_\a)$  supplemented by a  measure $\mu_X$ on $X$, such that 
\be\label{def_qm_offshel_symm_measure}
\hbox{div}_{\mu_X} v_\a=  - (-1)^{|v^\b|}  f^\beta_{\a \beta }.
\ee
\end{Definition}

\begin{Remark} In the physics literature  a quantum  system with  a  symmetry is usually defined with the  trivial righthandside in  (\ref{def_qm_offshel_symm_measure}), i.e. with the $\mathfrak{g}$-invariant measure  $\mu_X$.
\end{Remark}

\begin{Remark} For purely even algebra ($|v^\a|=0$) and    equality $f^\beta_{\a\beta} = 0$ follows from the {\it unimodularity} of $\mathfrak{g}$ (matrices of adjoint representation are trace-free). The unimodularity follows from the existence of the Haar measure on $G = \exp (\mathfrak{g})$. For a super-Lie algebra the right hand side in (\ref{def_qm_offshel_symm_measure}) is the supertrace   of the adjoint action. 
In particular the supertrace is trivial for the   supersymmetry algebra in $d=0$.  
 \end{Remark}

\begin{Proposition}\label{prop_quant_master_solution_off_shell}  For  a quantum physical system with  a symmetry $(X, S, \mathfrak{g}, v_\a, \mu_X )$  the function (\ref{bv_action_offshel_symm})
is a $\hbar$-independent solution to the  quantum   master equation (\ref{quantum_mast_eq})  on the  odd symplectic space (\ref{bv_structure_offshel_symm})
with Berezinian 
\be
\mu  = \rho(\phi)^2  d^n \phi \; d^n \phi^\ast \; d^m c \; d^m c^\ast
\ee
written in coordinates where the  measure on X is $\mu_X= \rho(\phi) d^n \phi$.
\end{Proposition}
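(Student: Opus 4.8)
The plan is to reduce the quantum master equation to the single condition $\Delta_\mu\mathcal{S}=0$ and then verify this by computing the BV Laplacian summand by summand, with the measure hypothesis (\ref{def_qm_offshel_symm_measure}) supplying exactly the required cancellation. Since Proposition \ref{prop_cls_master_solution_off_shell} already gives $\{\mathcal{S},\mathcal{S}\}=0$, the quantum master equation (\ref{quantum_mast_eq}), i.e. $\frac12\{\mathcal{S},\mathcal{S}\}=\hbar\Delta_\mu\mathcal{S}$, holds for all $\hbar$ precisely when $\Delta_\mu\mathcal{S}=0$, and a solution of this form is automatically $\hbar$-independent. So the entire content of the proposition is the vanishing of $\Delta_\mu\mathcal{S}$ for the action (\ref{bv_action_offshel_symm}) and the stated Berezinian.

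Before computing, I would check that $\mu=\rho(\phi)^2\,d^n\phi\,d^n\phi^\ast\,d^mc\,d^mc^\ast$ is compatible with $\omega$. Passing to the Darboux coordinates of (\ref{bv_structure_offshel_symm}) changes the coordinate Berezinian only by an overall sign, so the density remains $\rho(\phi)^2$ and its square root $\rho(\phi)$ is a function of the fields alone. In every conjugate Darboux pair at most one coordinate is a field, hence the mixed second derivative in (\ref{eq_measure_factor_BV_lapl_nilp}) kills $\rho(\phi)$; therefore $\mu$ is compatible, and by (\ref{eq_BV_lapl_darboux_coordinates_berezinian})
\[
\Delta_\mu f=\Delta_0 f+\{\ln\rho,f\},\qquad \Delta_0=(-1)^{|\phi^a|}\frac{\p}{\p\phi^a}\frac{\p}{\p\phi^\ast_a}+(-1)^{|c^\a|}\frac{\p}{\p c^\a}\frac{\p}{\p c^\ast_\a},
\]
the coefficient of the bracket being $1$ rather than $\frac12$ because the density is the \emph{square} $\rho^2$.

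Next I would evaluate $\Delta_\mu$ on the three summands of (\ref{bv_action_offshel_symm}), in the notation $\mathcal{Q},\mathcal{F}$ of (\ref{def_symmetryt_charge_structure_const}). Since $S(\phi)$ and $\ln\rho$ depend only on the fields, $\Delta_0 S=0$ and $\{\ln\rho,S\}=0$, so $\Delta_\mu S=0$. For $\mathcal{Q}=c^\a v^a_\a\phi^\ast_a$ only the $\phi$--$\phi^\ast$ part of $\Delta_0$ and of $\{\ln\rho,\cdot\}$ contributes (there is no $c^\ast$), and a short computation shows that the $\Delta_0$ piece assembles into the coordinate divergence $\sum_i\p_{x^i}v^i_\a-(-1)^{|v_\a|}\sum_\b\p_{\psi^\b}v^\b_\a$ while $\{\ln\rho,\cdot\}$ supplies the $v_\a(\ln\rho)$ term, so that $\Delta_\mu\mathcal{Q}=\pm\,c^\a\,\hbox{div}_{\mu_X}v_\a$. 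For $\mathcal{F}=-\frac12(-1)^{|v_\a|(|v_\b|+1)}c^\a c^\b f_{\a\b}^\g c_\g^\ast$ only the $c$--$c^\ast$ part of $\Delta_0$ acts (and $\{\ln\rho,\mathcal{F}\}=0$, as $\mathcal{F}$ has no field dependence); differentiating $c^\a c^\b$ produces two terms that coincide after relabelling, giving $\Delta_\mu\mathcal{F}=\mp\,c^\a(-1)^{|v^\b|}f^\b_{\a\b}$ with the sign opposite to the one in $\Delta_\mu\mathcal{Q}$.

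Combining, $\Delta_\mu\mathcal{S}=\Delta_\mu\mathcal{Q}+\Delta_\mu\mathcal{F}=\pm\,c^\a\big(\hbox{div}_{\mu_X}v_\a+(-1)^{|v^\b|}f^\b_{\a\b}\big)$, which vanishes exactly by the hypothesis (\ref{def_qm_offshel_symm_measure}). Hence $\Delta_\mu\mathcal{S}=0$, and together with $\{\mathcal{S},\mathcal{S}\}=0$ from Proposition \ref{prop_cls_master_solution_off_shell} this is an $\hbar$-independent solution of (\ref{quantum_mast_eq}). I expect the only real obstacle to be sign bookkeeping: one must track the signs in $\Delta_0\mathcal{Q}$, in $\Delta_0\mathcal{F}$ and in the divergence formula consistently — in particular verifying that the factor $(-1)^{|v^\b|}$ emerging from $\Delta_0\mathcal{F}$ is precisely the one on the right-hand side of (\ref{def_qm_offshel_symm_measure}), and that the $\Delta_0\mathcal{Q}$ and $\{\ln\rho,\mathcal{Q}\}$ contributions recombine into $\hbox{div}_{\mu_X}v_\a$ without a stray sign. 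Once those match, the cancellation is forced.
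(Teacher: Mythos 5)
Your proposal is correct and follows essentially the same route as the paper: since the classical master equation holds by Proposition \ref{prop_cls_master_solution_off_shell} and $\mathcal{S}$ is $\hbar$-independent, everything reduces to $\Delta_\mu\mathcal{S}=0$, which is checked termwise with $\Delta_\mu S=0$ and the $\mathcal{Q}$- and $\mathcal{F}$-contributions combining into $c^\a\bigl(\hbox{div}_{\mu_X}v_\a+(-1)^{|v^\b|}f^\b_{\a\b}\bigr)$, vanishing by (\ref{def_qm_offshel_symm_measure}). Your extra check that $\rho(\phi)^2$ gives a compatible Berezinian (so the $\{\ln\rho,\cdot\}$ term carries coefficient one) is a small addition the paper leaves implicit, not a different argument.
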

\begin{proof} 
The function  (\ref{bv_action_offshel_symm}) solves the classical master equation and is independent on $\hbar$, hence we need to check that $\Delta_\mu \mathcal{S} = 0$ for all orders in antifields.
\begin{itemize}
\item Trivial equation: $S = S(\phi)$ is independent on antifields, hence  
\be
\Delta_\mu (S)= 0 .
\ee
\item The BV Laplacian for the  remaining terms in   (\ref{bv_action_offshel_symm})  is expressed via   divergences of the symmetry vector fields  and the super trace of the structure constants, i.e.
\be
\begin{split}
\Delta_\mu & \left(c^\a  v^a_\a   \phi^\ast_a  - \frac12 (-1)^{|v_\a|(|v_\b|+1)} c^\a c^\b   f_{\a\b}^\g  c_\g^\ast   \right) \\
&\qquad\qquad=-c^\a  \left(\hbox{div}_{\mu_X} v_\a + (-1)^{|v^\a|} c^\b   f_{\b\a}^\a\right)=0.
\end{split}
\ee
\end{itemize}
\end{proof}

\begin{Remark}
For the classical master equation solution $\mathcal{S}_0$ the non-trivial $\Delta_\mu \mathcal{S}_0$  implies that the symmetry in quantum theory requires a modification of the integration measure $\mu_X$.   In particular if $\mathcal{S}_0 +\hbar \mathcal{S}_1$  is a  solution to the quantum master equation, i.e. 
\be\label{eq_1_loop_master_correction}
\{ \mathcal{S}_0, \mathcal{S}_1\} = \Delta_\mu \mathcal{S}_0,\;\;\; \Delta_\mu \mathcal{S}_1=0,
\ee
 the $\mathcal{S}_1$ modifies the integration measure by the multiplicative factor
\be
\mu_X  \to \mu_X \cdot e^{- \mathcal{S}_1}.
\ee
In general there is an obstruction to the existence of a solution to the (\ref{eq_1_loop_master_correction}), known as the quantum anomaly in physical literature.
\end{Remark}

\subsection{BV formulation for on-shell symmetries}

The classical physical system  on-shell symmetry has modified version of the Lie algebra action (\ref{def_Lie_alg_action}) in definition  \ref{def_phys_syst_off_symm}. 
The BV formulation for the  on-shell  symmetries was studied in case of gauge symmetries  \cite{henneaux1992quantization}, and supersymmetric theories in 4d  \cite{Baulieu:1990uv}. The  key feature of the corresponding BV actions is 
the presence of additional  quadratic in antifields. In the next several subsections we will briefly review the reason why such terms appear and argue that the systems with on-shell symmetries could have arbitrary high order terms in antifields.

Let us give a formal definition for the classical physical system with on-shell symmetry.
\begin{Definition}\label{Def_cl_syst_on_shell_symm} A classical physical system with  on-shell symmetry is a collection of data  $(X, S, \mathfrak{g}, v_\a)$, such that 
\begin{itemize}

\item  $X$  is a supermanifold with coordinates (also known as fields) $\phi^a$ ;
\item  $S$ is the (classical) action of the system,  an even function on $X$;
\item $\mathfrak{g}$ is a super Lie algebra with structure constants $ f_{\a\b}^\g$;
\item $v_\a$ is a collection of vector fields on $X$ representing the  $\mathfrak{g}$-action   on $X$, i.e. $v : \mathfrak{g} \to \hbox{Vect}(X)$.   The algebra  of vector fields obeys 
\be\label{def_symm_action_onshell}
\left([v_\a, v_\beta\}  - f_{\a\b}^\g v_\g \right)\Big|_{X_{crit}} = 0;\;\;\;\; \a = 1,..,\dim \mathfrak{g},
\ee
for  critical set $X_{crit} = \{ x \in X\;|\; dS(x)=0\}$;

\item  the action $S$ is invariant under the symmetries, i.e. the following relations hold
\be
v_\a (S) = v_\a^a \frac{\p}{\p \phi^a} S(\phi) = 0.
\ee
\end{itemize}
\end{Definition}

\begin{Proposition}\label{prop_on_shell_bv} For a classical physical system with  an  on-shell symmetry $(X, S, \mathfrak{g}, v_\a)$  the function (\ref{bv_action_offshel_symm}) is a solution to the classical master equation (\ref{cl_mast_eq}) up to linear  order in antifields on the  odd symplectic space (\ref{bv_structure_offshel_symm}).
\end{Proposition}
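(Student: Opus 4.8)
The plan is to revisit the order-by-order analysis in the proof of Proposition \ref{prop_cls_master_solution_off_shell}, since the BV action \eqref{bv_action_offshel_symm} and odd symplectic space \eqref{bv_structure_offshel_symm} are literally the same objects; only the algebra axiom has weakened from \eqref{def_Lie_alg_action} to its on-shell version \eqref{def_symm_action_onshell}. So the strategy is to check which of the four graded pieces of the master equation $\{\mathcal S,\mathcal S\}=0$ survive, and which one is spoiled, at orders $\cO(\phi^\ast)^0$ and $\cO(\phi^\ast)^1$ in antifields.

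First I would recall the decomposition of the master equation by antifield degree, writing $\mathcal S = S + \mathcal Q + \mathcal F$ with $\mathcal Q, \mathcal F$ as in \eqref{def_symmetryt_charge_structure_const}. At the trivial order $\{S,S\}=0$ holds identically since $S=S(\phi)$ carries no antifields, exactly as in \eqref{eq_cl_bv_trivial}. At order $\cO(\phi^\ast)^0$ the relevant term is $\{S,\mathcal Q\} = -c^\a v_\a(S)$, which vanishes by the invariance of the action, an axiom retained verbatim in Definition \ref{Def_cl_syst_on_shell_symm}. Thus the master equation holds through linear order in antifields, which is precisely the claim. I would then point out, to make the ``up to linear order'' sharp, that the first potentially nonzero contribution is at order $\cO(\phi^\ast)^1$: the combination $\{\mathcal Q,\mathcal Q\} + 2\{\mathcal Q,\mathcal F\}$ reduces, as computed in Proposition \ref{prop_cls_master_solution_off_shell}, to $-(-1)^{|v_\a|(|v_\b|+1)} c^\a c^\b\big([v_\a,v_\b\}^b - f_{\a\b}^\g v_\g^b\big)\phi^\ast_b$, and by the on-shell condition \eqref{def_symm_action_onshell} the bracket $[v_\a,v_\b\} - f_{\a\b}^\g v_\g$ vanishes only on $X_{crit}$, not identically, so this term is generically nonzero off-shell.

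Concretely I would phrase the proof as: expand $\{\mathcal S,\mathcal S\}$ in antifield degree using the Leibniz rule for the BV bracket; observe that each term of $\{\mathcal S,\mathcal S\}$ of antifield-degree $0$ in $\phi^\ast$ (equivalently, the part linear in antifields overall, counting the $c^\ast$ and $\phi^\ast$ gradings together with the classical piece) comes only from $\{S,S\}$ and $2\{S,\mathcal Q\}$; and both vanish — the former trivially, the latter by $v_\a(S)=0$. Hence $\{\mathcal S,\mathcal S\} = \cO(\phi^\ast)^{\geq 1}$, i.e. the classical master equation holds modulo terms of at least quadratic order in antifields (one factor $\phi^\ast$ from $\mathcal Q$ paired appropriately). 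This is all that ``solution up to linear order in antifields'' means.

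The only mild subtlety — and the place where one must be slightly careful rather than where any real difficulty lies — is bookkeeping the antifield grading: the term $\mathcal F$ is already linear in $c^\ast$, so $\{\mathcal Q,\mathcal F\}$ and $\{\mathcal F,\mathcal F\}$ contribute at antifield orders that must be correctly identified as ``beyond linear'' for the statement to be exactly ``up to linear order.'' I would handle this by simply noting that $\{S,\mathcal F\}=0$ automatically ($S$ has no $c,c^\ast$ dependence and no $\phi^\ast$), so $\mathcal F$ cannot contribute at the lowest two orders at all, and the linear-order statement is governed entirely by $\{S,S\}$ and $\{S,\mathcal Q\}$, both already shown to vanish in Proposition \ref{prop_cls_master_solution_off_shell}. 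No new computation is needed beyond restricting the earlier one.
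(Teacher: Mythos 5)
Your proposal is in substance the paper's own proof: the BV action and odd symplectic space are unchanged from the off-shell case, the antifield-degree-zero part of $\{\mathcal{S},\mathcal{S}\}$ consists only of $\{S,S\}$, $2\{S,\mathcal{Q}\}$ and the identically vanishing $\{S,\mathcal{F}\}$, all of which vanish because the invariance axiom $v_\a(S)=0$ is retained in Definition \ref{Def_cl_syst_on_shell_symm}, and the obstruction is $\{\mathcal{Q},\mathcal{Q}\}+2\{\mathcal{Q},\mathcal{F}\}\propto c^\a c^\b\left([v_\a,v_\b\}^b-f_{\a\b}^\g v_\g^b\right)\phi^\ast_b$, which the on-shell condition (\ref{def_symm_action_onshell}) kills only on $X_{crit}$. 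The paper's proof says exactly this and nothing more, so no new computation is needed, as you note.

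The one thing to fix is your bookkeeping of what ``up to linear order'' means. The obstruction you correctly exhibit is itself \emph{linear} in $\phi^\ast$ (and the other linear-in-antifields piece, $\{\mathcal{F},\mathcal{F}\}$, which is linear in $c^\ast$, still vanishes by the super-Jacobi identity), so $\{\mathcal{S},\mathcal{S}\}=\cO(\phi^\ast)^{\geq 1}$ should be read as ``the master equation holds modulo terms linear in antifields,'' not ``modulo terms of at least quadratic order,'' and your sentence that ``the master equation holds through linear order'' contradicts your own observation that the $\cO(\phi^\ast)^1$ term is generically nonzero. In the paper's convention (compare Proposition \ref{prop_on_shell_bv_leading}, where ``up to quadratic order'' is established by verifying the linear-order equation), ``solution up to order $k$'' means the equations of antifield order below $k$ hold; here only the order-zero equations hold, which is precisely the claim. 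With those glosses corrected, your argument coincides with the paper's.
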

\begin{proof} The only difference between the on-shell and off-shell symmetry is the  more relaxing condition (\ref{def_symm_action_onshell}) on the commutator of vector fields in definition comparing to the requirement (\ref{def_Lie_alg_action})  in definition \ref{def_phys_syst_off_symm}. Hence,   a classical physical  system with  on-shell  symmetry gives only an approximate solution   to the classical master equation (\ref{cl_mast_eq}). In particular, the BV action (\ref{bv_action_offshel_symm})  fails to be a solution due to the $\cO(\phi^\ast)$ terms, i.e.
\be
\begin{split}
\{\mathcal{Q}, \mathcal{Q} \} + 2 \left\{ \mathcal{Q} , \mathcal{F} \right\}   =- (-1)^{|v_\a|(|v_\b|+1)} c^\a c^\beta( [ v_\a, v_\b\}^b - f_{\a\b}^\g v^b_\g) \phi_b^\ast \neq 0.
\end{split}
\ee
\end{proof}

We  can modify the  the BV action (\ref{bv_action_offshel_symm}) by adding a quadratic term $\Pi^{(2)}$ in antifields
\be
\mathcal{S} (\phi, \phi^\ast, c, c^\ast)  = S(\phi) +    c^\a   v^a_\a  \phi^\ast_a  - \frac12   (-1)^{|v_\a|(|v_\b|+1)} c^\a c^\beta  f_{\a\b}^\g   c_\g^\ast + \Pi^{(2)}.
\ee
An extra term modifies the  classical master equation at  $\cO(\phi^\ast)$ or higher. The $\cO(\phi^\ast)$ equation becomes 
\be
\{\mathcal{Q}, \mathcal{Q} \}+ 2 \left\{\mathcal{Q} , \mathcal{F}\right\} +2 \{S(\phi),\Pi^{(2)}\}  =0.
\ee
For a given on-shell symmetries we can view the equation above as an equation for the quadratic term  $\Pi^{(2)}$, i.e.
\be\label{ed_pi_2_cl_master}
 \{S(\phi),\Pi^{(2)}\}  =  \mathcal{R} \equiv -\frac12 \{\mathcal{Q}, \mathcal{Q} \}-  \left\{\mathcal{Q} , \mathcal{F}\right\}.
\ee
The BV bracket with the classical action $S(\phi)$ is a first order differential operator, while the Jacobi identity (\ref{eq_bv_brack_jacobi}) and the trivial master equation (\ref{eq_cl_bv_trivial}) implies that it is a differential, i.e.  
\be\label{def_D_S_differential}
D_S =  \{S,\cdot\}  = (-1)^{|\phi^a|} \frac{\p S}{\p \phi^a} \frac{\p}{\p \phi^\ast_a},\;\;\; D_S^2 = 0.
\ee
Hence the equation (\ref{ed_pi_2_cl_master}) becomes a standard problem of the form
\be\label{equ_pi_2}
D_S\Pi^{(2)} =   \mathcal{R}.
\ee

\begin{Lemma}\label{lemma_pi_2_exist} For  classical physical system with on-shell symmetry the  equation (\ref{equ_pi_2}) has  a solution.
\end{Lemma}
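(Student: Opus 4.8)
The plan is to solve $D_S\Pi^{(2)}=\mathcal{R}$ by first showing $\mathcal{R}$ is $D_S$-closed and then exhibiting an explicit primitive, using the fact that the obstruction to on-shell closure of the symmetry algebra is proportional to the equations of motion. First I would observe that $\mathcal{R}$, as defined in \eqref{ed_pi_2_cl_master}, is quadratic in the ghosts $c^\a$ and linear in the antifields $\phi^\ast_a$, so write $\mathcal{R} = \tfrac12 (-1)^{\ast} c^\a c^\b\, E_{\a\b}^{\ a}\, \phi^\ast_a$ where $E_{\a\b}^{\ a} = [v_\a,v_\b\}^a - f_{\a\b}^\g v_\g^a$. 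The on-shell closure condition \eqref{def_symm_action_onshell} says $E_{\a\b}^{\ a}$ vanishes on $X_{crit}=\{dS=0\}$; since $X_{crit}$ is cut out by the equations $\p_a S = 0$, on a reasonable (e.g. smooth, or formal-neighbourhood) $X$ the Nullstellensatz/Hadamard-type argument gives functions $T_{\a\b}^{\ ab}$ on $X$ with $E_{\a\b}^{\ a} = T_{\a\b}^{\ ab}\,\p_b S$, with appropriate graded (anti)symmetry in $a,b$ inherited from that of $E$.

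Next I would use this factorization to write down the candidate $\Pi^{(2)}$. Since $D_S = \{S,\cdot\} = (-1)^{|\phi^a|}(\p_a S)\,\p/\p\phi^\ast_a$ acts by replacing one antifield $\phi^\ast_a$ by $(-1)^{|\phi^a|}\p_a S$, a term quadratic in antifields of the schematic form $\Pi^{(2)} \sim c^\a c^\b\, T_{\a\b}^{\ ab}\,\phi^\ast_a\phi^\ast_b$ is mapped by $D_S$ back to something $\sim c^\a c^\b\, T_{\a\b}^{\ ab}\,(\p_b S)\,\phi^\ast_a = c^\a c^\b\, E_{\a\b}^{\ a}\phi^\ast_a$, i.e.\ exactly $\mathcal{R}$ up to fixing signs and combinatorial coefficients. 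So the construction is: set $\Pi^{(2)}$ to be this explicit ghost-number-two, antifield-degree-two monomial built from $T_{\a\b}^{\ ab}$, and then verify by a direct (if bookkeeping-heavy) computation with the graded Leibniz rule for $D_S$ that $D_S\Pi^{(2)}=\mathcal{R}$, adjusting the numerical prefactor and the placement of sign factors $(-1)^{|v_\a|(|v_\b|+1)}$ etc.\ as needed.

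Before writing the primitive it is cleanest to record that $\mathcal{R}$ is $D_S$-closed: $D_S\mathcal{R} = D_S(-\tfrac12\{\mathcal{Q},\mathcal{Q}\} - \{\mathcal{Q},\mathcal{F}\}) = \{S, -\tfrac12\{\mathcal{Q},\mathcal{Q}\}-\{\mathcal{Q},\mathcal{F}\}\}$, and expanding with the Jacobi identity \eqref{eq_bv_brack_jacobi} together with $\{S,\mathcal{Q}\} = -c^\a v_\a(S) = 0$ (action invariance) and $\{S,\mathcal{F}\}=0$ ($\mathcal{F}$ is antifield-independent in $\phi^\ast$ but $\{S,\cdot\}$ only differentiates $\phi^\ast_a$, so this vanishes) shows $D_S\mathcal{R}=0$. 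This closedness is the consistency check that makes the factorization $E = T\,\p S$ compatible with solving the equation; it is not strictly needed once one has the explicit $\Pi^{(2)}$, but it motivates why a solution exists.

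The main obstacle is not conceptual but the sign/index bookkeeping: getting the graded symmetry properties of $T_{\a\b}^{\ ab}$ right (it should be graded-antisymmetric under simultaneously swapping $\a\leftrightarrow\b$ and $a\leftrightarrow b$, matching $E_{\a\b}^{\ a}$ being graded-antisymmetric in $\a,\b$ in the appropriate sense), and then checking that $D_S$ applied to the quadratic monomial reproduces $\mathcal{R}$ with the correct coefficient $-\tfrac12$ and sign factors. A secondary subtlety is the regularity hypothesis needed to pass from "$E$ vanishes on $X_{crit}$" to "$E = T\,\p S$" — for the zero-dimensional examples of interest $S$ is a polynomial and $X_{crit}$ is a complete intersection of its partial derivatives, so this is unproblematic, but in the general statement one should either assume such regularity or work in the formal neighbourhood of $X_{crit}$; I would state this assumption explicitly.
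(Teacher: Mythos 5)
Your argument is correct in substance but takes a different route from the paper. The paper proves existence abstractly: it checks that $\mathcal{R}$ is $D_S$-closed (same Jacobi-identity computation you give) and then argues that its $D_S$-cohomology class is computed by restriction to $X_{crit}$, where it vanishes by the on-shell condition, so $\mathcal{R}$ is exact and a primitive $\Pi^{(2)}$ exists — no explicit formula is produced at this stage. You instead front-load the work into a factorization $[v_\a,v_\b\}^a-f^\g_{\a\b}v^a_\g=T^{ab}_{\a\b}\,\p_b S$ and then write the primitive explicitly as the quadratic monomial built from $T$; this is essentially establishing the refined on-shell condition (\ref{eq_susy_onshell_refined}) and re-deriving Proposition \ref{prop_on_shell_bv_leading}, which the paper deliberately keeps separate from the existence lemma (it even remarks that finding such a refinement is as hard as constructing a homotopy for $D_S$). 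What your route buys is constructiveness and a clear statement of where the analytic input sits; what the paper's route buys is that it never needs an explicit $T$, only the triviality of the class. Note, however, that the two proofs hinge on the same nontrivial point, just packaged differently: passing from pointwise vanishing of the defect on $X_{crit}$ to membership in the Koszul image is exactly your Hadamard/Nullstellensatz step, and in the smooth category it genuinely requires a regularity or transversality hypothesis (e.g.\ it fails for $S=x^4$, where $x$ vanishes on $X_{crit}$ but is not a $C^\infty$ multiple of $\p_xS$); the paper buries the same assumption in the claim that the cohomology class "remains the same" upon restriction to $X_{crit}$, so your explicit flagging of the hypothesis is a point in your favour rather than a defect relative to the paper.
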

\begin{proof}
There are two requirements for the solution existence 
\begin{enumerate}
\item The function $\mathcal{R}$ is $D_S$-closed;
\item The function $\mathcal{R}$ belongs to the trivial class in  $D_S$-cohomology.
\end{enumerate}
The first requirement follows from the Jacobi identity (\ref{eq_bv_brack_jacobi})  for the BV bracket. In particular 
\be
\begin{split}
D_S\mathcal{R} &= \{ S, \mathcal{R}\} =-\frac12 \{ S,  \{\mathcal{Q}, \mathcal{Q} \}\}-  \{ S, \left\{\mathcal{Q} , \mathcal{F}\right\} \} \\
& =\{ \mathcal{Q},  \{S, \mathcal{Q} \}\}  +  \{ \mathcal{Q},  \left\{S, \mathcal{F}\right\} \} +  \{ \mathcal{F},  \left\{S, \mathcal{Q}\right\} \}=0.
\end{split}
\ee 
The last equality is due to the master equation at order $\cO(\phi^\ast)^0$ for the BV action $S + \mathcal{Q}+\mathcal{F}$.

We will provide a detailed discussion of the $D_S$-cohomology in section \ref{sec_d_s_cohomology}, while here we just use the fact that the  $D_S$-cohomology class  remains the same if we restrict the cycle to the critical set $X_{crit}$.
The restriction of  $\mathcal{R}$ to the critical set $X_{crit}$ is trivial, i.e. 
\be
\mathcal{R}\Big|_{X_{crit}} =\frac12  (-1)^{|v_\a|(|v_\b|+1)} c^\a c^\beta( [ v_\a, v_\b\}^b - f_{\a\b}^\g v^b_\g) \phi_b^\ast   \Big|_{X_{crit}} = 0.
\ee
The last equality follows from  the definition \ref{Def_cl_syst_on_shell_symm} of the classical physical system with on-shell symmetry. This completes the proof of the lemma.
\end{proof}
Although  lemma \ref{lemma_pi_2_exist} ensures that the classical physical system with on-shell symmetry defines  a solution to  the classical master equation up to second order in antifields, it does not give us an explicit form for the $\Pi^{(2)}$-term. 
Below we  formulate a particular case when an explicit form of $\Pi^{(2)}$ is constructible.

\begin{Proposition}\label{prop_on_shell_bv_leading} For a classical physical system with  an  on-shell symmetry $(X, S, \mathfrak{g}, v_\a)$   and the refined   on-shell  symmetry algebra in the form 
\be\label{eq_susy_onshell_refined}
[v_\a, v_\beta\}^a  - f_{\a\b}^\g v_\g^a   =(-1)^{|\phi^b|} \Pi_{\a\b}^{ba} \p_b S 
\ee
there is a solution to the classical master equation (\ref{cl_mast_eq}) up to quadratic order in antifields on the  odd symplectic space (\ref{bv_structure_offshel_symm})    in the form 
\be
\mathcal{S}   = S +  c^\a   v^a_\a    \phi^\ast_a - \frac12   (-1)^{|v_\a|(|v_\b|+1)}  c^\a c^\beta  f_{\a\b}^\g  c_\g^\ast+ \frac14  (-1)^{|v_\a|(|v_\b|+1) }  \phi^\ast_a \phi^\ast_b c^\a c^\b \Pi_{\a\b}^{ab}.
\ee 
\end{Proposition}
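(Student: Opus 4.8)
The plan is to verify the classical master equation order by order in antifields, exactly as in the proof of Proposition \ref{prop_cls_master_solution_off_shell}, but now keeping track of the extra quadratic term $\Pi^{(4)} = \frac14 (-1)^{|v_\a|(|v_\b|+1)} \phi^\ast_a \phi^\ast_b c^\a c^\b \Pi_{\a\b}^{ab}$. Write $\mathcal{S} = S + \mathcal{Q} + \mathcal{F} + \Pi^{(4)}$ with $\mathcal{Q}, \mathcal{F}$ as in (\ref{def_symmetryt_charge_structure_const}). The orders $\cO(\phi^\ast)^0$ (invariance of $S$) and $\cO(c^\ast)^1$ (Jacobi identity) are unaffected by $\Pi^{(4)}$ since that term carries two $\phi^\ast$'s and is $c^\ast$-independent; they go through verbatim from the earlier proposition. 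So the only new content is at order $\cO(\phi^\ast)^1$, where by Lemma \ref{lemma_pi_2_exist} and equation (\ref{equ_pi_2}) we must check that $D_S \Pi^{(4)} = \mathcal{R} = -\frac12\{\mathcal{Q},\mathcal{Q}\} - \{\mathcal{Q},\mathcal{F}\}$, and the orders $\cO(\phi^\ast)^2$ and higher, which are genuinely new because $\{S,\Pi^{(4)}\}$ lives at order $\cO(\phi^\ast)^1$ but $\{\mathcal{Q},\Pi^{(4)}\}$ and $\{\Pi^{(4)},\Pi^{(4)}\}$ push to higher order and must be shown to vanish (or the proposition's ``up to quadratic order'' clause absorbs them).

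First I would compute $D_S\Pi^{(4)} = \{S,\Pi^{(4)}\}$. Using (\ref{def_D_S_differential}), $D_S$ hits only one of the two $\phi^\ast_a\phi^\ast_b$ factors, producing $\frac{\p S}{\p \phi^b}$ and leaving a single antifield: schematically $D_S\Pi^{(4)} = \frac12 (-1)^{|v_\a|(|v_\b|+1)} (-1)^{|\phi^b|}\, \phi^\ast_a c^\a c^\b\, \Pi_{\a\b}^{ab}\, \p_b S$ (the factor of $2$ from the two identical $\phi^\ast$ factors cancels the $\frac14$ to a $\frac12$, modulo a sign I would track carefully). Meanwhile $\mathcal{R} = -\frac12\{\mathcal{Q},\mathcal{Q}\} - \{\mathcal{Q},\mathcal{F}\}$ was already computed in Proposition \ref{prop_on_shell_bv} to equal $\frac12 (-1)^{|v_\a|(|v_\b|+1)} c^\a c^\b ([v_\a,v_\b\}^b - f_{\a\b}^\g v_\g^b)\phi^\ast_b$. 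The hypothesis (\ref{eq_susy_onshell_refined}) says precisely $[v_\a,v_\b\}^a - f_{\a\b}^\g v_\g^a = (-1)^{|\phi^b|}\Pi_{\a\b}^{ba}\p_b S$, so substituting this into $\mathcal{R}$ and relabeling indices should match $D_S\Pi^{(4)}$ on the nose; the normalization $\frac14$ in $\Pi^{(4)}$ is chosen exactly to make this work. The main care here is the sign bookkeeping: one must commute $\phi^\ast$ past $c^\a c^\b$, use that $\Pi_{\a\b}^{ab}$ has a definite symmetry under $(\a a)\leftrightarrow(\b b)$ forced by the bracket $[v_\a,v_\b\}$ being graded-antisymmetric, and confirm the parity assignments make $\Pi^{(4)}$ even.

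Next I would address the higher orders in antifields. The terms $\{\mathcal{Q},\Pi^{(4)}\}$, $\{\mathcal{F},\Pi^{(4)}\}$ and $\frac12\{\Pi^{(4)},\Pi^{(4)}\}$ all contribute at $\cO(\phi^\ast)^2$ or higher; since the proposition only claims a solution \emph{up to quadratic order in antifields}, the honest statement is that these are the obstruction to going further and need not vanish. I would simply remark that the $\cO(\phi^\ast)^{\le 1}$ part of the master equation is satisfied by the construction above, which is exactly the assertion, and note in passing that $\{\mathcal{Q},\mathcal{Q}\}$, $\{\mathcal{Q},\mathcal{F}\}$, $\{\mathcal{F},\mathcal{F}\}$ between the original three terms contribute only at orders $\le \cO(\phi^\ast)^1$ and $\cO(c^\ast)^1$, already handled. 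The main obstacle is purely the sign/parity accounting in matching $D_S\Pi^{(4)}$ with $\mathcal{R}$: one must be scrupulous with the graded Leibniz rule for $D_S$ acting on a product of two odd objects $\phi^\ast_a\phi^\ast_b$, with the $(-1)^{|\phi^b|}$ appearing in $D_S$ itself (see (\ref{def_D_S_differential})), and with the sign in the refined algebra (\ref{eq_susy_onshell_refined}). Everything else is a direct transcription of the earlier propositions, and I would present it as such, displaying only the $\cO(\phi^\ast)^1$ computation in detail and citing Proposition \ref{prop_cls_master_solution_off_shell} and Lemma \ref{lemma_pi_2_exist} for the rest.
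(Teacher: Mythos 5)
Your proposal is correct and follows essentially the same route as the paper: the only nontrivial check is the $\cO(\phi^\ast)^1$ part of the master equation, where $2\{S,\Pi^{(2)}\}$ (your $\Pi^{(4)}$, with the factor of $2$ from the two antifields turning the $\tfrac14$ into $\tfrac12$) cancels $\{\mathcal{Q},\mathcal{Q}\}+2\{\mathcal{Q},\mathcal{F}\}$ by the refined relation (\ref{eq_susy_onshell_refined}), while the orders $\cO(\phi^\ast)^0$ and $\cO(c^\ast)^1$ carry over from Proposition \ref{prop_cls_master_solution_off_shell} and the $\cO(\phi^\ast)^{\geq 2}$ terms are excluded by the ``up to quadratic order'' clause. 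The only cosmetic remark is that the paper denotes the quadratic-in-antifields term $\Pi^{(2)}$ (grading by antifield degree), not $\Pi^{(4)}$.
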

\begin{proof} We just need to  check the classical master equation (\ref{cl_mast_eq}) at linear order in antifields 
\be
\begin{split}
\{\mathcal{Q}, \mathcal{Q} \} &+ 2 \left\{\mathcal{Q} , \mathcal{F}\right\}  +2 \{S(\phi),\Pi^{(2)}\}  \\
&= -   (-1)^{(|v_\b|+1) |v_\a|} c^\a c^\beta( [ v_\a, v_\b\}^a -f_{\a\b}^\g v_\g^a- (-1)^{|\phi^b|}\Pi_{\a\b}^{ba} \p_b S) \phi_a^\ast= 0.
\end{split}
\ee
The last equality  follows from refined on shell supersymmetry relation (\ref{eq_susy_onshell_refined})  and hence completes the proof of the proposition.
\end{proof}
The readers might  be surprised comparing an existence lemma \ref{lemma_pi_2_exist} and proposition \ref{prop_on_shell_bv_leading} with an explicit quadratic term.  The problem of finding the refinement (\ref{eq_susy_onshell_refined}) for the on-shell  symmetry algebra is as hard as  the construction of homotopy for the differential $D_S$  to express a  solution for $\Pi^{(2)}$.
However, for some  physical applications, supersymmetric systems in particular, there are additional arguments that allow for an  easier construction of the on-shell symmetry refinement.

In example \ref{ex_polyvectors_functions} we saw that the functions on $T^\ast[1]X$ can be identified with the polyvector fields on $X$.
The $T^\ast[1]X$ comes with the canonical odd symplectic form and the    BV-bracket  on functions is mapped to  the Nijenhuis-Schouten bracket between the corresponding polyvectors.   The Nijenhuis-Schouten bracket is the Lie bracket on vector fields extended to the polyvector fields via Leibniz identity.  Hence, following the literature, we will often call  $ \Pi_{\a\b}^{ab}$ and $\Pi^{(2)}$ the BV bivectors, while the higher order expressions in antifields the  BV polyvectors.

\subsection{BV action with polyvectors}\label{sec_bv_action_with_polyevect}

An additional quadratic term in antifields in the BV action leads to the  non-trivial classical master equation at $\cO(\phi^\ast)^2$ and $\cO(\phi^\ast)^3$ orders. Similarly, the quantum master equation will acquire an  additional order in antifields. 
For future applications it is convenient to introduce the following definition.

\begin{Definition} \label{def_quant_suystem_bivector_prop}  A  quantum physical system with  on-shell symmetry of bivector type  $(X, S, \mathfrak{g}, v_\a, \Pi_{\a\b}^{ab}, \mu_X)$  is a classical physical system with  on-shell symmetry $(X, S, \mathfrak{g}, v_\a)$,   an integration measure $\mu_X$ and a refined on-shell condition (\ref{eq_susy_onshell_refined}) for a  bivector  $\Pi_{\a\b}^{ab}$, such that 
\begin{itemize}
\item  The bivector  transforms covariantly under the symmetry action,  what can be expressed using the graded  Nijenhuis-Schouten bracket introduced in \cite{de1997z}, i.e.
\be\label{def_pi_2_algebra_inv}
[ c^\a c^\b \Pi_{\a\b}, c^\g v_{\g} \}^{ab} + (-1)^{|v_\sigma|}c^\a c^\b c^\g  f_{\a\b}^\sigma \Pi^{ab}_{\g\sigma}= 0;
\ee
\item The Nijenhuis-Schouten bracket  of bivector with itself is trivial, i.e. 
\be\label{def_pi_2_bracket_trivial}
 [c^\a c^\b \Pi_{\a\b},  c^\g c^\d\Pi_{\g\d}\} = 0;
\ee
\item The integration measure transforms covariantly, i.e.
\be
\hbox{div}_{\mu_X} v_\a=  -  (-1)^{|v^\b|}   f^\beta_{\beta \a};
\ee
\item The  bivector is divergence-free with respect to the integration measure $\mu_X = \rho(\phi)\; d^n\phi$, i.e. 
\be\label{def_divergence_bivector}
(-1)^{ |v_\a|(|v_\b|+1)}   (-1)^{ |\phi^b||\phi^a|} ( \p_a +\p_a \ln \rho)  c^\a c^\b  \Pi^{ab}_{\a\b} =0.
\ee
\end{itemize} 
\end{Definition}

\begin{Proposition} For  a quantum physical system with on-shell symmetry of a bivector type $(X, S, \mathfrak{g}, v_\a, \Pi_{\a\b}^{ab}, \mu_X)$  the function
\be\label{eq_quant_bv_action_bivector_refined}
\mathcal{S} =S +   c^\a   v^a_\a   \phi^\ast_a - \frac12   (-1)^{|v_\a|(|v_\b|+1)} c^\a c^\beta  f_{\a\b}^\g  c_\g^\ast  + \frac14  (-1)^{|v_\a|(|v_\b|+1) }  \phi^\ast_a \phi^\ast_b c^\a c^\b \Pi_{\a\b}^{ab} 
\ee 
is the $\hbar$-independent  solution to the  quantum   master equation (\ref{quantum_mast_eq})  on the  odd symplectic space (\ref{bv_structure_offshel_symm})
with Berezinian 
\be
\mu  = \rho(\phi)^2  d^n \phi \; d^n \phi^\ast \; d^m c \; d^m c^\ast
\ee
written in coordinates where  measure $\mu_X= \rho(\phi) d^n \phi$.
\end{Proposition}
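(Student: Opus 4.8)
The plan is to verify the quantum master equation $\tfrac12\{\mathcal{S},\mathcal{S}\} = \hbar\,\Delta_\mu\mathcal{S}$ for the action (\ref{eq_quant_bv_action_bivector_refined}) by splitting it, as in the proofs of Propositions \ref{prop_cls_master_solution_off_shell} and \ref{prop_quant_master_solution_off_shell}, into the classical part $\{\mathcal{S},\mathcal{S}\}=0$ and the quantum part $\Delta_\mu\mathcal{S}=0$. Since $\mathcal{S}$ is $\hbar$-independent, these two equations together are equivalent to the quantum master equation. First I would organize both computations order by order in antifields, writing $\mathcal{S} = S + \mathcal{Q} + \mathcal{F} + \Pi^{(2)}$ with $\Pi^{(2)} = \tfrac14 (-1)^{|v_\a|(|v_\b|+1)}\phi^\ast_a\phi^\ast_b c^\a c^\b\Pi_{\a\b}^{ab}$, and track which bracket pairings contribute at each antifield degree.

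For the classical master equation, the orders $\cO(\phi^\ast)^0$ and $\cO(c^\ast)^1$ are unchanged from Proposition \ref{prop_cls_master_solution_off_shell} (action invariance and the super-Jacobi identity), and the $\cO(\phi^\ast)^1$ order is exactly what Proposition \ref{prop_on_shell_bv_leading} already establishes using the refined on-shell relation (\ref{eq_susy_onshell_refined}). What is genuinely new are the higher orders: at $\cO(\phi^\ast)^2$ (mixing $c^\ast$ with two $\phi^\ast$'s) the equation $\{\mathcal{F},\Pi^{(2)}\} + \{\mathcal{Q},\Pi^{(2)}\}_{\text{the }c^\ast\text{ part}} = 0$ should reduce to the covariance condition (\ref{def_pi_2_algebra_inv}) for the bivector under the symmetry action, and at $\cO(\phi^\ast)^3$ the equation $\tfrac12\{\Pi^{(2)},\Pi^{(2)}\}=0$ is precisely the vanishing of the Nijenhuis--Schouten self-bracket (\ref{def_pi_2_bracket_trivial}). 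So I would match each new order against the three algebraic hypotheses in Definition \ref{def_quant_suystem_bivector_prop} and confirm the signs using the polyvector dictionary from Example \ref{ex_polyvectors_functions}.

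For the quantum part $\Delta_\mu\mathcal{S}=0$, the contributions from $S$, $\mathcal{Q}$, and $\mathcal{F}$ are handled exactly as in Proposition \ref{prop_quant_master_solution_off_shell}, producing $-c^\a(\mathrm{div}_{\mu_X}v_\a + (-1)^{|v^\a|}c^\b f^\a_{\b\a})$, which vanishes by the measure-covariance condition. The new piece is $\Delta_\mu\Pi^{(2)}$; using $\Delta_\mu = (-1)^{|\phi^a|}\p_{\phi^a}\p_{\phi^\ast_a}$ on the Berezinian $\mu = \rho(\phi)^2 d^n\phi\, d^n\phi^\ast\, d^mc\, d^mc^\ast$ and applying the expression (\ref{eq_BV_lapl_darboux_coordinates_berezinian}) for $\Delta_\mu$ in terms of a possibly nontrivial density, the action of $\Delta_\mu$ on the quadratic-in-$\phi^\ast$ term produces exactly the combination $(\p_a + \p_a\ln\rho)(c^\a c^\b\Pi^{ab}_{\a\b})$ with the right sign prefactor, which is set to zero by the divergence-free condition (\ref{def_divergence_bivector}). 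I would also check the cross term where $\Delta_\mu$ differentiates one $\phi^\ast$ in $\Pi^{(2)}$ against a $\phi^a$ elsewhere — but $\Pi^{(2)}$ has no explicit $\phi^a$ outside of $\Pi_{\a\b}^{ab}(\phi)$ and $\rho(\phi)$, so this collapses to the same condition.

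The main obstacle I anticipate is bookkeeping of sign factors: the graded structure constants, the parity assignments of the ghosts $c^\a$ (chosen so $\mathcal{Q}$ is even), and the symmetrization sign $(-1)^{|v_\a|(|v_\b|+1)}$ in $\Pi^{(2)}$ all interact, and at $\cO(\phi^\ast)^2$ one must carefully commute $\phi^\ast_a$ past $c^\b$ and past the derivative $\p/\p\phi^c$ hidden in the Nijenhuis--Schouten bracket. I would manage this by adopting the same index-ordering conventions used in (\ref{def_symmetryt_charge_structure_const}) and the remarks following it, and by checking the all-even and all-odd special cases (where most signs disappear) as a consistency test before asserting the general-parity result. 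The cohomological content is already guaranteed by Lemma \ref{lemma_pi_2_exist}; here the work is purely verifying that the explicitly written $\Pi^{(2)}$, together with the four hypotheses of Definition \ref{def_quant_suystem_bivector_prop}, closes the master equation at every order.
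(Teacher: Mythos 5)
Your proposal is correct and takes essentially the same approach as the paper: the lower antifield orders are delegated to Propositions \ref{prop_cls_master_solution_off_shell}, \ref{prop_quant_master_solution_off_shell} and \ref{prop_on_shell_bv_leading}, the new orders $\cO(\phi^\ast)^2$ and $\cO(\phi^\ast)^3$ of the classical master equation are matched to the covariance condition (\ref{def_pi_2_algebra_inv}) and the vanishing Nijenhuis--Schouten self-bracket (\ref{def_pi_2_bracket_trivial}), and $\Delta_\mu\Pi^{(2)}=0$ is matched to the divergence-free condition (\ref{def_divergence_bivector}). This is exactly the structure of the paper's proof.
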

\begin{proof} The proposition \ref{prop_on_shell_bv_leading} ensures that the function (\ref{eq_quant_bv_action_bivector_refined}) solves the classical master equation up to quadratic order in antifields. Hence, we need to check the classical master equation  at $\cO(\phi^\ast)^2$ and $\cO(\phi^\ast)^3$ orders. The function (\ref{eq_quant_bv_action_bivector_refined}) is independent of $\hbar$, hence the quantum master equation requires to check $\Delta_\mu \mathcal{S} =0$ at linear order in antifields.
\begin{itemize}
\item $\cO(\phi^\ast)^2$: The classical master equation at this order  is equivalent to the  covariant transformation of the   bivector  invariant under the symmetry action, i.e.
\be\nn
\begin{split}
\{ &\Pi^{(2)}, \mathcal{Q}\} + \{ \Pi^{(2)}, \mathcal{F}\}\\
&=-\frac14(-1)^{|\phi^a|+|\phi^b|+ |v_\a|(|v_\b|+1)}   \left( [ c^\a c^\b \Pi_{\a\b}, c^\g v_{\g} \}^{ab} + (-1)^{|v_\sigma|}c^\a c^\b c^\g  f_{\a\b}^\sigma \Pi^{ab}_{\g\sigma}\right)\phi^\ast_a\phi^\ast_b =0;
\end{split}
\ee
\item $\cO(\phi^\ast)^3$:  The classical master equation at this order  is equivalent to the  trivial bracket of bivector with itself, i.e.
\be\nn
\begin{split}
\{ \Pi^{(2)}, \Pi^{(2)}\}  =\frac18  (-1)^{|v_\a|(|v_\b|+1)+ |v_\g|(|v_\d|+1)+|\phi^a|+|\phi^b|+|\phi^c|}  [ c^\a c^\b \Pi_{\a\b}, c^\g c^\d \Pi_{\g\d}\}^{abc} \phi^\ast_a\phi^\ast_b \phi^\ast_c= 0;
\end{split}
\ee
\item  The quantum master   equation at $\cO(\phi^\ast)$ order
\be\nn
\Delta_\mu \Pi^{(2)}   = 2 (-1)^{ |v_\a|(|v_\b|+1)}   (-1)^{ |\phi^b||\phi^a|} ( \p_a +\p_a \ln \rho)  c^\a c^\b  \Pi^{ab}_{\a\b}  \phi^\ast_b   =0.
\ee
\end{itemize} 
\end{proof}

For physical system with  on-shell symmetry $(X, S, \mathfrak{g}, v_\a)$ the corresponding   bivector $ \Pi_{\a\b}^{ab}$ might not satisfy the conditions (\ref{def_pi_2_algebra_inv}) and (\ref{def_pi_2_bracket_trivial}),  so we do not have an on-shell symmetry of bivector type. But  it might be possible to add higher order terms in antifields to the BV action to construct a solution to the classical master equation, i.e.
\be\nn
\mathcal{S} = S +   c^\a   v^a_\a   \phi^\ast_a  - \frac12   (-1)^{|v_\a|(|v_\b|+1)}  c^\a c^\beta  f_{\a\b}^\g c_\g^\ast  + \frac14  (-1)^{ |v_\a|(|v_\b|+1)}  \phi^\ast_a \phi^\ast_b c^\a c^\b \Pi_{\a\b}^{ab}  + \Pi^{(3)} + ...
\ee
The master equation at $\cO(\phi^\ast)^2$ order  becomes 
\be\label{eq_master_quadratic_order}
\{ \Pi^{(2)}, \mathcal{Q}+\mathcal{F}\} + \{ \Pi^{(3)}, S\} = 0.
\ee
There is a possible obstruction towards existence of $\Pi^{(3)}$, which can be formulated  in terms of differential (\ref{def_D_S_differential}).
\begin{Lemma} An expression  $\{ \Pi^{(2)}, \mathcal{Q}+ \mathcal{F}\}$ is $D_S$-closed  for a classical physical system with on-shell symmetry and any consistent bivector. 
\end{Lemma}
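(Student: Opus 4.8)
The plan is to deduce the statement entirely from the graded Jacobi identity (\ref{eq_bv_brack_jacobi}) together with the lower-order master equations already in hand, in the same spirit as the proof of Lemma \ref{lemma_pi_2_exist}. Abbreviate $N=\mathcal{Q}+\mathcal{F}$; since $S,\mathcal{Q},\mathcal{F},\Pi^{(2)},N$ are all even, every BV bracket appearing below is odd and the signs in (\ref{eq_bv_brack_jacobi}) and in graded symmetry collapse uniformly. First I would record three inputs. (a) $\{S,N\}=0$: indeed $\{S,\mathcal{Q}\}=-c^\alpha v_\alpha(S)=0$ is the invariance of the action, which is part of Definition \ref{Def_cl_syst_on_shell_symm} (the $\cO(\phi^\ast)^0$ part of the master equation), and $\{S,\mathcal{F}\}=0$ because $\mathcal{F}$ depends only on the ghosts $c,c^\ast$ while $S$ depends only on $\phi$. (b) $\{N,N\}=\{\mathcal{Q},\mathcal{Q}\}+2\{\mathcal{Q},\mathcal{F}\}+\{\mathcal{F},\mathcal{F}\}=-2\mathcal{R}$, using graded symmetry of the bracket on even functions and $\{\mathcal{F},\mathcal{F}\}=0$, the latter being the super-Jacobi identity for the structure constants already verified at order $\cO(c^\ast)$ in the proof of Proposition \ref{prop_cls_master_solution_off_shell}. (c) $\{S,\Pi^{(2)}\}=\mathcal{R}$, which is the defining equation (\ref{equ_pi_2}) for a consistent bivector.

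Then I would compute $D_S\{\Pi^{(2)},N\}=\{S,\{\Pi^{(2)},N\}\}$ by one application of the Jacobi identity (\ref{eq_bv_brack_jacobi}) with $f=S$, $g=\Pi^{(2)}$, $h=N$, obtaining
\be\nn
\{S,\{\Pi^{(2)},N\}\} \;=\; \{\{S,\Pi^{(2)}\},N\}\,-\,\{\Pi^{(2)},\{S,N\}\}\;=\;\{\mathcal{R},N\},
\ee
where the second term vanished by (a) and the first was rewritten with (c). Substituting $\mathcal{R}=-\frac12\{N,N\}$ from (b), this equals $-\frac12\{\{N,N\},N\}$, which vanishes by the purely formal identity $\{\{N,N\},N\}=0$ valid for \emph{any} even function $N$: applying Jacobi and graded symmetry to the triple $(N,N,N)$ yields $3\{N,\{N,N\}\}=0$, hence the bracket is zero. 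The point is that this step does not require $\{N,N\}$ itself to vanish — precisely where the on-shell case differs from the off-shell one — so the argument goes through unchanged. As an internal cross-check one may instead finish via $\{\mathcal{R},N\}=-\{S,\mathcal{R}\}=-D_S\mathcal{R}=0$, using $D_S\mathcal{R}=0$ (established inside the proof of Lemma \ref{lemma_pi_2_exist}) together with $\{S+N,\{S+N,S+N\}\}=0$.

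I do not anticipate a genuine obstacle: all ingredients are already available and the computation is short. The only thing that needs mild care is the parity bookkeeping in the two uses of the graded Jacobi identity, and the legitimacy of dividing by $3$ in the "$3x=0$" step, which is harmless over $\mathbb{R}$. It is also worth remarking that the covariance and self-bracket conditions (\ref{def_pi_2_algebra_inv})--(\ref{def_pi_2_bracket_trivial}) play no role here; only $\{S,\Pi^{(2)}\}=\mathcal{R}$ is used, which is why the lemma holds for any consistent bivector.
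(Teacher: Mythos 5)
Your proposal is correct and follows essentially the same route as the paper: one application of the graded Jacobi identity with $f=S$, $g=\Pi^{(2)}$, $h=\mathcal{Q}+\mathcal{F}$, killing one term via $\{S,\mathcal{Q}+\mathcal{F}\}=0$ and the other via the defining equation $\{S,\Pi^{(2)}\}=\mathcal{R}=-\tfrac12\{N,N\}$ together with the Jacobi identity for three identical even functions. Your explicit bookkeeping of $\{S,\mathcal{F}\}=0$ and $\{\mathcal{F},\mathcal{F}\}=0$ only makes precise what the paper uses implicitly.
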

\begin{proof} The Jacobi identity for BV-brackets (\ref{eq_bv_brack_jacobi})   allows us to rearrange
\be
\begin{split}
D_S& (\{ \Pi^{(2)}, \mathcal{Q}+\mathcal{F}\} )  = \{ \{S,  \Pi^{(2)}\}, \mathcal{Q}+\mathcal{F}\}  - \{ \Pi^{(2)}, \{S , \mathcal{Q}+\mathcal{F}\}\}.
\end{split}
\ee
The second term vanishes due to the invariance of the classical action under the symmetry transformations.  The first term is simplified using  the defining equation (\ref{ed_pi_2_cl_master})  for the bivector and the Jacobi identity for the superalgebra 
\be
\begin{split}
\{ \mathcal{Q}+\mathcal{F}, \{ S, \Pi^{(2)}\}\}= - \frac12  \left\{ \mathcal{Q}+\mathcal{F},  \{ \mathcal{Q}+\mathcal{F}, \mathcal{Q}+ \mathcal{F}\} \right\} =0.
\end{split}
\ee
The last equality is the Jacobi identity for the BV bracket of the three identical even functions.
\end{proof}

An existence of solution to (\ref{eq_master_quadratic_order}) also requires that the   $\{ \Pi^{(2)}, \mathcal{Q}+\mathcal{F}\} $ is the  trivial class in $D_S$-cohomology. The similar analysis can be extended to higher polyvectors with the corresponding equation
\be\label{eq_polyvector_equation}
D_S \Pi^{(k)} =\mathcal{R}^{(k)}\equiv -\{ \Pi^{(k-1)}, \mathcal{Q}+ \mathcal{F}\} -\frac12  \sum_{l=2}^{k-2} \{ \Pi^{(l)}, \Pi^{(k-l)}\}. 
\ee 
Similarly to the $\Pi^{(3)}$ case we only need to check the cohomological obstruction for the solution existence, since closeness follows from the lemma below.
\begin{Lemma} The $\mathcal{R}^{(k)}$ defined by  the equation (\ref{eq_polyvector_equation}) is $D_S$-closed. 
\end{Lemma}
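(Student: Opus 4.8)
The plan is to recognize $\mathcal{R}^{(k)}$ as the leading antifield component of the obstruction to a truncated BV action solving the master equation, and then to deduce $D_S\mathcal{R}^{(k)}=0$ from the graded Jacobi identity (\ref{eq_bv_brack_jacobi}). I would grade everything by the number of $\phi^\ast$'s, so that $S$ and $\mathcal{F}$ sit in degree $0$, $\mathcal{Q}$ in degree $1$, $\Pi^{(l)}$ in degree $l$, and $D_S=\{S,\cdot\}$ lowers the degree by one. Let $\Sigma=S+\mathcal{Q}+\mathcal{F}+\sum_{l=2}^{k-1}\Pi^{(l)}$ be the BV action truncated at step $k-1$; by the previous steps of the construction (Lemma \ref{lemma_pi_2_exist} for $\Pi^{(2)}$, and the analogous solvability at steps $3,\dots,k-1$) one has $\{\Sigma,\Sigma\}=0$ through $\phi^\ast$-degree $k-2$. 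First I would check, by a short sign-tracking of the bracket of each pair of summands of $\Sigma$, that the only terms landing in $\phi^\ast$-degree $k-1$ are $\{\Pi^{(k-1)},\mathcal{Q}+\mathcal{F}\}$ and $\tfrac12\{\Pi^{(l)},\Pi^{(k-l)}\}$ for $2\le l\le k-2$ — a contribution $\{S,\Pi^{(k)}\}$ is absent because $\Pi^{(k)}$ has not yet been introduced — so that the degree-$(k-1)$ component of $\tfrac12\{\Sigma,\Sigma\}$ equals exactly the right-hand side $-\mathcal{R}^{(k)}$ of (\ref{eq_polyvector_equation}).

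Next I would invoke the identity $\{\Sigma,\{\Sigma,\Sigma\}\}=0$, valid for any even $\Sigma$ as an immediate consequence of the graded Jacobi identity (\ref{eq_bv_brack_jacobi}) together with graded antisymmetry of the BV bracket. Expanding this identity in the $\phi^\ast$-grading and using that $\{\Sigma,\Sigma\}$ has no components below degree $k-1$, I would extract its degree-$(k-2)$ part. Only the degree-$0$ summand $S$ of $\Sigma$ can be bracketed against the degree-$(k-1)$ piece $-2\mathcal{R}^{(k)}$ of $\{\Sigma,\Sigma\}$ to land in degree $k-2$; every other summand of $\Sigma$ (namely $\mathcal{F},\mathcal{Q},\Pi^{(l)}$) would have to be bracketed against a component of $\{\Sigma,\Sigma\}$ of degree $\le k-2$, which vanishes, and since $\{\Sigma,\Sigma\}$ carries no $c^\ast$ once $\{\mathcal{F},\mathcal{F}\}=0$ is used, the $\mathcal{F}$-channel produces nothing new. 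Hence the degree-$(k-2)$ part of $\{\Sigma,\{\Sigma,\Sigma\}\}=0$ reads $-2\{S,\mathcal{R}^{(k)}\}=0$, i.e.\ $D_S\mathcal{R}^{(k)}=0$.

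Equivalently — and this is the presentation that parallels the $\Pi^{(3)}$ lemma already in the text — I would differentiate the defining formula for $\mathcal{R}^{(k)}$ directly by $D_S$, use that $D_S$ is a signed derivation of the bracket, substitute $D_S(\mathcal{Q}+\mathcal{F})=0$ (invariance of $S$ together with $\{S,\mathcal{F}\}=0$) and $D_S\Pi^{(l)}=\mathcal{R}^{(l)}$ for every $l<k$, and then collapse the resulting double sum of brackets of lower $\mathcal{R}^{(\cdot)}$'s and $\Pi^{(\cdot)}$'s using (\ref{eq_bv_brack_jacobi}); the cancellation that occurs is precisely the one responsible for $\{\Sigma,\{\Sigma,\Sigma\}\}$ vanishing at leading order.

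The main obstacle I anticipate is bookkeeping rather than any conceptual difficulty: one must be careful about which antifield is counted ($\mathcal{F}$ is linear in $c^\ast$, hence sits in $\phi^\ast$-degree $0$ and must be grouped with $\mathcal{Q}$ whenever it is bracketed against a polyvector), and about the Koszul signs in the graded Jacobi and Leibniz identities, into which the parities $|v_\a|$ of the symmetry vector fields and $|\phi^a|$ of the fields both enter. It is also worth stating explicitly that this lemma is the \emph{consistency} half of step $k$ of the inductive construction: it presupposes that $\Pi^{(2)},\dots,\Pi^{(k-1)}$ were already fixed, and what it establishes — $D_S$-closedness of $\mathcal{R}^{(k)}$ — is exactly what makes it meaningful to then ask whether $\mathcal{R}^{(k)}$ is $D_S$-exact, which is the remaining (cohomological) obstruction to constructing $\Pi^{(k)}$.
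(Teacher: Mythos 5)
Your proposal is correct and follows essentially the same route as the paper's proof: both apply the graded Jacobi identity $\{\mathcal{S},\{\mathcal{S},\mathcal{S}\}\}=0$ to the action truncated at $\Pi^{(k-1)}$, use that the master equation already holds through the previous antifield order so that the lowest nonvanishing component of $\{\mathcal{S},\mathcal{S}\}$ is (up to a factor) $-\mathcal{R}^{(k)}$, and extract the lowest-degree component of the Jacobi identity to conclude $D_S\mathcal{R}^{(k)}=0$. If anything, your bookkeeping (extracting at $\phi^\ast$-degree $k-2$, where only the $\{S,\cdot\}$ channel can contribute) is slightly more careful than the paper's stated order, but the argument is the same.
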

\begin{proof}   Let us consider the Jacobi identity for the BV bracket of the three identical functions 
\be\label{eq_jacobi_trivial_proof}
\{ \mathcal{S}, \{\mathcal{S},\mathcal{S}\}\}  =0.
\ee
The $\cO(\phi^\ast)^{k-1}$ order of the Jacobi identity (\ref{eq_jacobi_trivial_proof}) has the following schematic form 
\be\label{eq_jacobi_trivial_k_order}
0=\{ \mathcal{S}, \{\mathcal{S},\mathcal{S}\}\} \Big|_{\cO(\phi^\ast)^{k-1}} = \left\{S, \{\mathcal{S},\mathcal{S}\}\Big|_{\cO(\phi^\ast)^k}\right\}  + \sum_{l=1}^{k-1}  \left\{\mathcal{S}\Big|_{\cO(\phi^\ast)^l}, \{\mathcal{S},\mathcal{S}\}\Big|_{\cO(\phi^\ast)^{k-l}}\right\}. 
\ee
By assumption polyvectors $\Pi^{(l)}, l=2,..,k-1$ are such that the master equation for $\mathcal{S}$ is satisfied to $\cO(\phi^\ast)^{k-1}$ order, i.e.
\be
\{\mathcal{S},\mathcal{S}\} = \cO(\phi^\ast)^{k-1}.
\ee
Hence we can drop the last sum in (\ref{eq_jacobi_trivial_k_order}). The $\cO(\phi^\ast)^k$ order for the classical master equation 
\be
 \{\mathcal{S},\mathcal{S}\}\Big|_{\cO(\phi^\ast)^k} =  \{S,\Pi^{(2)}\}  - \mathcal{R}^{(k)}.
\ee
The Jacobi identity allows us to evaliuate $\{ S, \{S,\Pi^{(2)}\} \} = 0$ hence we arrive into 
\be
\{ S, \mathcal{R}^{(k)}\} = D_S \mathcal{R}^{(k)} = 0
\ee
and the proof is complete.
\end{proof}

The quantum master equation for the $\hbar$-independent BV action  further requires 
\be \label{eq_quant_master_k_order}
\Delta_\mu \Pi^{(k)} = 0. 
\ee
In section \ref{sec_cohom_susy_homotopy} we will provide a  $D_S$-homotopy such that  for all $k>2$ the requirement (\ref{eq_quant_master_k_order}) holds.

\begin{Remark} A possible way of constructing solutions to the quantum master equation is to use the BV induction  (\ref{eq_BV_induction}).  Given a quantum system with off-shell symmetry we can integrate out some of the variables 
to get the induced BV action for a physical system with on-shell supersymmetry.  We will provide an explicit example of the BV induction in section \ref{sec_bv_induction_superpotential}.
\end{Remark}

\subsection{Homotopy choice  and  canonical BV transformations}\label{sec_homotopy_bv_transform}

Given a  solution $\Pi$ to the problem 
\be
D_S \Pi = \mathcal{R}
\ee
we can construct another solution $\Pi'$ by adding an exact term and a cohomology, i.e.
\be
\Pi'  = \Pi+ D_S \Lambda + \pi,\;\;\; \pi \in H^\ast (D_S) .
\ee
We can relate the exact terms in BV action  to the canonical BV transformations.

\begin{Definition} \label{def_can_bv_transform} Two solutions $\mathcal{S}, \mathcal{S}'$ for the quantum master equation are equivalent $\mathcal{S}\sim \mathcal{S}'$, if there exists a canonical BV transformation: a family  $\mathcal{S}_t, \mathcal{R}_t \in C^\infty(\mathcal{M})[[\hbar]]$, parametrized by $t\in [0,1]$, such that $\mathcal{S}_0=\mathcal{S}$  and $\mathcal{S}_1=\mathcal{S}'$, and the following equation holds
\be\label{def_BV_canonical_diff}
\frac{d}{dt} \mathcal{S}_t  = \{  \mathcal{S}_t, \mathcal{R}_t\} -\hbar \Delta_\mu \mathcal{R}_t,
\ee
The odd  function  $\mathcal{R}_t$ is called the generator of the canonical BV transformation.
\end{Definition}

\begin{Lemma} The    function  $\mathcal{S}_t \in  C^\infty(\mathcal{M})[[\hbar]]$ solves the quantum master equation.
\end{Lemma}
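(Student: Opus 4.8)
The plan is to differentiate the quantum master equation along the flow $\mathcal{S}_t$ and show that the $t$-derivative of its failure vanishes, so that if $\mathcal{S}_0 = \mathcal{S}$ satisfies the quantum master equation then so does $\mathcal{S}_t$ for all $t$. Concretely, set
\be\nn
E(t) = \frac12 \{\mathcal{S}_t, \mathcal{S}_t\} - \hbar\, \Delta_\mu \mathcal{S}_t,
\ee
so that $E(0) = 0$ by hypothesis, and the goal is to show $E(t) \equiv 0$. The strategy is to prove that $E(t)$ satisfies a linear first-order ODE with vanishing initial data; uniqueness of solutions to such an ODE then forces $E(t) = 0$.

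First I would compute $\frac{d}{dt} E(t)$ using the defining equation (\ref{def_BV_canonical_diff}) for the canonical BV transformation, namely $\dot{\mathcal{S}}_t = \{\mathcal{S}_t, \mathcal{R}_t\} - \hbar \Delta_\mu \mathcal{R}_t$. For the bracket term, I use the Leibniz rule for the BV bracket together with graded symmetry to get $\frac{d}{dt}\,\tfrac12\{\mathcal{S}_t,\mathcal{S}_t\} = \{\dot{\mathcal{S}}_t, \mathcal{S}_t\} = \{\{\mathcal{S}_t,\mathcal{R}_t\},\mathcal{S}_t\} - \hbar\{\Delta_\mu \mathcal{R}_t, \mathcal{S}_t\}$. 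For the $\Delta_\mu$ term I differentiate $-\hbar \Delta_\mu \mathcal{S}_t$ and use linearity of $\Delta_\mu$ to get $-\hbar \Delta_\mu \dot{\mathcal{S}}_t = -\hbar \Delta_\mu\{\mathcal{S}_t,\mathcal{R}_t\} + \hbar^2 \Delta_\mu^2 \mathcal{R}_t$, and the last term drops since $\Delta_\mu^2 = 0$ (the Berezinian is compatible). Now combine everything: the term $\{\{\mathcal{S}_t,\mathcal{R}_t\},\mathcal{S}_t\}$ is rewritten via the Jacobi identity (\ref{eq_bv_brack_jacobi}) in terms of $\{\{\mathcal{S}_t,\mathcal{S}_t\},\mathcal{R}_t\}$, and the cross terms $-\hbar\{\Delta_\mu \mathcal{R}_t,\mathcal{S}_t\}$ and $-\hbar \Delta_\mu\{\mathcal{S}_t,\mathcal{R}_t\}$ are reorganized using the derivation identity (\ref{eq_bv_brack_derivation}), $\Delta_\mu\{f,g\} = \{\Delta_\mu f, g\} + (-1)^{|f|+1}\{f,\Delta_\mu g\}$. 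After collecting, everything should organize into the form $\frac{d}{dt}E(t) = \{E(t), \mathcal{R}_t\}$ (possibly up to a sign or an overall factor that I would fix by careful bookkeeping of the gradings, using that $\mathcal{S}_t$ is even and $\mathcal{R}_t$ is odd).

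Given $\frac{d}{dt}E(t) = \{E(t),\mathcal{R}_t\}$ with $E(0) = 0$, the conclusion $E(t) \equiv 0$ follows: this is a linear homogeneous ODE in $E$ (the operator $\{\,\cdot\,, \mathcal{R}_t\}$ is $\mathbb{R}$-linear in the first slot), and the zero function is a solution with the correct initial value, so by uniqueness $E(t) = 0$ for all $t \in [0,1]$. Equivalently, one can exhibit the solution explicitly as $E(t) = U_t \triangleright E(0)$ for the "time-ordered" flow $U_t$ generated by $\{\,\cdot\,,\mathcal{R}_t\}$, which vanishes since $E(0)=0$.

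The main obstacle I anticipate is purely the sign bookkeeping: assembling the three structural identities (Leibniz, Jacobi, derivation) with the correct Koszul signs so that the right-hand side collapses exactly to $\{E(t),\mathcal{R}_t\}$ with no leftover terms. The conceptual content is standard — it is the statement that canonical BV transformations preserve solutions of the quantum master equation — but the excerpt's sign conventions for the BV bracket and $\Delta_\mu$ must be tracked precisely. In particular I would double-check the term coming from $\hbar^2 \Delta_\mu^2 \mathcal{R}_t$ really is the only place compatibility of $\mu$ is used, and that no anomaly term survives because $\mathcal{R}_t$ is genuinely odd so that $\Delta_\mu \mathcal{R}_t$ is even and the derivation identity applies cleanly.
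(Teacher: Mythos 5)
Your proposal is correct and follows essentially the same route as the paper: the core of both arguments is the identity $\frac{d}{dt}\bigl(\frac12\{\mathcal{S}_t,\mathcal{S}_t\}-\hbar\Delta_\mu\mathcal{S}_t\bigr)=\{\frac12\{\mathcal{S}_t,\mathcal{S}_t\}-\hbar\Delta_\mu\mathcal{S}_t,\mathcal{R}_t\}$, obtained from the Jacobi identity, the derivation identity and $\Delta_\mu^2=0$, and your sign bookkeeping indeed closes with no leftover terms since $\mathcal{S}_t$ is even and $\mathcal{R}_t$ odd. The only difference is cosmetic: you integrate the infinitesimal statement via uniqueness for the linear evolution equation $\dot E=\{E,\mathcal{R}_t\}$ (strictly a first-order transport equation, since $\{\cdot,\mathcal{R}_t\}$ is a first-order differential operator), whereas the paper argues that $\mathcal{S}_{t+\e}$ solves the equation up to $\mathcal{O}(\e^2)$ and concludes by subdividing $\e$; both conclusions are at the same level of rigor.
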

\begin{proof}  Let  us choose $\e>0$ and  use the canonical transformation (\ref{def_BV_canonical_diff}) to  evaluate 
\be
\mathcal{S}_{t+\e} = \mathcal{S}_{t} + \e  \{  \mathcal{S}_t, \mathcal{R}_t\} -\hbar  \e \Delta_\mu \mathcal{R}_t +\cO(\e^2).
\ee
 The quantum master  equation for $\mathcal{S}_{t+\e} $  can be simplified  using the Jacobi identity  (\ref{eq_bv_brack_jacobi}) and derivation identity  (\ref{eq_bv_brack_derivation}) into 
\be
\begin{split}
\frac12 \{ \mathcal{S}_{t+\e}& , \mathcal{S}_{t+\e} \} -\hbar \Delta_\mu \mathcal{S}_{t+\e} \\
& = \e \{ \mathcal{S}_{t} ,    \{  \mathcal{S}_t, \mathcal{R}_t\}   \} - \e \hbar  \{  \mathcal{S}_t,  \Delta_\mu \mathcal{R}_t \} -\e \hbar \Delta_\mu (\{  \mathcal{S}_t, \mathcal{R}_t\}-\hbar   \Delta_\mu \mathcal{R}_t) +\cO(\e^2)\\
& =  \e \left\{ \frac12  \{  \mathcal{S}_t, \mathcal{S}_t\} - \hbar \Delta_\mu \mathcal{S}_t , \mathcal{R}_t  \right\} +\cO(\e^2) =\cO(\e^2).
\end{split}
\ee
The last equality holds because  $\mathcal{S}_t$ is a solution to the quantum master equation. 
Hence we conclude that  $\mathcal{S}_{t+\e}$ is a solution to the quantum master equation up to  $\cO(\e^2)$ corrections.  This implies (via subdivision of $\e$ into $N$ shifts of length $\e/N$ and taking the limit $N \to \infty$) that 
$\mathcal{S}_{t+\e}$ is also a solution to the quantum master equation. 
\end{proof}

Let us carefully investigate the effects of the $D_S$-exact terms. A  bivector change is
\be\label{eq_change_pi_2}
\delta \Pi^{(2)} = D_S \Lambda^{(3)} = \{S, \Lambda^{(3)}\}
\ee
 for an arbitrary function $\Lambda^{(3)}$ on the odd symplectic manifold, that is degree 3 in antifields. The change (\ref{eq_change_pi_2}) of  bivector  induces a 3-vector change, given by a solution to the defining equation (\ref{eq_master_quadratic_order}), i.e.  
\be
D_S \delta \Pi^{(3)} =- \{\mathcal{Q}+ \mathcal{F}, \delta \Pi^{(2)}\}= - \{\mathcal{Q}+ \mathcal{F}, \{S, \Lambda^{(3)}\}\}.
\ee
We can use the classical master equation to rearrange 
\be\nn
\begin{split}
D_S \delta \Pi^{(3)} &= - \{\mathcal{Q}+ \mathcal{F}, \{S, \Lambda^{(3)}\}\} = - \{ \{\mathcal{Q}+ \mathcal{F}, S\}, \Lambda^{(3)}\}  +\{ S, \{ \mathcal{Q}+ \mathcal{F}, \Lambda^{(3)}\}\} \\
& =   \{S,\{\mathcal{Q}+ \mathcal{F}, \Lambda^{(3)}\}\} =  D_S( \{\mathcal{Q}+ \mathcal{F}, \Lambda^{(3)}\}).
\end{split}
\ee
The solution for 3-vector change 
\be\label{eq_change_pi_3}
\delta \Pi^{(3)} =  \{\mathcal{Q}+ \mathcal{F}, \Lambda^{(3)}\}+ D_S \Lambda^{(4)} = \{S, \Lambda^{(4)}\} + \{\mathcal{Q}+ \mathcal{F}, \Lambda^{(3)}\} .
\ee
 for an arbitrary function $\Lambda^{(4)}$ on the odd symplectic manifold, that is degree 4 in antifields.
The equation for the 4-vector change 
\be
D_S \delta  \Pi^{(4)} =- \{ \delta \Pi^{(3)}, \mathcal{Q}+ \mathcal{F}\}- \{ \delta \Pi^{(2)}, \Pi^{(2)}\} - \frac12 \{ \delta \Pi^{(2)}, \delta \Pi^{(2)}\}
\ee 
gives us a solution 
\be\label{eq_change_pi_4}
 \delta  \Pi^{(4)} =\{S, \Lambda^{(5)}\} +  \{ \mathcal{Q}+ \mathcal{F} ,\Lambda^{(4)}\}  +\{ \Pi^{(2)}, \Lambda^{(3)}\} +\cO(\Lambda^{(3)})^2.
\ee 
 for an arbitrary function $\Lambda^{(5)}$ on the odd symplectic manifold, that is degree 5 in antifields.

We  can arrange the changes (\ref{eq_change_pi_2}, \ref{eq_change_pi_3}, \ref{eq_change_pi_4})  into 
\be
\delta \mathcal{S} =\delta \Pi^{(2)}+\delta \Pi^{(3)}+\delta \Pi^{(4)} =  \{ \mathcal{S}, \Lambda\} + \cO(\phi^\ast)^5+\cO(\Lambda)^2,
\ee
which is identical to the infinitesimal canonical BV transformation (\ref{def_BV_canonical_diff}) for a solution to classical master equation with parameter $\Lambda   =\Lambda^{(3)} +\Lambda^{(4)}+ \Lambda^{(5)} $.
Further analysis shows that the change in  $D_S$-homotopy is equivalent to the canonical BV transformation.

\subsection{Obstructions and Koszul cohomology}\label{sec_d_s_cohomology}

In previous section we observed that the obstruction  to the master equation solution lies in non-trivial cohomology classes  for 
\be\label{eq_Koszul_diff}
D_S =  \{S,\cdot\}  = (-1)^{|\phi^a|} \frac{\p S}{\p \phi^a} \frac{\p}{\p \phi^\ast_a},\;\;\; D_S^2 = 0.
\ee
The key statement about the cohomology  is formulated in proposition below.

\begin{Proposition} Physical system with a symmetry has non-trivial cohomology classes.
\end{Proposition}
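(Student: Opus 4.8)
The plan is to produce an explicit nontrivial class directly from the symmetry data, namely the symmetry term $\mathcal{Q} = c^\alpha v_\alpha^a \phi^\ast_a$ of the BV action (\ref{def_symmetryt_charge_structure_const}). Under the identification of functions on $T^\ast[1]X$ with polyvector fields (Example \ref{ex_polyvectors_functions}), the operator $D_S = \{S, \cdot\}$ is, up to sign, the Koszul contraction $\iota_{dS}$ with the one-form $dS$; it lowers the antifield degree (the number of $\phi^\ast$ factors) by one, so this degree is the natural grading of the complex and $\mathcal{Q}$ lives in degree one. I would exhibit $[\mathcal{Q}]$ as a nonzero cohomology class.

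First I would note that $\mathcal{Q}$ is $D_S$-closed. This is exactly the $\cO(\phi^\ast)^0$ identity already established inside the proof of Proposition \ref{prop_cls_master_solution_off_shell}, where $\{S(\phi), c^\alpha v_\alpha^a \phi^\ast_a\} = -c^\alpha v_\alpha(S)$, and the right-hand side vanishes by the invariance $v_\alpha(S) = 0$. Hence $[\mathcal{Q}]$ is a well-defined class in the degree-one $D_S$-cohomology, produced entirely by the presence of the symmetry.

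The core of the argument is to show that $\mathcal{Q}$ is not $D_S$-exact. A primitive would have antifield degree two, $\Xi = \frac{1}{2} c^\alpha \Xi_\alpha^{ab} \phi^\ast_a \phi^\ast_b$; computing $D_S \Xi = \{S, \Xi\}$ and matching its degree-one part to $\mathcal{Q}$ forces the structural identity $v_\alpha^a = \Xi_\alpha^{ab}\, \partial_b S$ (up to graded signs). The right-hand side vanishes on the critical set $X_{crit} = \{dS = 0\}$, so exactness would require $v_\alpha|_{X_{crit}} = 0$. For a symmetry that acts nontrivially there is a point $p \in X_{crit}$ and an index $\alpha$ with $v_\alpha(p) \neq 0$; evaluating the identity at $p$, where every $\partial_b S(p) = 0$, gives $0 \neq v_\alpha^a(p) = 0$, a contradiction. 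Therefore $[\mathcal{Q}] \neq 0$ and the $D_S$-cohomology is nontrivial.

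Conceptually, this computation shows that the degree-one coboundaries are precisely the on-shell-trivial symmetries proportional to the equations of motion, so that $H^1(D_S) = \{\text{symmetries}\}/\{\text{trivial symmetries}\}$ and a genuinely acting symmetry is exactly a nonzero class---matching the role such classes play as the obstructions $\mathcal{R}^{(k)}$ in (\ref{eq_polyvector_equation}). The main obstacle is not the calculation but isolating the correct nontriviality hypothesis: the statement fails for an on-shell-trivial symmetry (one whose $v_\alpha$ vanish on $X_{crit}$), so the proof needs the physically automatic assumption that some $v_\alpha$ is nonzero somewhere on the critical locus; beyond that, the only delicate point is keeping the graded signs straight in the matching step that yields $v_\alpha^a = \Xi_\alpha^{ab}\partial_b S$.
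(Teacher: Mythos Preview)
Your argument is correct and is essentially the same strategy the paper pursues in the material immediately following the proposition. In both cases the candidate class is the symmetry $1$-cycle $c^\a v_\a^a\phi^\ast_a$ (equivalently, the syzygy coming from $v_\a(S)=0$); closedness is the invariance of $S$, and non-exactness is the observation that any primitive would force $v_\a^a$ to lie in the gradient ideal $(\p_b S)$, i.e.\ to vanish on $X_{crit}$. You also correctly isolate the hypothesis the bare statement suppresses---that some $v_\a$ is nonzero on $X_{crit}$---which the paper likewise has to impose (cf.\ the proposition ``For a physical system \ldots which acts non-trivially on critical set $X_{crit}$ \ldots'').

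The only substantive difference is packaging. The paper first restricts to the polynomial, all-even case and sets up the Koszul/syzygy machinery: it defines the syzygy-theoretic quotient $\widetilde{H}_1(F)=Z_1(D_F)/(Z_1(D_F)\cap C_1^F)$ and proves the surjection $H_1(D_F)\to\widetilde{H}_1(F)$, so that nontriviality modulo the full gradient ideal automatically lifts to nontriviality in $H_1$. Your direct computation bypasses this intermediate object by working with the actual boundary space $B_1$ (antisymmetric $\Xi^{ab}$), which is cleaner for a single class but gives less structure for the later discussion of higher $H_k$ built from products of symmetry syzygies. Either route proves the proposition under the stated nontriviality assumption.
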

\begin{proof} We will provide the careful proof for the case of polynomial actions with all even variables, while provide an arguments towards generalization of the statement for the case of smooth actions on supermanifolds.
\end{proof}

For  all even fields $\phi^a= x^a$, the corresponding antifields $x^\ast_a$ are odd and can be used as a grading. The corresponding complex is the Koszul complex 
\be
D_F = F_i(x) \frac{\p}{ \p x^\ast_i}
\ee
for the sequence $F_i (x) = \p_i S$.

 For polynomial  $S(x)$ can use  the classical results \cite{fulton2013intersection} on Koszul complexes to evaluate the cohomology.
 \begin{Definition} For polynomial ring $R = \mathbb{C}[x_1,...,x_n]$ sequence of polynomial functions $\{F_i(x)\}, \;\; i=1..n$ is a regular sequence if 
 for $i=1..n$ the function $F_i(x)$ is a non-zero divisor in $R / (F_1,..., F_{i-1})$.
 \end{Definition}
\begin{Theorem}\label{thm_regular_koszul_cohom}   For regular  sequence  $\{F_i(x)\}$ the  cohomology of Koszul complex 
\be
H_0 (D_F) = \mathbb{C}[x]/( F_1,.., F_n),\;\;\; H_{k}(D_F) = 0,\;\;  k>0.
\ee
\end{Theorem}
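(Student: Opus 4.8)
The plan is to reproduce the standard homological argument that the Koszul complex of a regular sequence is acyclic in positive degrees, by induction on the length $n$ of the sequence. In the notation above, $D_F=\sum_i F_i(x)\,\partial/\partial x^\ast_i$ acts on $R[x^\ast_1,\dots,x^\ast_n]$ with $R=\mathbb{C}[x_1,\dots,x_n]$, and the antifield degree (the number of $x^\ast$'s) makes this the Koszul complex $K_\bullet=K_\bullet(F_1,\dots,F_n)$ with $K_p=\Lambda^p R^n$ and $D_F\colon K_p\to K_{p-1}$. The cokernel in degree zero is manifestly $R/(F_1,\dots,F_n)$, so the only content is the vanishing $H_k(D_F)=0$ for $k>0$.

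First I would dispose of the base case $n=1$: here $K_\bullet$ is $0\to R\xrightarrow{F_1}R\to 0$, so $H_0=R/(F_1)$ and $H_1=\ker(F_1\colon R\to R)$, and the latter is zero precisely because $F_1$ is a non-zero divisor in $R$ — the $i=1$ clause of the regularity hypothesis. For the inductive step I would use the standard factorization $K_\bullet(F_1,\dots,F_n)\cong K_\bullet(F_1,\dots,F_{n-1})\otimes_R K_\bullet(F_n)$, which exhibits $K_\bullet(F_1,\dots,F_n)$ as the mapping cone of multiplication by $F_n$ acting on $K':=K_\bullet(F_1,\dots,F_{n-1})$. The long exact sequence in homology of this cone reads
\be
\cdots \to H_p(K')\xrightarrow{\ \pm F_n\ }H_p(K')\to H_p(K_\bullet)\to H_{p-1}(K')\xrightarrow{\ \pm F_n\ }H_{p-1}(K')\to\cdots,
\ee
with connecting maps equal, up to sign, to multiplication by $F_n$ (this is exactly where the cone description is used). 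By the inductive hypothesis $H_p(K')=0$ for $p>0$ and $H_0(K')=R/(F_1,\dots,F_{n-1})=:R'$. Reading off the sequence: for $p\ge 2$ both neighbours vanish, so $H_p(K_\bullet)=0$; for $p=1$ one obtains $H_1(K_\bullet)=\ker(F_n\colon R'\to R')$, which vanishes since $F_n$ is a non-zero divisor in $R'=R/(F_1,\dots,F_{n-1})$, i.e. the $i=n$ clause of the hypothesis; and for $p=0$ one gets $H_0(K_\bullet)=R'/(F_n)=R/(F_1,\dots,F_n)$, closing the induction.

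The only slightly delicate point is the bookkeeping: matching the paper's antifield grading (with $D_F$ lowering degree) to the homological convention, and pinning down the signs in the mapping cone so that the connecting homomorphism really is $\pm F_n$ rather than something more involved. I do not anticipate any genuine obstacle beyond this; as an alternative one can avoid the long exact sequence altogether and build an explicit contracting homotopy out of the $F_i$ on the positive-degree part, but the inductive argument is the cleanest and is the one invoked via \cite{fulton2013intersection}.
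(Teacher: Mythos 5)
Your proposal is correct: it is the standard induction on the length of the sequence, using the factorization $K_\bullet(F_1,\dots,F_n)\cong K_\bullet(F_1,\dots,F_{n-1})\otimes_R K_\bullet(F_n)$, the mapping-cone long exact sequence with connecting map $\pm F_n$, and the non-zero-divisor clause of regularity to kill $H_1$, together with the evident identification $H_0=R/(F_1,\dots,F_n)$. Note, however, that the paper itself supplies no proof of this theorem at all — it is invoked as a classical fact with a citation to \cite{fulton2013intersection} — so your write-up simply fills in the standard argument that the cited literature contains; the only bookkeeping to watch (translating the antifield grading on $\mathbb{C}[x]\otimes\Lambda^\bullet\langle x^\ast_1,\dots,x^\ast_n\rangle$ with $D_F$ lowering degree into the homological convention, and fixing the cone signs so the connecting homomorphism is multiplication by $F_n$ up to sign) is routine and does not affect the conclusion.
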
 
In case of non-regular sequences $\{F_i\}$ we can use syzygies (relations) to estimate the cohomology in higher gradings.
\begin{Definition}
For a generating set $F_1,..,F_n$ a relation, or first syzygy,  is a $n$-tuple $G^1,...,G^n$ of functions such that 
\be
G^k F_k = G^1 F_1+...+G^n F_n = 0. 
\ee
\end{Definition}
Syzygies form  a module over $\mathbb{C}[x]$.
The first  syzygies are  in one to one correspondence with  1-cycles  $Z_1(D_F)$, i.e.
\be
Z_1(D_F) = \{ G^k x^\ast_k\;|\; G^k F_k  = 0\}.
\ee
The  1-boundaries 
\be
B_1(D_F) = \{ c^{ij}  F_i (x) x^\ast_j \;|\; c^{ij} = -c^{ji} \in \mathbb{C}[x]\}
\ee
form a subspace in the space $C_1^F$   defined by 
\be
C_1^F   =  \{ c^{ij}F_i (x)  x^\ast_j  \;|\; c^{ij}  \in \mathbb{C}[x]\} \subset C_1.
\ee
\begin{Definition}  The  syzygy-theoretic part of the Koszul cohomology is 
\be
 \widetilde{H}_1(F) = Z_1(D_F) / ( Z_1(D_F)\cap C_1^F ).
 \ee
\end{Definition}
We can estimate the first Koszul cohomology using  the syzygy-theoretic part of the Koszul cohomology.
\begin{Proposition}\label{prop_Koszul_estimation} There exist a surjective map 
\be
H_1(D_F) \to \widetilde{H}_1(F) 
\ee
\end{Proposition}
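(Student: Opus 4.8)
The plan is to realize the asserted arrow as the canonical projection of quotient modules and thereby reduce the whole statement to a single inclusion. Both sides are quotients of the same object: $H_1(D_F)=Z_1(D_F)/B_1(D_F)$ and $\widetilde{H}_1(F)=Z_1(D_F)/(Z_1(D_F)\cap C_1^F)$. Hence, if I can show that the submodule one divides by to form $H_1(D_F)$ sits inside the submodule one divides by to form $\widetilde{H}_1(F)$, i.e.
\be
B_1(D_F)\ \subseteq\ Z_1(D_F)\cap C_1^F ,
\ee
then the identity map on $Z_1(D_F)$ descends to a well-defined $R$-linear map $Z_1(D_F)/B_1(D_F)\to Z_1(D_F)/(Z_1(D_F)\cap C_1^F)$, which is automatically surjective because it is a quotient-by-a-larger-submodule map. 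This is exactly the surjection $H_1(D_F)\to\widetilde{H}_1(F)$.

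So the work is to establish that inclusion, which I would do in two elementary steps. First, $B_1(D_F)\subseteq Z_1(D_F)$ is immediate from $D_F^2=0$ (boundaries are cycles). Second, $B_1(D_F)\subseteq C_1^F$: directly from the definition $B_1(D_F)=\{c^{ij}F_i(x)x^\ast_j\mid c^{ij}=-c^{ji}\in R\}$, every boundary is in particular of the form $c^{ij}F_i(x)x^\ast_j$ with $c^{ij}\in R$, hence lies in $C_1^F=\{c^{ij}F_i(x)x^\ast_j\mid c^{ij}\in R\}$. (Equivalently: applying $D_F$ to a $2$-chain $c^{ij}x^\ast_i x^\ast_j$ gives $(c^{ij}-c^{ji})F_i x^\ast_j$, whose coefficient $c^{ij}-c^{ji}$ is an element of $R$, so the result lies in $C_1^F$.) Combining the two, $B_1(D_F)\subseteq Z_1(D_F)\cap C_1^F$, as needed.

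Finally, I would note, for context, that the kernel of the constructed surjection is $(Z_1(D_F)\cap C_1^F)/B_1(D_F)$, which is precisely the part of $H_1(D_F)$ accounted for by the ``trivial'' Koszul-type relations $F_iF_j-F_jF_i=0$; this is what makes $\widetilde{H}_1(F)$ deserve the name ``syzygy-theoretic part'' of the first Koszul cohomology. I do not expect any genuine obstacle in this proof: it is essentially a diagram chase once the modules $Z_1,B_1,C_1^F$ are pinned down. The only point demanding care is sign/antisymmetry bookkeeping — making sure the coefficients appearing when $D_F$ acts on $2$-chains are indeed arbitrary elements of $R$ (so land in $C_1^F$) and that the $\mathbb{C}$-coefficients kill the harmless factor of $2$ — but with the conventions fixed as in the excerpt this is routine.
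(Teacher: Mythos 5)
Your proof is correct and follows essentially the same route as the paper: both establish the inclusion $B_1(D_F)\subseteq Z_1(D_F)\cap C_1^F$ and then obtain the surjection as the induced map between the two quotients of $Z_1(D_F)$. Your version merely spells out the inclusion and the descent of the identity map in more detail than the paper's one-line argument.
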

\begin{proof} By construction the map $i: B_1(D_1) \hookrightarrow  C_1^F$ is an injective map,  moreover the map  $B_1(D_1) \hookrightarrow C_1^F \cap Z_1(D_F)$ is also injective,   hence the corresponding quotients of the same space $Z_1(D_F)$ have surjective map.
\end{proof}
For a regular sequence $\{ F_i\}$ the  module of 1-syzygies is generated by the relations $G_{ij} = (G^1_{ij},...,G^n_{ij})$,  such that  $G^i_{ij} = F_j,\;\; G^j_{ij} = -F_i$ and $G^k_{ij} = 0$ otherwise. Both syzygy-theoretic part and the whole first 
Koszul cohomology vanish in that case.

For a physical system with a symmetry  the invariance of the classical action (\ref{def_action_invar})  implies existence of  syzygies 
\be
v_\a (S) = v_\a^i \p_i S = v_\a^i F_i(x) = 0. 
\ee
If  the vector field components $ v_\a^i(x) $ are non-trivial module gradient ideal $(\p_i S)$,  there is non-trivial syzygy-theoretic part of cohomology $H_1(F)$ and by proposition \ref{prop_Koszul_estimation} the first Koszul cohomology $H_1(D_S)$ are non-trivial.

\begin{Example}\label{exampl_koszul_cohom_rotation} Let us consider a physical system on $\mathbb{R}^2$ with coordinates $x,y$,  action  $S(x,y) = x^2+y^2 + g( x^2 + y^2)^2$ and $SO(2)$-rotational symmetry generated by the vector field 
\be
v = x\p_y - y\p_x.
\ee
The 1-syzygy associated to this symmetry  is $G = (x,-y)$, since  $v(S) = x F_y - y F_x =0$. The syzygy gives us a cohomology representative $x y^\ast  - y x^\ast$ when $g \neq 0$.
Indeed, the all  possible exact elements are of the form 
\be
D_F (c(x,y) x^\ast y^\ast) = c(x,y) (2x+4gx^3+4gxy^2) y^\ast -   c(x,y) (2y + 4g y^3 +4 gy x^2) x^\ast
\ee
and there are no polynomial solutions to the system  for $g\neq 0$
\be
\begin{split}
x &= c(x,y) (2x+4gx^3+4gxy^2)  \\
y &=  c(x,y) (2y + 4 gy^3 +4 gy x^2).
\end{split}
\ee
For $g=0$ there is a solution $c(x,y) = \frac12$ and the 1-cycle $x y^\ast  - y x^\ast$ is exact and cohomology is trivial.

\end{Example}

We can multiply module an ideal $(F)$ the  elements of $\widetilde{H}_1(F)$ to construct $D_F$-closed elements of higher degree
\be
c^{\a_1...\a_k}\;(G_{\a_1}^i x_i^\ast)\cdot...\cdot (G_{\a_k}^i x_i^\ast)\;\;\; \hbox{mod} (F).
\ee
The k-cycles    belong to   k-cohomology $H_k (D_F)$, hence for a physical system with multiple symmetries we expect to have non-trivial higher Koszul cohomology. 

\begin{Example}  Let us modify an example \ref{exampl_koszul_cohom_rotation} with an extra variable, i.e. consider a physical system on $\mathbb{R}^3$ with coordinates $x,y,z$, the action  $S(x,y,z) = x^2+y^2+z^2  +(x^2+y^2+z^2)^2 $ with $SO(3)$-rotational symmetry generated by 
\be
v_z = x\p_y - y\p_x,\;\; v_x = z\p_y - y\p_z,\;\; v_y = x\p_z - z\p_x.
\ee
The 1-syzygies associated to this symmetry 
\be
v_z(S) = x F_y - y F_x =0,\;\;\; v_y(S) = x F_z - z F_x =0,\;\;\; v_x(S) = z F_y - y F_z =0
\ee
give us  cohomology representatives 
\be
x y^\ast  - y x^\ast,\;\;\; x z^\ast  - z x^\ast ,\;\; z y^\ast  - y z^\ast.  
\ee
The pairwise products also give cohomology, for example  the degree 2 cohomology
\be\label{eq_example_deg_2_cohom}
(x y^\ast  - y x^\ast)(x z^\ast  - z x^\ast) = x^2y^\ast z^\ast +xy z^\ast x^\ast +xz x^\ast y^\ast.
\ee
The only exact element at degree 2 is 
\be\nn
\begin{split}
D_F &(c(x,y,z)x^\ast y^\ast z^\ast) = c(x,y,z)(2x+4x^3+4xy^2+4xz^2) y^\ast z^\ast \\
 & c(x,y,z)(2y + 4 y^3 +4 y x^2+ 4y z^2) x^\ast z^\ast +c(x,y,z) (2z +4 z^3 +4z x^2+4z y^2) x^\ast y^\ast
\end{split}
\ee
and there is no polynomial $c(x,y,z)$  to make (\ref{eq_example_deg_2_cohom}) exact.

\end{Example}

\begin{Remark} Our construction for  Koszul cohomology in terms of syzygy   is similar to the construction of the Tate resolution for the Jacobi ring $ \mathbb{C}[x]/( F_1,.., F_n)$. The key difference is that we keep the 
classical action and the odd symplectic manifold fixed while it is possible to modify the action by including the global symmetries. Such approach, discussed in \cite{felder2014classical},  leads to a different perturbative solution of the classical master equation
and it does not have obstructions.  
\end{Remark}

\subsection{Koszul cohomology for supersymmetric systems}

We want to generalize the results and arguments from previous section for   polynomial $S(x)$ to the case of a smooth function $S(x, \psi)$ on a supermanifold. We will do the generalization in two steps. The first step is to turn polynomial actions into 
smooth actions, while the second one is to change the commuting variables to the supercommuting ones.   Unfortunately, we cannot  keep the same level  of precision as in polynomial case, but we plan to continue our work on  this subject. Hence, the present section will contain conjectures, supporting arguments and examples.

The first step is to generalize the regular sequence case.
\begin{Definition} The gradient ideal $(\p_i S)$ forms a regular sequence in $C^\infty(X)$ if the  intersection of graph$(dS)$ and zero section of $T^\ast X$ is transversal. In particular, when $S(x)$ has simple isolated critical points the corresponding sequence is always regular.
\end{Definition}

For  a smooth function  $S: X \to \mathbb{R}$  the Koszul cohomology can be expressed using the critical submanifold  $X_{crit} = \{ x \in X\; |\; dS(x) = 0\}$.
\begin{Proposition} For  a regular sequence  $\{\p_iS(x)\}$ the Koszul  cohomology
\be
H_0 (D_S) = C^\infty(X_{crit}) ,\;\;\; H_{k}(D_S) = 0,\;\;  k>0.
\ee
\end{Proposition}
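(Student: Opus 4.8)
\emph{Overall strategy.} I would compute the cohomology of $D_S$ in a neighbourhood of each point and then glue. The key structural point is that $D_S=\sum_i(\p_i S)\,\p/\p x^\ast_i$ only differentiates the antifields, so multiplication by any function of the $x^i$ alone commutes with it: the Koszul complex is a complex of $C^\infty(X)$-modules with $C^\infty(X)$-linear differential, and a partition of unity carries local primitives to global ones for free. Concretely, given a cover $\{U_\alpha\}$ with subordinate partition of unity $\{\rho_\alpha\}$, a $D_S$-cycle $\omega$ that is $D_S$-exact on each $U_\alpha$, say $\omega|_{U_\alpha}=D_S\eta_\alpha$, is globally exact via $\eta=\sum_\alpha\rho_\alpha\eta_\alpha$ (summands extended by zero), because $D_S(\rho_\alpha\eta_\alpha)=\rho_\alpha D_S\eta_\alpha$; the analogous patching identifies the global gradient ideal $(\p_iS)$ with the patched local ones, so the degree-$0$ computation also reduces to the local one.

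\emph{Off the critical set.} If $p\notin X_{crit}$ then $\p_jS(p)\neq0$ for some $j$, so on a neighbourhood $U$ the sequence $\{\p_iS\}$ contains the unit $\p_jS$, and $h:\alpha\mapsto(\p_jS)^{-1}\,x^\ast_j\,\alpha$ is a contracting homotopy, $D_Sh+hD_S=\mathrm{id}$. Hence the local cohomology vanishes in every degree, including degree $0$, where the gradient ideal is all of $C^\infty(U)$.

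\emph{Near a critical point.} Transversality of $\mathrm{graph}(dS)$ with the zero section of $T^\ast X$ means $d(\p_1S),\dots,d(\p_nS)$ are linearly independent at $p\in X_{crit}$, so by the inverse function theorem $y^i:=\p_iS$ are smooth coordinates on a neighbourhood $U$ of $p$, with $X_{crit}\cap U=\{p\}$. After this coordinate change $D_S$ is the standard Koszul differential of the coordinate sequence $(y^1,\dots,y^n)$, and I would invoke the smooth analogue of Theorem~\ref{thm_regular_koszul_cohom}: iterated Hadamard's lemma (one coordinate at a time, the others as parameters) identifies $C^\infty(U)/(y^1,\dots,y^k)$ with $C^\infty(\{y^1=\dots=y^k=0\})$, so each $y^{k+1}$ is a non-zero divisor modulo its predecessors; thus $(y^i)$ is a regular sequence in $C^\infty(U)$, and the mapping-cone induction proving Koszul acyclicity of a regular sequence (valid over any commutative ring) gives $H_{k>0}=0$ and $H_0=C^\infty(U)/(y^1,\dots,y^n)=\mathbb{R}$, the stalk at $p$ of functions on the discrete set $X_{crit}$. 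Globalizing by the first paragraph, $H_{k>0}(D_S)=0$ and $H_0(D_S)=C^\infty(X)/(\p_iS)$, which by the same slice identification is the algebra of functions on $X_{crit}=\{dS=0\}$.

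\emph{Main difficulty and caveats.} The crux is the local statement near the critical points: converting the geometric transversality hypothesis into the algebraic regular-sequence property of $(\p_iS)$ inside the non-Noetherian ring $C^\infty$, which is exactly where Hadamard's lemma, or equivalently an explicit Euler-type integral homotopy for the coordinate Koszul complex, replaces the polynomial argument behind Theorem~\ref{thm_regular_koszul_cohom}. Two extensions I would flag: if one only assumes a clean (rather than transversal) intersection, so that $X_{crit}$ is positive-dimensional, one first peels off the coordinates tangent to $X_{crit}$ and runs the above on the transverse ones; and the passage to odd fields $\psi^\alpha$, where $\p_\alpha S$ must enter on the same Koszul--Tate footing as the $\p_iS$, is what the conjectural statements of this section are meant to handle.
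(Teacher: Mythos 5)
Your proof is correct, but it takes a genuinely different route from the paper. The paper disposes of this proposition in two lines by observing that $D_S$ coincides with the B-model differential $Q_W$ and citing the authors' earlier work \cite{Losev:2023bhj}, where a non-degenerate pairing for a superpotential with isolated critical points yields exactly the stated cohomology; no local analysis appears in the text. You instead give a self-contained differential-topology argument: you exploit the $C^\infty(X)$-linearity of $D_S$ (it only differentiates antifields) to glue local statements with a partition of unity, kill the cohomology off $X_{crit}$ with the explicit contraction $h(\alpha)=(\p_jS)^{-1}x^\ast_j\alpha$, and near a critical point convert the paper's transversality definition of regularity into the statement that $y^i=\p_iS$ are coordinates, after which iterated Hadamard's lemma makes $(y^i)$ a regular sequence in the non-Noetherian ring $C^\infty(U)$ and the standard mapping-cone induction gives acyclicity in positive degree and $H_0=C^\infty(U)/(y^i)\cong\mathbb{R}$. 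What your approach buys is independence from the external reference and an argument that works directly at the level of smooth functions, making explicit both the local-to-global step and the identification of $H_0(D_S)=C^\infty(X)/(\p_iS)$ with $C^\infty(X_{crit})$; what the paper's approach buys is brevity and a structural link to the B-model, where the pairing also encodes extra information (the residue/Jacobian-ring structure on $H_0$) that your argument does not produce. Your closing caveats (clean rather than transversal intersections, and odd fields on a Koszul--Tate footing) go beyond the proposition as stated and correspond to what the paper itself only treats conjecturally in this section, so they are appropriately flagged as extensions rather than gaps.
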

\begin{proof}
Surprisingly,  the differential (\ref{eq_Koszul_diff}), is identical to the B-model differential $Q_W$. For a superpotential $W$ with isolated critical points we constructed \cite{Losev:2023bhj} a  non-degenerate paring to show that the cohomology are the same as predicted by the  proposition.
\end{proof}

The cohomology class of a function $\Pi(x)$ is given by  its restriction to the critical set, i.e. 
\be
\left[\;\Pi(x)\;\right] = \Pi(x)\Big|_{X_{crit}}.
\ee
In particular $\Pi(x)$ has trivial cohomology class when it vanishes on a critical set $X_{crit}$.

The generalization of a syzygy is a vector field on $X$ preserving $S$, while the 1-cycles  are constructed from such vector fields, i.e. 
\be
Z_1(D_S) =  \{ v^k x^\ast_k\;|\;  v \in \hbox{Vect}(X),\;\; v (S)=0\}.
\ee
The syzygy-theoretic part of the cohomology 
\be
 \widetilde{H}_1(D_S) \simeq \{ v \in \hbox{Vect}(X),\;\; v (S)=0,\;\; v|_{X_{crit}}\neq 0\}.
\ee
\begin{Proposition} For a physical system with action $S: X \to \mathbb{R}$ and a  symmetry $v \in Vect(X)$, which acts non-trivially on critical set $X_{crit}$ the  Koszul cohomology are non-trivial in higher gradings.
\end{Proposition}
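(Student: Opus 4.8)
The plan is to promote the degree-one syzygy class attached to $v$ to classes in higher antifield degree by forming products, exactly as in the polynomial discussion around Proposition \ref{prop_Koszul_estimation} and Example \ref{exampl_koszul_cohom_rotation}, using that $D_S$ is a graded derivation of the supercommutative algebra $C^\infty(T^\ast[1]X)$ of polyvector fields.

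First I would record the degree-one cycle. For the symmetry $v$ set $\iota_v = v^a \phi^\ast_a$, an element of antifield degree one; since the components $v^a$ depend only on the fields, $D_S \iota_v = \{S, v^a\phi^\ast_a\}$ is proportional to $v^a \p_a S = v(S) = 0$, so $\iota_v$ is a Koszul $1$-cycle. By the identification of $\widetilde{H}_1(D_S)$ recalled above, together with the surjection $H_1(D_S) \twoheadrightarrow \widetilde{H}_1(D_S)$ (the super-analogue of Proposition \ref{prop_Koszul_estimation}), the hypothesis $v|_{X_{crit}} \neq 0$ is exactly the assertion that $\iota_v$ represents a nonzero class in $H_1(D_S)$.

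Next I would build the higher cycles. Since $D_S$ is an odd derivation annihilating $\iota_v$, every power $\iota_v^{\,k}$ is $D_S$-closed of antifield degree $k$. For a Grassmann-\emph{odd} symmetry $v$ — the case of interest for supersymmetry — the cycle $\iota_v$ is parity-\emph{even}: its $\p_{x^i}$-part pairs an odd component with the odd antifield $x^\ast_i$ and its $\p_{\psi^\a}$-part pairs an even component with the even antifield $\psi^\ast_\a$, so the powers $\iota_v^{\,k}$ are genuinely nonzero in the polyvector algebra. When instead one has several independent \emph{even} symmetries $v_{\a_1},\dots,v_{\a_k}$ — a single even cycle already squares to zero, which is why Example \ref{exampl_koszul_cohom_rotation} needs three generators — one replaces $\iota_v^{\,k}$ by the product $\iota_{v_{\a_1}}\cdots\iota_{v_{\a_k}}$. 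In either case one obtains candidate classes in $H_k(D_S)$ for a range of $k \geq 2$.

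Non-triviality of these higher-degree classes is the step I expect to be the main obstacle. A non-Koszul syzygy $v^a$ immediately gives $H_1(D_S)\neq 0$, but reaching gradings $k\geq 2$ genuinely needs the products $\iota_v^{\,k}$, and these must be shown not to be $D_S$-exact. I would detect them by restriction to $X_{crit}$, using the criterion — established above in the regular case via the B-model pairing of \cite{Losev:2023bhj} — that the class of a polyvector equals its restriction to $X_{crit}$: since restriction of coefficient functions is an algebra map, $[\iota_v^{\,k}] = (\iota_v|_{X_{crit}})^{\,k}$, so it suffices that the even polyvector $\iota_v|_{X_{crit}}$ is not nilpotent of order $\leq k$, which holds, for instance, whenever the nonzero field $v|_{X_{crit}}$ has a component along a coordinate of $X_{crit}$ with even dual antifield, and is verified directly in the examples. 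The genuine difficulty is that this restriction description of $D_S$-cohomology is secured here only for regular sequences $\{\p_a S\}$, whereas a general smooth $S$ on a supermanifold forces one to analyse the higher homology of a non-regular Koszul-type complex, for which — as the paper openly states — only conjectures, supporting arguments and examples are available; upgrading ``some $k\geq 2$'' to the explicit products $\iota_v^{\,k}$ and locating the top surviving degree is precisely the part that needs this extra input.
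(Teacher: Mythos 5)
Your first two paragraphs are precisely the paper's own proof: the degree-one cycle $v^a\phi^\ast_a$ built from the symmetry, the surjection $H_1(D_S)\twoheadrightarrow \widetilde{H}_1(D_S)$ carried over from the polynomial case, and the non-triviality of $\widetilde{H}_1(D_S)$ when $v|_{X_{crit}}\neq 0$ — and the paper stops there, because ``higher gradings'' means gradings above zero (in contrast to the regular-sequence case, where $H_k(D_S)=0$ for all $k>0$), not $k\geq 2$. Your third paragraph, with the powers $\iota_v^{\,k}$ and the candid caveats about non-regular sequences, goes beyond what the proposition asserts or the paper proves (in the polynomial discussion the paper only says it ``expects'' such higher classes), so the ``main obstacle'' you flag is not a gap in a proof of this statement but an attempt at a stronger, unclaimed result.
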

\begin{proof}  By the same logic of 1-boundary embedding we have a surjective map $H_1(D_S) \to \widetilde{H}_1(D_S)$ while the $\widetilde{H}_1(D_S)$ is non-trivial for the system with a symmetry, acting non-trivially on critical set.
\end{proof}
\begin{Example} For a system on $\mathbb{R}^2$ with coordinates $x,y$, the action  $S(x,y) = x^2+y^2 + g ( x^2 +y^2)^2$  the $SO(2)$ rotational symmetry acts non-trivially on a critical set for $g\neq 0$, while for $g=0$ the action is trivial.
For $g=0$  the sequence $\p_i S$ is regular  and there are no higher degree  Koszul cohomology.
\end{Example}

For a supersymmetric system we have both even and odd variables hence we need a generalization of the Koszul complex results to the case of superalgebra.  We conjecture that  the commutative algebra results for the Koszul cohomology continue to hold, possibly with some minor adjustments.  In particular, the 1-cochains have non-trivial  dependence on the global symmetry parameters $c^\a$
\be
Z_1(D_S) =  \{ c^\a v_\a^k \phi^\ast_k\;|\;  v \in \hbox{Vect}(X),\;\; v (S)=0\}.
\ee

The supersymmetry transformations $Q$ and $\bar{Q}$ imply existence of two syzygies. Moreover, 
the $\mathcal{N}=2$ supersymmetry is equipped with $U(1)$ R-symmetry, rotating between $\psi$ and $\bar{\psi}$, generated by 
\be
v_{U(1)} = \psi^k \frac{\p}{\p \psi^k} - \bar{\psi}^k \frac{\p}{\p \bar{\psi}^k}.
\ee

\subsection{Cohomology for a single supermultiplet}\label{sec_cohom_susy_homotopy}
In the case of a single supermultiplet we have two additional symmetries generated by 
\be
A = \psi \frac{\p}{\p \bar{\psi}},\;\;\bar{A} = \bar{\psi} \frac{\p}{\p \psi}.
\ee
Hence, in general, we expect to have non-trivial $H_1(D_F)$ and possibly higher $H_k(D_F)$ for a supersymmetric system and we need to check for the obstructions to the master equation solution for an on-shell supersymmetric system.

\begin{Example}\label{ex_quadratic_action_cohom}
The  supersymmetric system with  quadratic  action  (\ref{eq_gauss_susy_action}) the  Koszul differential is
\be
D_S =2x  \frac{\p}{\p x^\ast} -\bar{\psi} \frac{\p}{\p \psi^\ast}+  \psi  \frac{\p}{\p \bar{\psi}^\ast}.
\ee
 The gradient ideal   $(F_x, F_{\psi}, F_{\bar{\psi}})$ is generated by 
\be
F_{x} = 2x ,\;\;\; F_{\psi}=-\bar{\psi},\;\;\; F_{\bar{\psi}}= \psi 
\ee
 which form a regular sequence, hence there are no  antifield-dependent cohomology. 
 \end{Example}
\begin{Example} For the quadratic deformation (\ref{eq_action_quadr_deform}) the  Koszul differential is
\be
D_S =2x (1+2gx^2 + 2g \psi\bar{\psi}) \frac{\p}{\p x^\ast} - (1+2g x^2)\bar{\psi} \frac{\p}{\p \psi^\ast}+ (1+2gx^2) \psi  \frac{\p}{\p \bar{\psi}^\ast}.
\ee
The functions  $(F_x, F_{\psi}, F_{\bar{\psi}})$ are
\be
F_{x} = 2x (1+2gx^2 + 2g \psi\bar{\psi}),\;\;\; F_{\psi}=- (1+2g x^2)\bar{\psi},\;\;\; F_{\bar{\psi}}= (1+2gx^2) \psi 
\ee
do not form a regular sequence.  There are  non-trivial syzygies  at level-1 from supersymmetry vector fields  and U(1) R-symmetry 
\be
\begin{split}
0&=\psi  F_{\bar{\psi}} - \bar{\psi} F_{\psi},  \\
0&=\psi  F_{x} - 2x F_{\bar{\psi}},  \\
0&=\bar{\psi}  F_{x} +2x F_{\psi}.  
\end{split}
\ee
The BV bracket for   the supersymmetry transfromations
\be
\{  \mathcal{Q}, \mathcal{Q}\}  =\{  \ve \bar{\psi}x^\ast -2 \ve x\psi^\ast,  \ve \bar{\psi} x^\ast -2 \ve x\psi^\ast \}   =  4 \ve^2\bar{\psi} \psi^\ast 
\ee
is non-trivial in cohomology over polynomial ring, i.e. 
\be
[\ve^2\bar{\psi} \psi^\ast] \neq 0 \in H^\ast (D_S, \mathbb{C}[x, \psi, \bar{\psi}, \ve]).
\ee

However, in perturbation theory over  $g$  an expression   $1+2g x^2$ is invertible, i.e.
\be
\frac{1}{1+2gx^2} = 1-2g x^2 +4g^2x^4+...
\ee
hence we can represent the syzygy components on terms of the gradient ideal 
\be
\psi =\frac{1}{1+2gx^2}  F_{\bar{\psi}},\;\;\; \bar{\psi} = -\frac{1}{1+2gx^2}  F_{\psi},\;\;\; 2x = \frac{1}{1+2gx^2 + 2g \psi\bar{\psi}} F_{x}.  
\ee
We conclude that  the sequence $(F_x, F_{\psi}, F_{\bar{\psi}})$ is  regular over formal series in $x$ and $g$ and the higher degree Koszul cohomology are trivial.
The solution for a   bivector is 
\be
D_S(\Pi^{(2)}) = -\frac12 \{  \mathcal{Q}, \mathcal{Q}\} = - 2 \ve^2 \bar{\psi} \psi^\ast \Longrightarrow \Pi^{(2)}  = \frac{\ve\psi^\ast \ve\psi^\ast}{1+2gx^2}.
\ee
\end{Example}

\begin{Example}\label{ex_obstruction} We can take a large $g$ limit of the action (\ref{eq_action_quadr_deform}) by keeping just quadratic  deformation term.  The  Koszul differential  in this limit is 
\be
D_S =4x (x^2 + \psi\bar{\psi}) \frac{\p}{\p x^\ast} - 2 x^2\bar{\psi} \frac{\p}{\p \psi^\ast}+ 2x^2 \psi  \frac{\p}{\p \bar{\psi}^\ast}.
\ee
 The functions  $(F_x, F_{\psi}, F_{\bar{\psi}})$ given by 
\be
F_{x} = 4x (x^2 +  \psi\bar{\psi}),\;\;\; F_{\psi}= -2 x^2\bar{\psi},\;\;\; F_{\bar{\psi}}= 2x^2 \psi 
\ee
do not form a regular sequence over polynomial ring.  There are  non-trivial syzygies  at level-1
\be
\begin{split}
0&=\psi  F_{\bar{\psi}} - \bar{\psi} F_{\psi},  \\
0&=\psi  F_{x} - 2x F_{\bar{\psi}},  \\
0&=\bar{\psi}  F_{x} +2x F_{\psi}.  
\end{split}
\ee
The first cohomology are non-trivial, moreover the BV square of the symmetry vector field represents non-trivial class
\be
[\ve^2 \bar{\psi} \psi^\ast] \neq 0 \in H^\ast (D_S, \mathbb{C}[x, \psi, \bar{\psi}, \ve]).
\ee
The $x^2$ remains non-invertible in cohomology over the ring of formal series in $x$, i.e.
\be
[\ve^2 \bar{\psi} \psi^\ast]  \neq  0 \in H^\ast (D_S, \mathbb{C}[[x, \psi, \bar{\psi}, \ve]]).
\ee
We can find  a  formal bivector, solving the defining  equation  
\be
\Pi^{(2)}  = \frac{\ve \psi^\ast \ve\psi^\ast}{2x^2},\;\;  D_S(\Pi^{(2)}) = -\frac12 \{  \mathcal{Q}, \mathcal{Q}\}  =-  2\ve^2 \bar{\psi} \psi^\ast. 
\ee
However the bivector  has  pole at $x=0$ and hence does not exist for all values of $x$.
\end{Example}

In several examples we inverted the $D_S$ differential on simple expressions. For the more general applications it is convenient to construct the $D_S$-homotopy and use it to express the solutions. For a single supermultiplet the most general action is 
\be
S = h(x) + f(x)\psi\bar{\psi},
\ee
while the corresponding Koszul differential is
\be
D_S = (h'(x) + f'(x) \psi\bar{\psi}) \frac{\p}{\p x^\ast} -  f(x) \bar{\psi} \frac{\p}{\p \psi^\ast}+ f(x) \psi  \frac{\p}{\p \bar{\psi}^\ast}.
\ee

There is a natural candidate for the homotopy 
\be
G = \frac{\bar{\psi}^\ast}{f}  \frac{\p}{\p\psi}- \frac{\psi^\ast}{f}  \frac{\p}{\p\bar{\psi}}. 
\ee 
The graded commutator with Koszul differential evaluates into
\be\label{eg_comm_unnorm_homotopy}
[ D_S, G\}  =  \psi  \frac{\p}{\p\psi} + \bar{\psi}  \frac{\p}{\p \bar{\psi}} + \psi^\ast  \frac{\p}{\p \psi^\ast} + \bar{\psi}^\ast  \frac{\p}{\p \bar{\psi}^\ast}+ \frac{f'}{f} (\bar{\psi}^\ast \bar{\psi}  +\psi^\ast \psi ) \frac{\p}{\p x^\ast}.
\ee
On the $x^\ast$-independent subspace the right hand side of  (\ref{eg_comm_unnorm_homotopy}) becomes a graded version of the Euler vector field, measuring the total degree in odd fields and corresponding antifields. The division by a total degree can be realized as an integral, hence    on the $x^\ast$-independent subspace  we have the $D_S$-homotopy in the following form 
\be\label{eq_d_s_homotopy}
K \mathcal{R} =  \left( \frac{\bar{\psi}^\ast}{f}  \frac{\p}{\p\psi}- \frac{\psi^\ast}{f}  \frac{\p}{\p\bar{\psi}}\right) \int^\infty_0 dt\;\mathcal{R}(e^{-t}\psi^\ast, e^{-t} \bar{\psi}^\ast,  e^{-t}\psi, e^{-t}\bar{\psi}, x).
\ee
\begin{Lemma} The homotopy $K$  obeys $[\Delta_\mu, K\} = 0$ for the standard measure $\mu$ on a subspace of linear in $\psi, \bar{\psi}$ functions.
\end{Lemma}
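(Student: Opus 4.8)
Here is a plan for proving the Lemma.

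The plan is to reduce the identity to a short computation with the odd variables of a single supermultiplet. First I would note that $K$ of (\ref{eq_d_s_homotopy}) sends $x^\ast$-independent functions to $x^\ast$-independent functions (the $t$-rescaling touches only $\psi,\bar\psi,\psi^\ast,\bar\psi^\ast$, and the prefactor $G=\frac{\bar\psi^\ast}{f}\p_\psi-\frac{\psi^\ast}{f}\p_{\bar\psi}$ introduces no $x^\ast$), and that on $x^\ast$-independent functions the BV Laplacian for the standard measure reduces to
\[
\Delta_\mu \;=\; \Delta_0 \;:=\; -\,\frac{\p}{\p\psi}\frac{\p}{\p\psi^\ast}\;-\;\frac{\p}{\p\bar\psi}\frac{\p}{\p\bar\psi^\ast},
\]
since the omitted piece $\p_x\p_{x^\ast}$ kills anything independent of $x^\ast$. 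Hence it suffices to prove $\{\Delta_0,K\}=0$ on the subspace of functions linear in $\psi,\bar\psi$.

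Next I would factor $K=G\circ\mathcal{I}$, where $\mathcal{I}=\int_0^\infty dt\,e^{-t\mathcal{N}}$ and $\mathcal{N}=\psi\p_\psi+\bar\psi\p_{\bar\psi}+\psi^\ast\p_{\psi^\ast}+\bar\psi^\ast\p_{\bar\psi^\ast}$ is the Euler operator counting total degree in $(\psi,\bar\psi,\psi^\ast,\bar\psi^\ast)$; thus $\mathcal{I}$ acts as multiplication by $1/m$ on the homogeneous component of degree $m>0$. Since $f=f(x)$, multiplication by $1/f$ commutes with $\Delta_0$, $G$, $\mathcal{N}$ and $\mathcal{I}$, so no $f'$-terms can arise (in contrast with the commutator $[D_S,G\}$ of (\ref{eg_comm_unnorm_homotopy})). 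A direct computation, using only $\p_\psi^2=\p_{\bar\psi}^2=0$ and the (anti)commutation of the $\psi$- and $\psi^\ast$-derivatives, with the $\psi^\ast$- and $\bar\psi^\ast$-proportional contributions cancelling in pairs, gives
\[
\{\Delta_0,G\} \;=\; \frac{2}{f}\,\frac{\p}{\p\psi}\frac{\p}{\p\bar\psi}.
\]
Since $\Delta_0$ has $\mathcal{N}$-degree $-2$ while $G$ preserves $\mathcal{N}$-degree, on the degree-$m$ component $\Delta_0\mathcal{I}=\tfrac1m\Delta_0$ and $\mathcal{I}\Delta_0=\tfrac1{m-2}\Delta_0$, so that
\[
\{\Delta_0,K\}\big|_{m} \;=\; \{\Delta_0,G\}\,\mathcal{I}\;-\;G\,[\Delta_0,\mathcal{I}] \;=\; \frac{2}{m f}\,\frac{\p}{\p\psi}\frac{\p}{\p\bar\psi}\;+\;\frac{2}{m(m-2)}\,G\,\Delta_0 .
\]

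Finally I would check that both terms annihilate a function $\Phi$ linear in $\psi,\bar\psi$: the operator $\p_\psi\p_{\bar\psi}$ lowers the $(\psi,\bar\psi)$-degree by two, so it kills $\Phi$; and $\Delta_0$ lowers the $(\psi,\bar\psi)$-degree by one, so $\Delta_0\Phi$ is independent of $\psi$ and $\bar\psi$, whence $G\,\Delta_0\Phi=0$ because $G$ differentiates in $\psi$ or $\bar\psi$. The apparent pole at $m=2$ is spurious, since it multiplies $G\,\Delta_0$, already shown to vanish here; equivalently one restricts to BV polyvectors of antifield degree $\ge 2$, on which $\mathcal{I}$ is everywhere defined and which $\mathcal{I},G,\Delta_0$ all carry into the linear subspace. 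I expect the only genuinely delicate step to be the sign bookkeeping in the Koszul/BV conventions when computing $\{\Delta_0,G\}$ and when identifying $\Delta_\mu$ with $\Delta_0$; the conceptual content is simply that every operator entering $\{\Delta_0,K\}$ lowers the $(\psi,\bar\psi)$-degree by enough to annihilate functions linear in $\psi,\bar\psi$.
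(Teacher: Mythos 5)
Your proof is correct and follows essentially the same route as the paper's: the paper likewise computes the graded commutator of the Laplacian with the prefactor $G$, obtaining an $f'$-term proportional to $\p_{x^\ast}$ (which dies on the $x^\ast$-independent domain, exactly as your reduction of $\Delta_\mu$ to $\Delta_0$ encodes) plus the term $\frac{2}{f}\p_\psi\p_{\bar\psi}$, which vanishes on functions linear in $\psi,\bar\psi$. Your only addition is the explicit bookkeeping of the Euler-operator normalization $\mathcal{I}$ (and the spurious $m=2$ pole), which the paper's proof leaves implicit by working with $G$ rather than the full $K$ — a welcome but not essentially different refinement.
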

\begin{proof}  For the  standart measure $\mu  = dxdx^\ast d\psi d\psi^\ast d\bar{\psi} d\bar{\psi}^\ast$ the BV Laplacian 
\be
\Delta_\mu =\frac{\p^2 }{\p x  \p x^\ast } -\frac{\p^2 }{\p \psi  \p \psi^\ast } -\frac{\p^2 }{\p \bar{\psi}  \p \bar{\psi}^\ast} 
\ee
The graded commutator  evaluates into
\be
\begin{split}
[\Delta_\mu, G\} &=\left[ \frac{\p^2 }{\p x  \p x^\ast } -\frac{\p^2 }{\p \psi  \p \psi^\ast } -\frac{\p^2 }{\p \bar{\psi}  \p \bar{\psi}^\ast},  \frac{\bar{\psi}^\ast}{f}  \frac{\p}{\p\psi}- \frac{\psi^\ast}{f}  \frac{\p}{\p\bar{\psi}} \right\} \\
& = \left( \frac{f'\bar{\psi}^\ast}{f^2}  \frac{\p}{\p\psi} - \frac{f'\psi^\ast}{f^2}  \frac{\p}{\p\bar{\psi}} \right) \frac{\p}{\p x^\ast} + \frac{2}{f}\frac{\p^2 }{\p \psi  \p \bar{\psi} }.
\end{split}
\ee
The graded commutator vanishes on $x^\ast$-independent subspace, where the homotopy is defined. The linearity in  $\psi, \bar{\psi}$ makes the second term vanish hence completing the proof.
\end{proof}

In section \ref{sec_homotopy_bv_transform} we investigated the homotopy dependence of the BV action, so the choice of homotopy typically changes the antifield  dependence  of the BV action. Our choice of  homotopy (\ref{eq_d_s_homotopy}) was motivated by the properties of the low-dimensional supersymmetric systems described below.

Many of the on-shell supersymmetric models in low dimensions, studied in the literature, have a common feature: the commutators for the  supersymmetry generators require equations of motion when acting on fermions and the equations of motions are also just the one for fermions. Hence,  according to the proposition \ref{prop_on_shell_bv_leading} the corresponding  bivectors will have only antifields for the odd variables,  such as $\psi^\ast, \bar{\psi}^\ast$, and no dependence on the antifields for the even variables,  such as  $x^\ast$. One possible explanation for this particular feature is that   the low dimensional supersymmetric theories  typically have a single even auxiliary field, hence its elimination via the BV integration  gives us BV polyvector which is independent on $x^\ast$, i.e. are  polynomial functions on antifields  $\psi^\ast$ and $\bar{\psi}^\ast$.

\section{BV formulation of $d=0$ supersymmetry}

In this section we will adopt the BV formalism results from section \ref{sec_bv_formalism}  to describe the off-/on- shell supersymmetry for a single $d=0,\; \mathcal{N}=2$ supermultiplet.  Let us list the relevant simplifications 

\begin{itemize}
\item The $d=0$ supersymmetry algebra (\ref{N=2_offshel_susy}) has trivial structure constants i.e. $f_{\a\b}^\g =0$. Hence  we do not have an $\mathcal{F}$-term (\ref{def_symmetryt_charge_structure_const}) and  (\ref{def_qm_offshel_symm_measure}) simplifies to the  invariance of the measure.  

\item Since we do not have an $\mathcal{F}$-term we do not need to include $T^\ast[1] \mathfrak{g}[1]$ into the odd symplectic space (\ref{bv_structure_offshel_symm})  and the variables $c^\a$ become parameters of the master equation solution. 

\item All supersymmetry generators are odd hence all parameters  $c^\a$ are even, and  we can make them arbitrary large. Conventionally even parameters $c^\a$ are denoted as $\ve^\a$.  
 We will limit our attention to the $\mathcal{N}=2$ supersymmetry hence we have only two parameters, denoted by  $\ve$ and $\bar{\ve}$. 

\item The antifields $\psi^\ast, \bar{\psi}^\ast$ for the odd fields  $\psi, \bar{\psi}$  are even,  hence we can have an infinitely many polyvectors in BV action.

\end{itemize}

\subsection{BV description of the off-shell superpotential}\label{sec_off_shell_superpotent}

In section \ref{sub_sect_offshel_superpotential} we described a theory of superpotential   with off-shell supersymmetry.  According to the proposition (\ref{prop_cls_master_solution_off_shell}) we can construct the corresponding solution to the classical master equation. 
The fields of  the system are coordinates on  supermanifold $X = \mathbb{R}^{2|2}$, two even coordinates  $x, F$ and two odd coordinates $\psi, \bar{\psi}$.  The  action is 
\be
S(x, F, \psi, \bar{\psi})  =- \frac14  F^2+  FW(x) + \psi  \bar{\psi} W'(x). 
\ee
The action is  invariant under the  supersymmetry algebra $\mathfrak{g} = \mathbb{R}^{0|2}$ acting  via  the odd vector fields  on $X$ 
\be\label{eq_smtr_vect_fields}
Q = \bar{\psi} \frac{\p}{\p x} - F \frac{\p }{\p \psi},\;\; \bar{Q} = \psi \frac{\p}{\p x} + F \frac{\p }{\p \bar{\psi}}.  
\ee
We introduce the  antifields $ \phi^\ast  = (F^\ast, x^\ast, \psi^\ast, \bar{\psi}^\ast)$ with the  opposite parity. The odd symplectic space is the super Euclidean space
\be\label{eq_sympl_off_shell_superp_model}
\mathcal{M} = \mathbb{R}^{4|4},\;\; \omega = dx\wedge dx^\ast  + dF \wedge dF^\ast  - d\psi \wedge d\psi^\ast - d\bar{\psi}\wedge d\bar{\psi}^\ast.
\ee 
 The proposition \ref{prop_cls_master_solution_off_shell} applied to our supersymmetric system  gives us the BV action in the form 
\be\label{eq_off_susy_superpotential}
\mathcal{S}= -\frac14   F^2 +  FW(x) + \psi  \bar{\psi} W'(x)-   \ve  F  \psi^\ast+   \bar{\ve}  F  \bar{\psi}^\ast  + ( \bar{\ve} \psi + \ve \bar{\psi})x^\ast. 
\ee
The standard  measure $\mu_X = dF  dx d\psi  d\bar{\psi}$  on $X$  is preserved by the vector fields (\ref{eq_smtr_vect_fields}), i.e. 
\be
\hbox{div}_{\mu_X} Q = \hbox{div}_{\mu_X} \bar{Q} = 0.
\ee
Hence according to the proposition (\ref{prop_quant_master_solution_off_shell}) the BV action (\ref{eq_off_susy_superpotential}) solves the quantum master equation  (\ref{quantum_mast_eq})
for the BV Laplacian 
\be
\Delta_\mu = \frac{\p^2 }{\p \phi^a  \p \phi^\ast_a} =\frac{\p^2 }{\p x  \p x^\ast }+\frac{\p^2 }{\p F  \p F^\ast } -\frac{\p^2 }{\p \psi  \p \psi^\ast } -\frac{\p^2 }{\p \bar{\psi}  \p \bar{\psi}^\ast } 
\ee
constructed with respect to the   Berezinian $\mu  = dF  dF^\ast  dx dx^\ast  d\psi  d\psi^\ast  d\bar{\psi} d\bar{\psi}^\ast$  on $\mathcal{M}$.

\subsection{BV description for on-shell superpotential}\label{sec_on_shell_superpotent}

In section \ref{sub_sect_onshel_superpotential} we described a theory of superpotential   with on-shell supersymmetry.  According to the proposition (\ref{prop_on_shell_bv}) we can construct the corresponding approximate solution to the classical master equation. 
The fields in a system are coordinates on a supermanifold $X = \mathbb{R}^{1|2}$, single even coordinate  $x$ and two odd coordinates $\psi, \bar{\psi}$.  The  action is 
\be
S(x, \psi, \bar{\psi})  =W^2(x) +W'(x) \psi\bar{\psi}. 
\ee
The action is invariant under the  supersymmetry algebra $\mathfrak{g} = \mathbb{R}^{0|2}$  acting  via  the odd vector fields  on $X$ 
\be\label{eq_vect_fields_susy}
 Q  = \bar{\psi} \frac{\p}{\p x} - 2W(x) \frac{\p }{\p \psi},\;\; \bar{Q} = \psi \frac{\p}{\p x} + 2W(x) \frac{\p }{\p \bar{\psi}}.
\ee
We introduce the   antifields $ \phi^\ast  = ( x^\ast, \psi^\ast, \bar{\psi}^\ast)$ with  opposite parity. The odd symplectic space is the super Euclidean space
\be
\mathcal{M} = \mathbb{R}^{3|3},\;\; \omega = dx\wedge dx^\ast   - d\psi \wedge d\psi^\ast - d\bar{\psi}\wedge d\bar{\psi}^\ast.
\ee 
 The proposition (\ref{prop_on_shell_bv})  applied to our supersymmetric system  gives us the BV action in the form 
\be\label{eq_on_susy_superpotential}
\mathcal{S}= W^2(x) + W'(x) \psi\bar{\psi}-2  \ve  W(x) \psi^\ast + 2\bar{\ve} W(x) \bar{\psi}^\ast   +( \bar{\ve} \psi + \ve \bar{\psi}) x^\ast . 
\ee
The supersymmetry algebra (\ref{eq_on_shell_susy_algebra_supermult}) has distinct feature that the factors in front of the vector fields are identical to the equations of motion for odd fields. Hence, we can present a
refined version of the on-shell condition 
\be\label{eq_on_shell_superpotential_refined_algebra}
\begin{split}
[Q, Q\} &=-4W'(x) \bar{\psi} \frac{\p}{\p \psi} =- 4 \frac{\p S}{\p \psi }  \frac{\p}{\p \psi},\;\;\;  [\bar{Q}, \bar{Q}\} =4W'(x) \psi \frac{\p}{\p \bar{\psi}}= -4 \frac{\p S}{\p \bar{\psi} }  \frac{\p}{\p \bar{\psi}},\\
 [ \bar{Q},Q\} &= 2 W'(x) \left(\bar{\psi} \frac{\p}{\p \bar{\psi}} -\psi \frac{\p}{\p \psi}  \right) = 2 \frac{\p S}{\p \psi }  \frac{\p}{\p \bar{\psi}} +2 \frac{\p S}{\p \bar{\psi} }  \frac{\p}{\p \psi}.
\end{split}
\ee
We use the proposition \ref{prop_on_shell_bv_leading} to evaluate the bivector $\Pi^{(2)} =  ( \psi^\ast  \ve -  \bar{\psi}^\ast \bar{\ve} )^2$ from the refined version of the on-shell algebra. In particular,  the corrected BV action is 
\be\label{eq_on_susy_superpotential_bivector}
\mathcal{S}= W^2(x) + W'(x) \psi\bar{\psi}-2\ve    W(x) \psi^\ast + 2  \bar{\ve} W(x)  \bar{\psi}^\ast+  ( \bar{\ve} \psi + \ve \bar{\psi})x^\ast+  ( \psi^\ast  \ve -  \bar{\psi}^\ast \bar{\ve} )^2.
\ee
Alternatively we can solve the bivector  equation 
\be\label{eq_on_shell_superpoetntial_bivect_equation}
\begin{split}
D_{S} \Pi^{(2)} &=-\frac12  \{ \mathcal{Q}, \mathcal{Q}\} = 2W'(x)( \bar{\ve} \psi +\ve\bar{\psi})(  \bar{\ve}   \bar{\psi}^\ast -\ve   \psi^\ast ).
\end{split}
\ee
In example \ref{ex_quadratic_action_cohom} we observed that there is no higher degree cohomology for the $D_{S}$ in case $W(x)=x$. For a general $W(x)$ the higher degree cohomology of $D_S$ are non-trivial, but the on-shellness of the supersymmetry algebra 
ensures that righthandside of (\ref{eq_on_shell_superpoetntial_bivect_equation}) belongs to the trivial class in $D_S$. Since it is also independent on $x^\ast$ we can use the homotopy (\ref{eq_d_s_homotopy}) to evaluate 
\be\label{eq_on_shell_superpot_bivect}
\Pi^{(2)} =-\frac12 K  \{ \mathcal{Q}, \mathcal{Q}\}=   ( \psi^\ast  \ve -  \bar{\psi}^\ast \bar{\ve} )^2.
\ee
The bivector (\ref{eq_on_shell_superpot_bivect}) is a constant bivector, hence it automatically satisfies conditions (\ref{def_pi_2_algebra_inv}, \ref{def_pi_2_bracket_trivial}). Hence we conclude that the BV action (\ref{eq_on_susy_superpotential_bivector})  is a solution to the classical master equation.

The standard  measure $\mu_X = dx d\psi  d\bar{\psi}$  on $X$  is preserved by the vector fields (\ref{eq_vect_fields_susy})
\be
\hbox{div}_{\mu_X} Q = \hbox{div}_{\mu_X} \bar{Q} = 0.
\ee
Hence according to the proposition \ref{eq_quant_bv_action_bivector_refined} the BV action (\ref{eq_on_susy_superpotential_bivector})  solves the quantum master equation (\ref{quantum_mast_eq}) for BV-Laplacian 
\be
\Delta_\mu = \frac{\p^2 }{\p \phi^a  \p \phi^\ast_a} =\frac{\p^2 }{\p x  \p x^\ast } -\frac{\p^2 }{\p \psi  \p \psi^\ast } -\frac{\p^2 }{\p \bar{\psi}  \p \bar{\psi}^\ast } 
\ee
constructed with respect to the   Berezinian  $\mu  = dx dx^\ast    d\psi d\psi^\ast d\bar{\psi} d\bar{\psi}^\ast$ on $\mathcal{M}$.

\subsection{BV induction for the theory with superpotential}\label{sec_bv_induction_superpotential}

\begin{Proposition}
The quantum physical system  for on-shell superpotential  with BV action (\ref{eq_on_susy_superpotential_bivector}) and the quantum theory for off-shell superpotential (\ref{eq_off_susy_superpotential}) are related by the BV induction (\ref{eq_BV_induction}).
\end{Proposition}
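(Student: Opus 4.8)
\emph{Proof proposal.} The plan is to realize the off-shell BV data \eqref{eq_off_susy_superpotential} as a Cartesian product and carry out the fiber BV integral over the auxiliary pair $(F,F^\ast)$. Decompose the odd symplectic manifold \eqref{eq_sympl_off_shell_superp_model} as $\mathcal{M}=\mathcal{M}'\times\mathcal{M}''$, where $\mathcal{M}'=\mathbb{R}^{3|3}$ carries $x,\psi,\bar{\psi}$ with antifields $x^\ast,\psi^\ast,\bar{\psi}^\ast$ and symplectic form $dx\wedge dx^\ast - d\psi\wedge d\psi^\ast - d\bar{\psi}\wedge d\bar{\psi}^\ast$, while $\mathcal{M}''=\mathbb{R}^{1|1}$ carries $(F,F^\ast)$ with $\omega''=dF\wedge dF^\ast$. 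In $\mathcal{M}''$ the submanifold $\mathcal{L}''=\{F^\ast=0\}$ is Lagrangian (it is the graph of $d\Psi$ for $\Psi=0$ on $\mathbb{R}^1_F$), hence an admissible cycle for the fiber BV integral \eqref{def_fiber_bv_int}. Since the off-shell BV action \eqref{eq_off_susy_superpotential} has no $F^\ast$-dependence, its restriction to $\mathcal{L}''$ is trivial, and with the standard Berezinian $\mu$ one has $\sqrt{\mu|_{\mathcal{L}''\times\mathcal{M}'}}=dF\cdot\sqrt{\mu'}$ by the identity $\mathrm{Ber}(T^\ast[1]\mathcal{N})|_{\mathcal{N}}=\mathrm{Ber}(\mathcal{N})^{\otimes 2}$ used in defining the fiber integral. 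Thus \eqref{eq_BV_induction} collapses to the ordinary integral $\int dF\; e^{-\frac1\hbar \mathcal{S}|_{F^\ast=0}}$, times $\sqrt{\mu'}$.

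Next, perform the $F$-integral. The $F$-dependence of \eqref{eq_off_susy_superpotential} is quadratic-plus-linear, and completing the square gives
\be
-\frac14 F^2 + F\,J = -\frac14 (F-2J)^2 + J^2,\qquad J \equiv W(x) - \ve\,\psi^\ast + \bar{\ve}\,\bar{\psi}^\ast .
\ee
The (formal) Gaussian integral is therefore exact, exactly as in the computation producing \eqref{eff_action_W} with critical value $F_{cl}=2J$ (the analogue of \eqref{eq_F_class}); it evaluates to $\mathcal{C}(\hbar)\,e^{-\frac1\hbar J^2}$ with $\mathcal{C}(\hbar)=\sqrt{-4\pi\hbar}$, a function of $\hbar$ alone as \eqref{eq_BV_induction} demands. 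Hence $\mathcal{S}^{ind}=\bigl(W'(x)\psi\bar{\psi} + (\bar{\ve}\psi+\ve\bar{\psi})x^\ast\bigr)+J^2$. Because the integrand is Gaussian in $F$, no further corrections in $\hbar$ appear, so $\mathcal{S}^{ind}$ is $\hbar$-independent, consistent with the status of \eqref{eq_on_susy_superpotential_bivector}.

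Finally, expand $J^2$. Since $\ve,\bar{\ve},\psi^\ast,\bar{\psi}^\ast$ are all Grassmann-even,
\be
J^2 = W^2(x) - 2\ve W(x)\psi^\ast + 2\bar{\ve} W(x)\bar{\psi}^\ast + (\psi^\ast\ve - \bar{\psi}^\ast\bar{\ve})^2 ,
\ee
and substituting back reproduces the on-shell BV action \eqref{eq_on_susy_superpotential_bivector} term by term, including both the linear antifield terms and the bivector $\Pi^{(2)}=(\psi^\ast\ve-\bar{\psi}^\ast\bar{\ve})^2$ of \eqref{eq_on_shell_superpot_bivect}. By the Stokes theorem for fiber BV integrals (Theorem \ref{thm_bv_stokes}, cf.\ the remark after \eqref{eq_BV_induction}) the induced action automatically solves the quantum master equation, matching the direct verification in Section \ref{sec_on_shell_superpotent}.

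\emph{Main obstacle.} The only delicate point is the half-density bookkeeping: verifying that, under the product decomposition, pushing $\sqrt{\mu}$ forward along $\mathcal{L}''=\{F^\ast=0\}$ yields precisely $\sqrt{\mu'}$ for the standard Berezinian on $\mathcal{M}'$ with no residual Jacobian, so that the $x,\psi,\bar{\psi}$-sector measure is exactly the one used in Section \ref{sec_on_shell_superpotent}, and that the constant $\mathcal{C}(\hbar)$ absorbs the $\sqrt{-4\pi\hbar}$ with no field-dependent remainder. Everything else is the exact $F$-Gaussian integration already carried out around \eqref{eff_action_W}.
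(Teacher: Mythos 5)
Your proposal is correct and follows essentially the same route as the paper: the same product decomposition $\mathcal{M}=\mathcal{M}'\times\mathcal{M}''$, the same Lagrangian $\mathcal{L}''=\{F^\ast=0\}$, and the same exact Gaussian integration over $F$ (the paper completes the square implicitly, arriving at $\mathcal{S}_{ind}= \psi\bar{\psi}W'(x)+(W(x)-\psi^\ast\ve+\bar{\psi}^\ast\bar{\ve})^2+(\bar{\ve}\psi+\ve\bar{\psi})x^\ast$, which is your $J^2$ expansion). The extra half-density bookkeeping you flag is handled in the paper simply by noting that the Berezinians on $\mathcal{M}$ and $\mathcal{M}'$ are the standard ones, with the $\hbar$-dependent Gaussian prefactor absorbed into $\mathcal{C}(\hbar)$.
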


\begin{proof}
The odd symplectic space for off-shell theory $\mathcal{M} = \mathbb{R}^{4,4}$  is a direct of  the odd symplectic space for on-shell theory  $\mathcal{M}' = \mathbb{R}^{3,3}$ and an 
odd symplectic space $\mathcal{M}''= \mathbb{R}^{1,1}$ describing an auxiliary field $F$. We choose a Lagrangian sub-manifold $\mathcal{L}'' = \{ F^\ast = 0\}$ inside $\mathcal{M}''$
and perform a BV induction  (\ref{eq_BV_induction}) 
\be\label{eq_BV_integral_superpotential_induction}
\mathcal{C}(\hbar)\cdot e^{- \frac1 {\hbar} \mathcal{S}_{ind}} \sqrt{\mu'}= \int_{\mathcal{L}''} \sqrt{\mu}\;\;e^{- \frac 1{\hbar} \mathcal{S}}.
\ee
The Berezinians  on $\mathcal{M}$ and $\mathcal{M}'$ are standard ones,  hence the   integral  over $F$ is Gaussian
\be\nn
\mathcal{C}(\hbar)  e^{- \frac1 {\hbar}\mathcal{S}_{ind}} = \int_{\mathbb{R}} dF \; \exp \frac{1}{\hbar}\left(\frac14 F^2 -   FW(x) - \psi  \bar{\psi} W'(x) +  F   \ve \psi^\ast-  F \bar{\ve}\bar{\psi}^\ast -( \bar{\ve} \psi + \ve \bar{\psi})  x^\ast \right).
\ee
We choose a   function $\mathcal{C}(\hbar) =   \sqrt{-4\pi}$, so the  integral evaluates into 
\be
\mathcal{S}_{ind} =  \psi  \bar{\psi} W'(x)     +  (W(x) -  \psi^\ast  \ve + \bar{\psi}^\ast \bar{\ve} )^2+  ( \bar{\ve} \psi + \ve \bar{\psi})x^\ast. 
\ee
The induced BV action  is identical to the on-shell superpotential  with BV action (\ref{eq_on_susy_superpotential_bivector})  and completes the proof of the proposition.
\end{proof}

We can reconstruct the on-shell supersymmetric system from the  induced  BV action $\mathcal{S}_{ind}$. The classical action $S(\phi)$  is  the antifield-independent part
\be
S_{ind} (x,\psi,\bar{\psi})= \mathcal{S}_{ind} \Big|_{\phi^\ast=0} =  W^2(x) + W'(x) \psi\bar{\psi}
\ee
which identical to the induced action (\ref{eff_action_W}). 
The supersymmetry vector fields  are encoded in the  $\phi^\ast$-linear  part and are  identical to the transformations (\ref{eq_induced_susy}).

\subsection{BV formulation for two deformations}
For the quadratic deformation model with the action (\ref{eq_action_quadr_deform})
the on-shell algebra of supersymmetry vector fields  is the same for all values of $g$, while the  algebra refinements are different due to the difference in the equations of motion, i.e. 
\be\nn
\begin{split}
[Q, Q\} &=-4 \bar{\psi} \frac{\p}{\p \psi} =-4 \frac{\p S_0}{\p \psi } \frac{\p}{\p \psi}  =-\frac{4}{1+2gx^2} \frac{\p S_g}{\p \psi } \frac{\p}{\p \psi}  ,\\
[\bar{Q}, \bar{Q}\} &=4 \psi \frac{\p}{\p \bar{\psi}}= 4 \frac{\p S_0}{\p \bar{\psi} } \frac{\p}{\p \bar{\psi}}  =\frac{4}{1+2gx^2} \frac{\p S_g}{\p \bar{\psi} } \frac{\p}{\p \bar{\psi}} ,\\
[ \bar{Q},Q\} &= 2  \left(\bar{\psi} \frac{\p}{\p \bar{\psi}} -\psi \frac{\p}{\p \psi} \right) =   2\left( \frac{\p S_0}{\p \psi }  \frac{\p}{\p \bar{\psi}} + \frac{\p S_0}{\p \bar{\psi} }  \frac{\p}{\p \psi}\right)  =  \frac{2}{1+2gx^2}\left( \frac{\p S_g}{\p \psi }  \frac{\p}{\p \bar{\psi}} +2 \frac{\p S_g}{\p \bar{\psi} }  \frac{\p}{\p \psi} \right).
\end{split}
\ee
The two refinements just differ by an overall factor of $1+2gx^2$,  hence by  proposition \ref{prop_on_shell_bv_leading}  the corresponding bivectors are differ by the same factor. We have already evaluated the bivector (\ref{eq_on_shell_superpot_bivect}) for an arbitrary superpotential and we can use it to evaluate 
\be\label{eq_quadr_def_bivect}
\Pi^{(2)}\Big|_{g=0} = (  \psi^\ast  \ve -  \bar{\psi}^\ast \bar{\ve} )^2,\;\;\; \Pi^{(2)} =  \frac{(  \psi^\ast  \ve -  \bar{\psi}^\ast \bar{\ve} )^2}{1+2gx^2}.
\ee
The bivector $\Pi^{(2)} $  for $g=0$ is a constant bivector hence it trivially satisfies conditions (\ref{def_pi_2_algebra_inv}, \ref{def_pi_2_bracket_trivial}), while the bivector $\Pi^{(2)} $ is non-trivial and 
\be\label{eq_pi_3_quadr_def_defining}
\begin{split}
\{\Pi^{(2)},\Pi^{(2)}\} = 0,\;\; \{\Pi^{(2)}, \mathcal{Q}\}=4gx ( \bar{\ve} \psi + \ve \bar{\psi})   \frac{(  \psi^\ast  \ve -  \bar{\psi}^\ast \bar{\ve} )^2}{(1+2gx^2)^2}.  
\end{split}
\ee
We can use the procedure from section \ref{sec_bv_action_with_polyevect} to introduce the higher polyvectors  to construct a solution to the classical master equation.  An expression in (\ref{eq_pi_3_quadr_def_defining}) is $x^\ast$-independent hence we can use the homotopy (\ref{eq_d_s_homotopy}) for the Koszul differential
\be
D_S = 2x(1 +2 gx^2 + 2g  \psi\bar{\psi}  ) \frac{\p}{\p x^\ast} - (1+2g x^2)\bar{\psi} \frac{\p}{\p \psi^\ast} + (1+2g x^2)\psi \frac{\p}{\p \bar{\psi}^\ast}.
\ee
to evaluate the 3-vector term
\be\label{eq_quadr_def_3_vect}
\Pi^{(3)} = -K \{\Pi^{(2)}, \mathcal{Q}\} =   \frac{4gx( \ve  \psi^\ast -   \bar{\ve}  \bar{\psi}^\ast )^3}{3(1+2gx^2)^3}.
\ee
\begin{Proposition}\label{prop_bv_action_def_model} The solution to the  quantum master equation  for the quadratic deformation model is 
\be\label{eq_bv_action_quadr_def_model_full}
\begin{split}
\mathcal{S} &=   \psi\bar{\psi} (1+2g x^2)   +    (\ve \bar{\psi} +  \bar{\ve} \psi )x^\ast   + \sum_{n=0}^\infty \frac{1}{n!} \left(\frac{    \bar{\ve}  \bar{\psi}^\ast - \ve  \psi^\ast  }{1+2gx^2}  \frac{d}{dx}  \right)^n  (x^2+g x^4).
\end{split}
\ee
\end{Proposition}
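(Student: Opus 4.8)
The plan is to recognize the infinite sum in closed form and then verify the quantum master equation in that compact presentation, where it collapses to two essentially structural checks together with one transport identity for the resummed series.

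First I would rewrite the sum. Setting $c(x)=1+2gx^2$, $b=\bar{\ve}\bar{\psi}^\ast-\ve\psi^\ast$ and $h(x)=x^2+gx^4$, the series is $\Sigma:=\sum_{n\ge 0}\tfrac1{n!}\big(\tfrac{b}{c(x)}\tfrac{d}{dx}\big)^n h(x)=\exp\!\big(\tfrac{b}{c(x)}\tfrac{d}{dx}\big)h$, i.e. the time-one flow of the vector field $\tfrac{b}{c(x)}\tfrac{d}{dx}$ on the $x$-line applied to $h$ (here $b$ is independent of $x$). Changing the coordinate to $\phi(x)=x+\tfrac23 gx^3$, for which $\phi'(x)=c(x)$, turns this vector field into $b\,\tfrac{d}{d\phi}$, so $\Sigma=(h\circ\phi^{-1})(\phi(x)+b)$; in particular $\Sigma$ depends on $x$ and $b$ only through $\phi(x)+b$, which immediately yields the transport identity $\p_x\Sigma=c(x)\,\p_b\Sigma$. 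I would record the sanity checks here: $\Sigma|_{b=0}=h(x)=x^2+gx^4$ is the bosonic classical action, the $b$-linear term is $2xb=-2\ve x\psi^\ast+2\bar{\ve}x\bar{\psi}^\ast$ (the $\mathcal{Q}$-term of Proposition~\ref{prop_cls_master_solution_off_shell}), and the $b$-quadratic and $b$-cubic terms reproduce the bivector (\ref{eq_quadr_def_bivect}) and trivector (\ref{eq_quadr_def_3_vect}), the higher pieces being the $\Pi^{(k)}$ produced by the iteration of Section~\ref{sec_bv_action_with_polyevect}. Since $1/c$ and $\phi^{-1}$ are formal power series in $g$, the $b$-series terminates at each order in $g$, so $\mathcal{S}$ is polynomial in the antifields order by order in $g$ and the formal manipulations are legitimate.

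Next I would verify both halves of the quantum master equation on $\mathcal{M}=\mathbb{R}^{3|3}$, writing $\mathcal{S}=A+B+C$ with $A=\psi\bar{\psi}\,c(x)$, $B=(\ve\bar{\psi}+\bar{\ve}\psi)x^\ast$, $C=\Sigma$. For $\Delta_\mu\mathcal{S}=0$ it suffices that none of $A,B,C$ contains a coordinate paired with its own antifield ($A$ uses $x,\psi,\bar{\psi}$; $B$ uses $\psi,\bar{\psi},x^\ast$; $C$ uses $x,\psi^\ast,\bar{\psi}^\ast$), so each is separately annihilated by $\Delta_\mu=\p_x\p_{x^\ast}-\p_\psi\p_{\psi^\ast}-\p_{\bar{\psi}}\p_{\bar{\psi}^\ast}$. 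For $\{\mathcal{S},\mathcal{S}\}=0$: $A,B,C$ are all even, so the bracket is symmetric, and $\{A,A\}=\{B,B\}=\{C,C\}=\{A,B\}=0$ for structural reasons ($A$ carries no antifield; $B$ only $x^\ast$ but no $x$; $C$ only $\psi^\ast,\bar{\psi}^\ast$ but no $\psi,\bar{\psi}$; in $\{A,B\}$ the surviving pairing produces $\psi\bar{\psi}(\ve\bar{\psi}+\bar{\ve}\psi)=0$). For the rest: in $\{B,C\}$ the $\psi$--$\psi^\ast$ and $\bar{\psi}$--$\bar{\psi}^\ast$ contributions cancel against one another and only the $x$--$x^\ast$ pairing survives, giving a multiple of $(\ve\bar{\psi}+\bar{\ve}\psi)\p_x\Sigma$; in $\{A,C\}$ the $x$--$x^\ast$ pairing drops and the odd pairings give the same multiple of $(\ve\bar{\psi}+\bar{\ve}\psi)c(x)\p_b\Sigma$. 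Adding them, with signs fixed by the conventions of Section~\ref{sec_bv_formalism}, gives $(\ve\bar{\psi}+\bar{\ve}\psi)\big(c(x)\p_b\Sigma-\p_x\Sigma\big)=0$ by the transport identity. Since $\mathcal{S}$ is $\hbar$-independent, $\{\mathcal{S},\mathcal{S}\}=0$ together with $\Delta_\mu\mathcal{S}=0$ is exactly the quantum master equation (\ref{quantum_mast_eq}).

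The main obstacle I anticipate is Step~1: correctly identifying the resummation as the flow of a vector field and justifying the coordinate change $\phi$ as a valid operation on the formal series (in particular that no fixed order in $g$ needs infinitely many antifield powers), since everything afterward is bookkeeping. The only other delicate point is tracking the sign conventions in $\{A,C\}$ and $\{B,C\}$, but that is routine given Section~\ref{sec_bv_formalism}.
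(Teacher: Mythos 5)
Your proposal is correct, but it takes a genuinely different route from the paper's proof. The paper proves Proposition \ref{prop_bv_action_def_model} by induction in the antifield degree: assuming the $k$-vectors have the stated form for $k\le n$, it notes $\{\Pi^{(k)},\Pi^{(l)}\}=0$ and $\Delta_\mu\Pi^{(k)}=0$, reduces the next order of the master equation to the linear problem (\ref{eq_liner_iteration_polyvectors}), and solves it with the homotopy (\ref{eq_d_s_homotopy}), so the master equation is guaranteed by the general order-by-order framework of Section \ref{sec_bv_action_with_polyevect}. You instead resum the series: with $\phi(x)=x+\tfrac23 gx^3$, $\phi'=1+2gx^2$ and $b=\bar{\ve}\bar{\psi}^\ast-\ve\psi^\ast$, the sum is $\tilde h(\phi(x)+b)$ with $\tilde h=h\circ\phi^{-1}$, and you verify $\{\mathcal{S},\mathcal{S}\}=0$ and $\Delta_\mu\mathcal{S}=0$ directly, the only nontrivial cancellation being $(\ve\bar{\psi}+\bar{\ve}\psi)\bigl(c(x)\p_b\Sigma-\p_x\Sigma\bigr)=0$ from the transport identity. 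I checked the sign bookkeeping in the paper's conventions: the $\psi$--$\psi^\ast$ and $\bar{\psi}$--$\bar{\psi}^\ast$ pairings in $\{B,C\}$ do cancel (using $\ve\bar{\ve}=\bar{\ve}\ve$), and the surviving $x$--$x^\ast$ and odd pairings combine with the relative sign you need, so the argument closes; your observation that each order in $g$ is polynomial in antifields also correctly legitimizes the formal resummation. What your route buys is a closed form and independence from the homotopy/inductive machinery; what the paper's route buys is an illustration of exactly that machinery, which is the point of the section. In fact your resummed action is the specialization $W=\phi$, $G=h\circ\phi^{-1}$ of the general family (\ref{eq_quant_master_solution_general_single_multiplet}), whose proof in the paper (Section 6.6) proceeds by the same kind of direct bracket computation you propose, so your argument anticipates that later, more general statement rather than reproducing the paper's inductive proof of this one.
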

\begin{proof}   The $n=0,1,2,3$ terms in  the sum (\ref{eq_bv_action_quadr_def_model_full}) match with the even part of the action, symmetry transformations, bivector  (\ref{eq_quadr_def_bivect}) and 3-vector  (\ref{eq_quadr_def_3_vect}).
We will prove the proposition using induction in $n$.  Suppose for all $1<k\leq n$ the $k$-vectors take the form 
\be
\Pi^{(k)} = \frac{1}{k!} \left(\frac{  \bar{\ve}  \bar{\psi}^\ast  -\ve  \psi^\ast }{1+2gx^2}  \frac{d}{dx}  \right)^k  (x^2+g x^4)
\ee
then  they satisfy $\{\Pi^{(k)},\Pi^{(l)}\} = 0$ for all $k, l\geq 2$ and  $\Delta_\mu \Pi^{(k)} = 0$.

 Hence the  equation (\ref{eq_polyvector_equation}) for  the polyvector  $\Pi^{(n+1)}$ becomes a linear problem
\be\label{eq_liner_iteration_polyvectors}
D_S \Pi^{(n+1)} =- \{ \mathcal{Q}, \Pi^{(n)}\}  =(\bar{\ve} \psi +\ve \bar{\psi}) \frac{d}{dx} \Pi^{(n)}.
\ee
The right hand side of the linear problem (\ref{eq_liner_iteration_polyvectors}) is independent on $x^\ast$, hence we can use  the homotopy (\ref{eq_d_s_homotopy})  to evaluate 
\be\nn
\begin{split}
\Pi^{(n+1)} &= -K  \{ \mathcal{Q}, \Pi^{(n)}\}  =    \frac{1}{1+2g x^2}   \left(   \bar{\psi}^\ast   \frac{\p}{\p \psi} - \psi^\ast   \frac{\p }{\p \bar{\psi}} \right) \int^\infty_0 dt\; e^{-(n+1)t} (\bar{\ve} \psi +\ve \bar{\psi})  \frac{d}{dx} \Pi^{(n)}  \\
&= \frac{1}{n+1}\frac{ \bar{\ve}  \bar{\psi}^\ast -\ve  \psi^\ast}{1+2g x^2}\frac{d}{dx}\Pi^{(n)} = \frac{1}{(n+1)!}  \left(\frac{  \bar{\ve}  \bar{\psi}^\ast-\ve  \psi^\ast    }{1+2gx^2}  \frac{d}{dx}  \right)^{n+1}  (x^2+g x^4).
\end{split}
\ee
Our  expression for $\Pi^{(n+1)}$ matches with the assumptions of induction, hence the proof is complete.
\end{proof}

Our analysis of the BV actions for  two supersymmetric models, described in  section \ref{sec_two_deformations}, unveils  the key difference between them. The model with  superpotential deformation  has the BV action (\ref{eq_on_susy_superpotential_bivector}), which is quadratic in antifields, while  the BV action (\ref{eq_bv_action_quadr_def_model_full}), for the  quadratic deformation model,    has  infinitely  many terms in antifield expansion. In section  \ref{sec_localization_refined_conject} we will relate the structure of BV action and possible simplifications for the corresponding partition functions.

\subsection{BV refinement of the on-shell supersymmetry}\label{sec_refined_bv_susy}

Based on our discussion of the master equation solutions from section \ref{sec_bv_formalism} we suggest a refined classification for the physical systems with   supersymmetries   based on the structure of the corresponding BV action with respect to the antifield terms.
\begin{enumerate}
\item {\bf Linear}. Linear in antifields BV action implies that the supersymmetry is realized off-shell.  An example of the system from this class in the off shell description of the superpotential from section \ref{sec_off_shell_superpotent}.
\item {\bf Quadratic}. The master equation solution is at most quadratic in antifields. An example of the system from this class in the on-shell description of the superpotential \ref{sec_on_shell_superpotent}. 
\item {\bf Polynomial}. The order by order  construction of the master equation solution terminates at finitely-many terms.  We will provide an example of the system from this class in section \ref{sec_polynomial_class_bv_system}.
\item {\bf Formal series}. There are no obstructions, but the order by order construction of the master equation solution never terminates.  An example of the system from this class in the quadratic deformation  system  from proposition 
\ref{prop_bv_action_def_model}. 
\item {\bf Obstruction}. The cohomological  obstructions prevent from completion of the leading BV action (classical action and symmetries) into a solution to the classical master equation. In example \ref{ex_obstruction} we showed that there is an obstruction towards existence of bivector term.
\end{enumerate}

The  canonical BV  transformations from definition \ref{def_can_bv_transform} can mix fields and antifields as in proposition \ref{prop_can_BV_transfrom_exp_action}.   Hence, our refined  classification should be applied  to the representative with the smallest degree in antifields in a given class of BV equivalent solutions. 
In particular, the BV action (\ref{eq_bv_action_saddle_integral_full}) from proposition \ref{prop_can_BV_transfrom_exp_action} after the BV canonical transformation has a trivial dependence in antifields.

\section{Supersymmetric localization in BV formalism}

\subsection{Off-shell  localization}

In this section we recast the localization theorem from section \ref{sec_localization} in the language of BV formalism, to prepare for its  generalization in the next section.

\begin{Theorem} For a quantum physical  system with off-shell odd symmetry $Q$ and an odd function $V$ on supermanifold $X$ the deformed partition function
 \be
 Z(t) = \int_X \mu_X\;\; e^{-\frac{1}{\hbar}(S (\phi) + t Q(V))}
 \ee
  is independent of $t$.
 \end{Theorem}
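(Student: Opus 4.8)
The plan is to reduce the statement to Proposition~\ref{prop_susy_localization}. First I would observe that the three hypotheses of that proposition are exactly what ``a quantum physical system with off-shell odd symmetry~$Q$'' provides: invariance of the action, $Q(S)=0$ (eq.~\eqref{def_action_invar}); off-shell nilpotency $Q^2=\tfrac12[Q,Q\}=0$, which holds because the symmetry algebra closes off-shell (the $d=0$, $\mathcal N=2$ relations \eqref{eq_on_shell_susy_algebra}); and $\mathrm{div}_{\mu_X}Q=0$, which is \eqref{def_qm_offshel_symm_measure} specialised to the $d=0$ supersymmetry algebra, whose structure constants vanish. The overall $\tfrac1\hbar$ in the exponent is harmless, since rescaling $S\mapsto S/\hbar$ and $V\mapsto V/\hbar$ preserves all three properties; thus Proposition~\ref{prop_susy_localization} applies and gives $\partial_t Z(t)=0$.

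For a self-contained argument I would reproduce the computation of \eqref{eq_t_deriv_part_function}: differentiating under the integral sign gives $\partial_t Z(t)=-\tfrac1\hbar\int_X\mu_X\,Q(V)\,e^{-\frac1\hbar(S+tQ(V))}$; then $Q\bigl(e^{-\frac1\hbar(S+tQ(V))}\bigr)=-\tfrac1\hbar\bigl(Q(S)+tQ^2(V)\bigr)e^{-\frac1\hbar(S+tQ(V))}=0$ by invariance and off-shell nilpotency, so by the graded Leibniz rule $Q(V)\,e^{-\frac1\hbar(S+tQ(V))}=Q\bigl(V\,e^{-\frac1\hbar(S+tQ(V))}\bigr)$; finally the definition of the divergence \eqref{def_superman_divergence} together with $\mathrm{div}_{\mu_X}Q=0$ makes the integral of this total derivative vanish.

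The conceptual reframing, which sets up the generalization in the next section, is to read $Z(t)$ as a BV integral. Keeping the ghost parameters $\ve,\bar\ve$ fixed, the linear-in-antifields BV action \eqref{bv_action_offshel_symm} restricted to the Lagrangian submanifold $X_{tV}=\mathrm{graph}\,d(tV)\subset T^\ast[1]X$ equals $S(\phi)+tQ(V)$, and $e^{-\mathcal S/\hbar}$ satisfies the quantum master equation, i.e.\ $\Delta_\mu e^{-\mathcal S/\hbar}=0$, by Proposition~\ref{prop_quant_master_solution_off_shell}. As $t$ varies the bodies of the Lagrangians $X_{tV}$ coincide, so Theorem~\ref{thm_bv_invariance} yields the $t$-independence of $Z(t)$ directly.

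The steps needing the most care are analytic rather than algebraic: differentiation under the integral sign and the vanishing of the divergence (``boundary'') term both require sufficient decay of $e^{-S/\hbar}$ on the body of $X$ (or a compact body, as assumed in Theorem~\ref{thm_bv_invariance}), which I would record as a standing hypothesis. The only genuinely structural point is that the argument uses $Q^2=0$ \emph{identically}, not merely on the critical set; this is precisely the feature that the on-shell generalization of the next section must compensate for by restricting to deformations with $Q^2(V)=0$.
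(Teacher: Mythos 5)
Your proposal is correct, and in fact it contains the paper's own proof as its third paragraph: the paper proves the theorem exactly by reading $Z(t)$ as the BV integral of the linear-in-antifields master-equation solution $\mathcal S=S+\ve Q^a\phi^\ast_a$ (with $\ve$ set to $1$) over the family of Lagrangian submanifolds $\mathcal L_{tV}:\ \phi^\ast_a=t\,\p_a V$, on which $\mathcal S|_{\mathcal L_{tV}}=S+tQ(V)$, and then invoking Theorem \ref{thm_bv_invariance}; Proposition \ref{prop_quant_master_solution_off_shell} supplies the quantum master equation exactly as you say. Your primary route — reducing to Proposition \ref{prop_susy_localization} (equivalently, redoing the computation \eqref{eq_t_deriv_part_function} with the $\hbar$ rescaling) — is a genuinely different and more elementary argument, and it is sound: $Q(S)=0$, $Q^2=0$ and $\mathrm{div}_{\mu_X}Q=0$ are indeed what the off-shell hypothesis plus \eqref{def_qm_offshel_symm_measure} with vanishing structure constants provide. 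What the direct route buys is transparency about the analytic assumptions (differentiation under the integral, decay at infinity replacing the compact-body hypothesis of Theorem \ref{thm_bv_invariance}) and about where $Q^2=0$ enters identically; what the BV route buys — and why the paper uses it — is that it generalizes verbatim to non-linear antifield dependence, yielding the on-shell localization Theorem \ref{thm_bivector_localization} in the next subsection, whereas the elementary argument must then be redone term by term with the bivector identities.
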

 \begin{proof}   According to the proposition \ref{prop_cls_master_solution_off_shell},  quantum physical system with off-shell odd symmetry $Q$  describes a solution to the quantum master equation  
 \be\label{eq_master_eq_solution_localiz}
 \mathcal{S} = S (\phi) + \ve Q^a   \phi^\ast_a
 \ee
 for an odd symplectic space $\mathcal{M} = T^\ast[1]X$ with   Berezinian $\mu = \rho^2(\phi)\; d^n \phi \; d^n \phi^\ast$ on $\mathcal{M}$, constructed using the integration measure 
 $\mu_X = \rho(\phi) d^n \phi$ on $X$. Let us set $\ve=1$ and use an  odd function $V$ to define a  family of Lagrangian submanifolds $\mathcal{L}_{tV}:  \phi^\ast_a =   t \p_a V$ on $\mathcal{M}$.  
The evaluation of BV action for $\ve=1$ on $\mathcal{L}_{tV}$ is the $t$-deformed action, i.e. 
 \be
 \begin{split}
  \mathcal{S}\Big|_{\mathcal{L}_{tV}} &= S (\phi) +  \ve Q^a  \phi^\ast_a \Big|_{\mathcal{L}_{tV}} = S (\phi)  +  t  Q^a \p_a V    = S +tQ(V).
  \end{split}
 \ee
The BV integral  (\ref{eq_def_bv_int}) over Lagrangian submanifold $\mathcal{L}_{tV}$ is identical to the deformed partition function, i.e. 
  \be
\int_{ \mathcal{L}_{tV}}  \sqrt{\mu}\; e^{-\frac{1}{\hbar}\mathcal{S}}  =  \int_{X}  \mu_X\; e^{-\frac{1}{\hbar}(S (\phi)  + t Q(V) )}  = Z(t).
 \ee
  Since  the BV action (\ref{eq_master_eq_solution_localiz}) solves the quantum master equation and  $\mathcal{L}_{tV}$ describes  the  family of Lagrangian submanifolds,  related by a continuous deformation,  then  the invariance  theorem \ref{thm_bv_invariance} implies that the  BV integral is independent on $t$.
 \end{proof}

\subsection{On-shell localization with bivector}

We can extend the of-shell localization in BV description  to the physical systems with on-shell supersymmetry of bivector type.

\begin{Theorem}\label{thm_bivector_localization} For a system with a  on-shell supersymmetry of  bivector  type (definition \ref{def_quant_suystem_bivector_prop}) the deformed partition function
 \be
 Z(t)  = \int_X \mu_X\;\; e^{-\frac{1}{\hbar}( S (\phi) +2 t   Q(V) + t^2  \p_a V   \p_b V \Pi^{ab} )}
 \ee
  is independent of the parameter  $t$ for all choices of an odd function $V$.
 \end{Theorem}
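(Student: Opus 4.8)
The plan is to realize $Z(t)$ as the BV integral of a single $\Delta_\mu$-closed density over a one-parameter family of Lagrangian submanifolds, and then invoke the invariance theorem \ref{thm_bv_invariance}, exactly as in the off-shell localization theorem above. By the proposition establishing (\ref{eq_quant_bv_action_bivector_refined}), a quantum system with on-shell symmetry of bivector type yields the $\hbar$-independent function $\mathcal{S}=S+\mathcal{Q}+\mathcal{F}+\Pi^{(2)}$ with $\mathcal{Q}=c^\a v^a_\a\phi^\ast_a$ and $\Pi^{(2)}=\tfrac14(-1)^{|v_\a|(|v_\b|+1)}\,\phi^\ast_a\phi^\ast_b\,c^\a c^\b\,\Pi^{ab}_{\a\b}$, solving the quantum master equation (\ref{quantum_mast_eq}) on $\mathcal{M}=T^\ast[1](X\times\mathfrak{g}[1])$ with the compatible Berezinian $\mu=\rho(\phi)^2\,d^n\phi\,d^n\phi^\ast\,d^m c\,d^m c^\ast$ built from $\mu_X=\rho(\phi)\,d^n\phi$. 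Because $\mathcal{S}$ is $\hbar$-independent, the master equation splits into $\{\mathcal{S},\mathcal{S}\}=0$ and $\Delta_\mu\mathcal{S}=0$, so $e^{-\mathcal{S}/\hbar}$ is $\Delta_\mu$-closed; this is the only consequence of the four conditions of Definition \ref{def_quant_suystem_bivector_prop} that the localization argument will use.

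Next I would freeze the parameters $c^\a=\ve^\a$ to the constants for which $Q^a=\ve^\a v^a_\a$ is the chosen odd symmetry and $\Pi^{ab}=(-1)^{|v_\a|(|v_\b|+1)}\ve^\a\ve^\b\,\Pi^{ab}_{\a\b}$ is the associated bivector (for trivial structure constants, as in $d=0$, the $\mathcal{F}$-term is absent and $\mathfrak{g}[1]$ is a spectator factor; in general one also restricts to $c^\ast=0$). For each $t$ let $\mathcal{L}_{tV}\subset\mathcal{M}$ be the graph of $d(2tV)$, i.e.\ $\phi^\ast_a=2t\,\partial_aV(\phi)$; since $2tV$ is odd this is a Lagrangian submanifold, and the $\mathcal{L}_{tV}$ vary continuously in $t$ with homologous bodies. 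Restricting $\mathcal{S}$ to $\mathcal{L}_{tV}$, the term $\mathcal{Q}$ contributes $\ve^\a v^a_\a(2t\,\partial_aV)=2tQ(V)$ and the term $\Pi^{(2)}$ contributes $\tfrac14(2t)^2\,\partial_aV\,\partial_bV\,\Pi^{ab}=t^2\,\partial_aV\,\partial_bV\,\Pi^{ab}$, so that
\be\nn
\mathcal{S}\Big|_{\mathcal{L}_{tV}} = S(\phi)+2tQ(V)+t^2\,\partial_aV\,\partial_bV\,\Pi^{ab},
\ee
while the compatible Berezinian restricts as $\sqrt{\mu|_{\mathcal{L}_{tV}}}=\sqrt{\rho(\phi)^2}\,d^n\phi=\mu_X$. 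Hence the BV integral (\ref{eq_def_bv_int}) over $\mathcal{L}_{tV}$ equals $Z(t)$.

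To conclude, since $e^{-\mathcal{S}/\hbar}$ is $\Delta_\mu$-closed and $\{\mathcal{L}_{tV}\}_t$ is a continuous family of Lagrangian submanifolds with homologous bodies, part 2 of the invariance theorem \ref{thm_bv_invariance} gives $Z(t)=Z(0)$ for all $t$; equivalently, $\partial_tZ(t)$ is the BV integral of a $\Delta_\mu$-exact density and vanishes by part 1 --- this step carries the same tacit decay assumption at infinity already present in the off-shell statement. I would emphasize the conceptual point: in contrast with the earlier attempt at on-shell localization, which forced the constraint $Q^2(V)=0$ and thus only trivial deformations, here \emph{no} restriction on the admissible odd $V$ is needed, precisely because the piece $t^2\,\partial_aV\,\partial_bV\,\Pi^{ab}$ supplied by $\Pi^{(2)}$ exactly compensates for the non-nilpotency of $Q$.

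\textbf{The main obstacle.} Almost all the genuine content has been pushed into the cited proposition: the four conditions in Definition \ref{def_quant_suystem_bivector_prop} --- covariance of the bivector under the symmetry, vanishing of its Nijenhuis-Schouten self-bracket, covariance of the measure, and divergence-freeness of the bivector --- are precisely what is needed for $\mathcal{S}$ to solve the quantum master equation and for $e^{-\mathcal{S}/\hbar}$ to be $\Delta_\mu$-closed, and that is all the localization argument uses. The residual work, which I would not carry out in detail here, is bookkeeping: tracking the sign factors $(-1)^{|v_\a|(|v_\b|+1)}$ and the relative normalization ($\tfrac14$ versus $2t$) so that $\mathcal{S}|_{\mathcal{L}_{tV}}$ reproduces the displayed deformed action, and checking that the standard/compatible Berezinian on $\mathcal{M}$ descends to $\mu_X$ on every $\mathcal{L}_{tV}$. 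The one genuinely new ingredient relative to the off-shell case --- needed only if one additionally wants the $t\to\infty$ reduction of $Z$ to a neighbourhood of the zeros of $Q$, in the spirit of the Corollary to Proposition \ref{prop_susy_localization} --- would be positivity of $2Q(V)+t^2\,\partial_aV\,\partial_bV\,\Pi^{ab}$ for large $t$; it plays no role in the $t$-independence asserted in the theorem.
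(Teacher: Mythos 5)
Your proposal is correct and is essentially the paper's own (first) proof: you realize $Z(t)$ as the BV integral of the $\Delta_\mu$-closed density $e^{-\mathcal{S}/\hbar}$ built from the bivector-type master-equation solution, restricted to the family of Lagrangians $\phi^\ast_a\propto t\,\p_a V$, and invoke Theorem \ref{thm_bv_invariance}; the only cosmetic difference is that you absorb the factor of $2$ into the Lagrangian ($\phi^\ast_a=2t\,\p_aV$, $\ve=1$) where the paper rescales $\ve$ instead. The paper additionally gives a second, BV-free proof by expanding $\p_t Z(t)$ order by order in $t$ and using the four conditions of Definition \ref{def_quant_suystem_bivector_prop} directly, but that is supplementary to the argument you gave.
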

 \begin{proof}  It is   illuminating to provide two proofs for the theorem. The first proof relies on the BV integration, while the second proof  uses just  the defining properties for a system with an on-shell supersymmetry of bivector type
 according to the definition \ref{def_quant_suystem_bivector_prop} .
 
 {\it Proof via the BV formalism}:
  According to the proposition \ref{eq_quant_bv_action_bivector_refined}, quantum physical system with off-shell odd symmetry $Q$  describes  solution to the quantum master equation  
 \be\label{eq_master_eq_on_shell_solution_localiz}
 \mathcal{S} = S (\phi) + \phi^\ast_a \ve Q^a + \frac14   \phi^\ast_a \phi^\ast_b  \ve^2 \Pi^{ab} 
 \ee
 for an odd symplectic space $\mathcal{M} = T^\ast[1]X$ with   Berezinian $\mu = \rho^2(\phi)\; d^n \phi \; d^n \phi^\ast$ on $\mathcal{M}$, constructed using the integration measure 
 $\mu_X = \rho(\phi) \; d^n \phi$ on $X$. Let us set $\ve=2$ and use an  odd function $V$ to define a  family of Lagrangian submanifolds $\mathcal{L}_{tV}:  \phi^\ast_a =  t \p_a V$ on $\mathcal{M}$.  
 The evaluation of BV action for $\ve=1$ on $\mathcal{L}_{tV}$
 \be
 \begin{split}
  \mathcal{S}\Big|_{\mathcal{L}_{tV}} &=S (\phi) +2   Q^a \phi^\ast_a  \Big|_{\mathcal{L}_{tV}}+   \phi^\ast_a \phi^\ast_b    \Pi^{ab} \Big|_{\mathcal{L}_{tV}}   =S (\phi) +2 t Q^a \p_a V +   t^2  \p_a V   \p_b V \Pi^{ab}.     
  \end{split}
 \ee
The BV integral  (\ref{eq_def_bv_int}) over Lagrangian submanifold $\mathcal{L}_{tV}$ is identical to the deformed partition function, i.e. 
  \be
\int_{ \mathcal{L}_{tV}}  \sqrt{\mu}\; e^{-\frac{1}{\hbar}\mathcal{S}}  =  \int_{X}  \mu_X\; e^{-\frac{1}{\hbar}(S (\phi)  + t Q(V) + t^2   \p_a V   \p_b V \Pi^{ab} )}  = Z(t).
 \ee
  Since  the BV action (\ref{eq_master_eq_on_shell_solution_localiz}) solves the quantum master equation and  $\mathcal{L}_{tV}$ form a  family of Lagrangian submanifolds  related by a continuous deformation  then  the invariance  theorem \ref{thm_bv_invariance} implies that the  BV integral is independent on $t$.
 
{\it Proof using the properties of bivector}: 
 Let us evaluate the  $t$-derivative
\be\nn
\begin{split}
-\hbar \p_t Z(t) &= -\hbar \p_t \int_X \mu_X\;\; e^{-\frac{1}{\hbar}(S (\phi) + 2t  Q(V) +   t^2     \p_a V  \p_b V  \Pi^{ab})}  =   \int_X \mu_X\;\; (  2Q(V) + 2 t   \p_a V  \p_b V  \Pi^{ab})e^{-\frac1\hbar S_t}   \\
&  =   \int_X\mu_X\;\;    2Q(V   e^{-\frac1\hbar S_t} )     +\int_X \mu_X\;\; V   2Q(  e^{-\frac1\hbar S_t} )    +  2t    \int_X \mu_X\;\; \p_a  (    V   \p_b V   \Pi^{ab} e^{-\frac1\hbar S_t} )  \\
&\qquad- 2t    \int_X \mu_X\;\;  (-1)^{|\phi^a|}   V \p_a \p_b V \Pi^{ab} e^{-\frac1\hbar S_t}     - 2t\int_X \mu_X\;\; (-1)^{|\phi^a||\phi^b|}     V   \p_b V  \p_a  \Pi^{ab}  e^{-\frac1\hbar S_t}   \\
&\qquad-  2t  \int_X \mu_X\;\;  (-1)^{|\phi^a||\phi^b| + |\Pi_{ab}||\phi^a|}   V  \p_b V  \Pi^{ab}   \p_a e^{-\frac1\hbar S_t} , 
\end{split}
\ee
where we introduced 
\be
S_t = S (\phi) + 2t  Q(V) +   t^2     \p_a V  \p_b V  \Pi^{ab}.
\ee
Let us carefully evaluate all terms: The first term in the second line  vanishes due to the   invariance of the integration measure under the action of  supersymmetry vector field, i.e. 
\be
\int_X\mu_X\;\;    2Q(V   e^{-\frac1\hbar S_t})    = -2\int_X\mu_X\;\;  \hbox{div}_{\mu_X} Q \cdot V  e^{-\frac1\hbar S_t} = 0.
\ee
The fourth term is  the total derivative which we can rewrite using the measure $\mu_X = \rho(\phi) d^n \phi $ in local coordinates 
\be\label{eq_total_deriv_rewrite}
2t \int_X \mu_X\;\; \p_a  (    V   \p_b V   \Pi^{ab} e^{-\frac1\hbar S_t})  = -2t \int_X \mu_X\;\;  (-1)^{|\phi^a||\phi^b|}  V   \p_b V  \p_a \ln \rho \cdot \Pi^{ab} e^{S_t}.
 \ee
The first term on the third line vanishes due to the symmetry arguments. In particular,  we  can apply the relabeling and  reshuffling of derivatives to get the same expression with an additional minus sign
\be
\begin{split}
 (-1)^{|\phi^a|}  &   \p_a \p_b V   \Pi^{ab}  =  (-1)^{|\phi^b|}    \p_b \p_a V  \Pi^{ba}    = (-1)^{|\phi^b|+(|\phi^a|+1)(|\phi^b|+1)+|\phi^a||\phi^b|} \p_a \p_b V \Pi^{ab} \\
  & =  - (-1)^{|\phi^b|}    (-1)^{|\phi^a|+|\phi^b|}  \p_a \p_b V \Pi^{ab} =  - (-1)^{|\phi^a|}    \p_a \p_b V \Pi^{ab} .
 \end{split}
\ee
The second term in the third line combines with the total derivative term (\ref{eq_total_deriv_rewrite}) and  vanishes due to the  trivial divergence of bivector  (\ref{def_divergence_bivector})
\be
\begin{split}
 (-1)^{ |\phi^a||\phi^b|}\p_a  \Pi^{ab} + (-1)^{ |\phi^a||\phi^b|} \p_a \ln \rho  \cdot  \Pi^{ab}=  0.
 \end{split}
\ee
The remaining terms can be simplified  into the following expression 
\be\label{eq_t_powers_simplification}
\begin{split}
\hbar^2& \p_t Z(t) = \int_X \mu_X\;\; 2V   e^{-\frac1\hbar S_t} \;[ 2tQ^2(V) + t^2  Q  (  \p_a V \p_b V    \Pi^{ab})]\\
&  +  \int_X \mu_X\;\;   V e^{-\frac1\hbar S_t}  \; (-1)^{|\phi^a|}     \p_b V  \Pi^{ab}  ( -2t\p_a S -4t^2 \p_a(Q^c \p_c V)  -2 t^3 \p_a (  \p_c V \p_d V    \Pi^{cd})).
\end{split}
\ee
Let us carefully look at terms at different orders in $t$ in equation (\ref{eq_t_powers_simplification}). The linear  order simplifies into 
\be\label{eq_t_exp_linear_order}
\begin{split}
2t \int_X \mu_X\;\; V   e^{-\frac1\hbar S_t}  [ 2Q^2(V) -(-1)^{|\phi^a|}     \p_b V  \Pi^{ab}\p_a S] =0.
\end{split}
\ee
An expression (\ref{eq_t_exp_linear_order}) vanishes due to the   on-shell  refinement  (\ref{eq_susy_onshell_refined}) of the on-shell supersymmetry algebra for the vector field $Q$. 

The  quadratic order in $t$-expansion  can be further simplified
\be\label{eq_quadrat_order_t_exp}
\begin{split}
&2t^2\int_X \mu_X\;\;  V   e^{-\frac1\hbar S_t} \;[  Q   (     \p_a V  \p_b V  \Pi^{ab})-2  (-1)^{|\phi^a|}    \p_b V  \Pi^{ab}   \p_a(Q^c \p_c V)  ]   \\
& =2t^2 \int_X \mu_X\;\; V e^{-\frac1\hbar S_t} \;[  (-1)^{|\phi^a|+|\phi^b|} \p_a V \p_b V Q^c \p_c \Pi^{ab} -2  (-1)^{|\phi^a|}    \p_b V  \Pi^{ab}   \p_a Q^c  \p_c V    ] =0.
\end{split}
\ee
The last equality in (\ref{eq_quadrat_order_t_exp}) follows from the equality (\ref{def_pi_2_algebra_inv}).

The cubic  term in the  $t$-expansion  can be further simplified
\be\label{eq_cubic_order_t_expansion}
\begin{split}
-2t^3& \int_X \mu_X\;\;  e^{-\frac1\hbar S_t}  \; (-1)^{|\phi^a|}   V  \p_b V  \Pi^{ab}   \p_a (  \p_c V \p_d V    \Pi^{cd})  \\
&=4t^3 \int_X \mu_X\;\;  e^{-\frac1\hbar S_t}  \; (-1)^{ |\phi^c||\phi^a| +|\phi^c| }   V    \p_a   \p_c V   \p_b V  \Pi^{ab}   \p_d V    \Pi^{cd}\\
&\qquad -2t^3 \int_X \mu_X\;\; e^{-\frac1\hbar S_t}  \; (-1)^{|\phi^a|}     \p_c V \p_d V \cdot  V  \p_b V  \Pi^{ab}     \p_a \Pi^{cd}  =0.
\end{split}
\ee
The term in the second line with two derivatives vanishes due to the symmetry of the sign factors. In particular, we can change the labels $(a\leftrightarrow c)$ and rearrange the  expression to get the same expression with a minus sign
\be
\begin{split}
&(-1)^{ |\phi^c||\phi^a| +|\phi^c| }   V    \p_a   \p_c V   \p_b V  \Pi^{ab}   \p_d V    \Pi^{cd}  = (-1)^{ |\phi^c||\phi^a| +|\phi^a| }   V    \p_c   \p_a V   \p_d V  \Pi^{cd}   \p_b V    \Pi^{ab} \\
&\qquad\qquad =-(-1)^{ |\phi^c||\phi^a| +|\phi^c| }   V    \p_a   \p_c V   \p_b V  \Pi^{ab}   \p_d V    \Pi^{cd}.
\end{split}
\ee
The last term in (\ref{eq_cubic_order_t_expansion}) vanishes due to the (\ref{def_pi_2_bracket_trivial}) relation for bivector.  We showed that all orders in $t$-expansion of (\ref{eq_t_powers_simplification}) vanish, hence  the proof of the theorem is complete.
\end{proof}

Comparing two proofs we observe that the  BV formalism is way more efficient in proving localization-like formulas! However, the existence  of the second proof allows us to formulate and proof  the {\bf new localization theorem}  without any reference to the  BV formalism. Given a refined form  (\ref{eq_susy_onshell_refined}) of the on-shell supersymmetry algebra  and as long as the bivector satisfies the definition \ref{def_quant_suystem_bivector_prop}   the corresponding quantum physical system is subject to the localization theorem \ref{thm_bivector_localization}!

\subsection{Localization in a theory of superpotential}

Let us recast an example \ref{ex_off_shell_deform_superpotential} of supersymmetric deformation  using the BV localization formalism.  For a BV description of the off-shell theory with superpotential  (\ref{eq_off_susy_superpotential}) we  set $\ve=1$, $\bar{\ve}=0$ and  choose  Lagrangian submanifold given by the odd function  $V_H = -t\psi H(x)$. Note that the function is identical to the function (\ref{eq_odd_function_choice_off_super}).  Using symplectic form (\ref{eq_sympl_off_shell_superp_model}) the corresponding Lagrangian  submanifold  is given by 
\be\label{eq_lagr_sub_superpotential}
\psi^\ast =  \frac{\p V_H}{\p \psi} =- tH(x),\;\;\; x^\ast  =  \frac{\p V_H}{\p x} = -tH' (x) \psi.
\ee
The BV action restricted to the  Lagrangian sub-manifold (\ref{eq_lagr_sub_superpotential})
\be
\mathcal{S} \big|_{\cl_{H}}= -\frac14   F^2 +  F(W(x)+t H(x)) + \psi  \bar{\psi} (W'(x) +t H'(x) ). 
\ee
The BV action  is the calssical action for the off-shell theory with deformed superpotential $W + tH$, identical to the deformation (\ref{eq_susy_deform_superpotential}).

We can use the same Lagrangian submanifold  (\ref{eq_lagr_sub_superpotential}) to evaluate the restriction of the BV action (\ref{eq_on_susy_superpotential_bivector})  for the on-shell theory of superpotential
\be
\mathcal{S}_W |_{\mathcal{L}_H}=  \psi  \bar{\psi} W'(x)     +  (W + tH   )^2+  tH' (x) \psi   \bar{\psi} = S_{W+tH}.
\ee
The result of restriction is identical to the classical action for the on-shell theory with deformed superpotential $W + tH$. Moreover the deformation is identical to the off-shell case. 

We can use the deformation invariance of the partition function and  a limit $t \to \infty$ to localize  the partition function to  zeroes of $H$!  In particular, for the on-shell supersymmetric system  with superpotential $W=x + \frac12 g x^3$ we can choose $H = x$ to  remove the $g$-dependence of  partition function.

Note that there is a slight restriction on  choice of  function $H$:  functions $H(x)$  and $W(x)$  should have the same behavior at $x \to \pm \infty$.  An extra restriction is due to  non-compactness of the body for  integration supermanifold in partition function, which is a real line in our example.

\subsection{Localization for the quadratic deformation model}

In case of quadratic deformation model (\ref{eq_action_quadr_deform}) we can evaluate  partition function perturbatively using    BV action to the particular order in coupling $g$.   The BV action (\ref{eq_bv_action_quadr_def_model_full}) up to quadratic corrections in $g$ is 
\be
\begin{split}
\mathcal{S} &= x^2+g x^4 +   \psi\bar{\psi} (1+2g x^2)   +    (\ve \bar{\psi} +  \bar{\ve} \psi )x^\ast   +  (2x+4 gx^3)(  \bar{\ve}  \bar{\psi}^\ast - \ve  \psi^\ast )  \\
&\qquad +(1-2g x^2)(  \bar{\ve}  \bar{\psi}^\ast -\ve  \psi^\ast )^2  -  \frac{4gx}{3} (   \bar{\ve}  \bar{\psi}^\ast -\ve  \psi^\ast )^3  - \frac{g}{3}(   \bar{\ve}  \bar{\psi}^\ast -\ve  \psi^\ast )^4 +\cO(g^2).
\end{split}
\ee
Let us choose $\bar{\ve}=0,\;\; \ve=1$ and the  linear Lagrangian submanifold 
\be
\psi^\ast =  \frac{\p V}{\p \psi} =- tx,\;\;\; x^\ast  =  \frac{\p V}{\p x} =- t \psi.
\ee
The restriction of the BV action to the Lagrangian submanifold 
\be\nn
\begin{split}
\mathcal{S}\Big|_{\cl} &= x^2 +g x^4 + \psi\bar{\psi} (1+2g x^2)  +t   \psi   \bar{\psi}  + 2x(tx)  \\
&\qquad +(1-2g x^2)(  tx )^2  -  \frac{4gx}{3} ( tx )^3  - \frac{g}{3}(tx )^4 +\cO(g^2).
\end{split}
\ee
The simplified version
\be
\begin{split} 
\mathcal{S}\Big|_{\cl} &= x^2(1+t)^2 +(1+t) \psi\bar{\psi} + \frac43 (1+t)  g x^4 + 2g x^2\psi\bar{\psi}   -  \frac{1}{3}g (t+1)^4 x^4    +\cO(g^2).
\end{split}
\ee
We can perform the measure-preserving variable change 
\be
x \to x(1+t),\;\; \psi \to \psi (1+t),\;\; \bar{\psi} \to \bar{\psi},\;\;\; \mu_X = dx d\psi d\bar{\psi} \to \mu_X,
\ee
so the classical action simplifies into 
\be
\begin{split}
\mathcal{S}\Big|_{\cl} &= x^2 + \psi\bar{\psi}-  \frac{1}{3}g  x^4   + \frac{4}{3(1+t)^3}  g x^4 + \frac{2g}{(1+t)^3} x^2\psi\bar{\psi}    +\cO(g^2).
\end{split}
\ee
The leading order $g$-correction to the partition function can be expressed using  the super-Gaussian averages (\ref{eq_super_gauss_integrals}) in the form 
\be
Z_t = \int_{\mathcal{L}}  \sqrt{\mu} \;\; e^{-\mathcal{S}\Big|_{\cl}} = \left\< e^{  \frac{1}{3}g  x^4   - \frac{4}{3(1+t)^3}  g x^4 - \frac{2g}{(1+t)^3} x^2\psi\bar{\psi} +\cO(g^2)}\right\>.
\ee
The Gaussian averages  for the leading order correction 
\be
\begin{split}
\p_g\Big|_{g=0} Z_t &= \left(   \frac{1}{3} -\frac{4}{3 (1+t)^3}  \right) \<  x^4\> -\frac{2}{(1+t)^3} \<\psi\bar{\psi}  x^2\> \\
&  =   \frac{3}{4}\left(   \frac{1}{3} -\frac{4}{3 (1+t)^3}  \right)+ \frac12 \cdot \frac{2}{(1+t)^3} =\frac14.\\
\end{split}
\ee
The large $t$-limit of the BV action on a Lagrangian submanifold simplifies to 
\be\label{eq_bv_action_lagrangian_quadr_def_model_limiting}
\begin{split}
\lim_{t\to \infty}\mathcal{S}\Big|_{\cl} &=  x^2+   \psi   \bar{\psi}    -  \frac{1}{3}g  x^4 +\cO(g^2)
\end{split}
\ee
while the partition function matches with the perturbative computation (\ref{eq_quadr_deform_part_fun_pert})
\be\label{eq_qudr_model_large_t_part_function}
\lim_{t\to \infty}Z_t= \left\< e^{  \frac{1}{3}g  x^4 +\cO(g^2)  }\right\> = 1+  \frac{1}{3}g\<x^4\> +\cO(g^2)=  \left(1 + \frac14 g +\cO(g^2)\right).
\ee
For   more general  function  $V =- tH(x)\psi$, the Larangian submanifold is 
\be
 \psi^\ast   =- tH(x),\;\;\;  x^\ast  =- tH'(x) \psi.
\ee
The BV action becomes
\be
\begin{split}
\mathcal{S}\Big|_{\cl_H} &= x^2 +g x^4 + \psi\bar{\psi} (1+2g x^2) +2t  xH(x) +t H'(x)  \psi   \bar{\psi}  \\
&\qquad+ (1-2gx^2)t^2 H^2(x) -  \frac{4}{3}g t^3 xH^3(x) -\frac13 gH(x)^4 t^4+\cO(g^2).
\end{split}
\ee
We can rewrite the BV action 
\be\nn
\begin{split}
\mathcal{S}\Big|_{\cl_H} &= (x+t H(x))^2  + \psi\bar{\psi} (1+t H'(x)  + 2g x^2)  + \frac43 g x^3 (x+t H(x)) -\frac13 g (x+H(x) t)^4+\cO(g^2).
\end{split}
\ee
Using the  change of variables  $y =x+ tH(x)$,\;\; $\psi \to (1+t  H'(x))\psi$, which preserves the integration measure for partition function $\mu_X = dx d\psi d\bar{\psi}$ we can rewrite 
\be\nn
\begin{split}
\mathcal{S}\Big|_{\cl_H} &= y^2  + \psi\bar{\psi}  +   \frac{ 2g x^2}{x+t H(x)} \psi \bar{\psi}  + \frac{4g}{3} x^3y -\frac13 g y^4+\cO(g^2).
\end{split}
\ee
In the limit $t\to \infty$  the value of $x$ for fixed $y$ will go to zero  and the BV action  simplifies to the same expression (\ref{eq_bv_action_lagrangian_quadr_def_model_limiting}).

\subsection{BV localization for the theories with bivector}

\begin{Conjecture} The partition function for the  quantum physical system with on-shell supersymmetry of bivector type (definition \ref{def_quant_suystem_bivector_prop})  simplifies to the  Gaussian integral.
\end{Conjecture}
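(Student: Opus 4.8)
The plan is to deduce the Gaussian form from the exact deformation invariance already established in Theorem~\ref{thm_bivector_localization} by pushing the deformation parameter to infinity. For a bivector-type system the BV action is quadratic in antifields, so for every Grassmann-odd $V$ the deformed partition function $Z(t)=\int_X\mu_X\,e^{-\frac1\hbar(S(\phi)+2t\,Q(V)+t^2\,\p_aV\,\p_bV\,\Pi^{ab})}$ is strictly $t$-independent; hence it suffices to compute $\lim_{t\to\infty}Z(t)$ and exhibit it as a Berezin--Gaussian integral. The obstruction encountered for on-shell systems that are \emph{not} of bivector type (cohomological obstructions, or formal series with $x^\ast$-dependent polyvectors such as in Proposition~\ref{prop_bv_action_def_model}) is precisely what is absent here, and this absence is what makes the large-$t$ limit controllable.

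First I would adapt $V$ to the critical locus. As explained at the end of Section~\ref{sec_cohom_susy_homotopy}, in the relevant class the bivector carries only antifields $\phi^\ast_a$ of the \emph{odd} fields, so I take $V=f^\mu(x)\psi_\mu$ linear in the odd fields with $\{f^\mu=0\}$ transverse to $X_{crit}$ (the analogue of $V=-H(x)\psi$ in the superpotential model). Then $\p_aV\,\p_bV\,\Pi^{ab}=U(x)$ is a purely bosonic function, and the refined on-shell relation~(\ref{eq_susy_onshell_refined}) forces $U$ to vanish on $X_{crit}$; the cross term $2t\,Q(V)$ splits as $2t\big(B_{\mu\nu}(x)\psi^\mu\psi^\nu+b(x)\big)$ with $b(x)$ bosonic, so the bosonic part of the exponent is $S_{\mathrm{bos}}+2t\,b+t^2U$, which in the examples is exactly the bosonic action obtained by integrating out the auxiliary field in the lifted off-shell model (for the superpotential this is literally $(W+tH)^2$), hence a sum of squares with minimum locus $X_{crit}$.

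Next I would take $t\to\infty$ with a rescaling. Split the bosonic coordinates into directions $y$ along $X_{crit}$ and transverse directions $z$, rescale $z\mapsto z/t$ and $\psi_\mu\mapsto\psi_\mu/\sqrt t$ together with the matching rescaling of $\bar\psi$; the Berezinian then picks up an overall power of $t$ that cancels, since one transverse bosonic direction is traded for one rescaled fermion pair. Expanding in $1/t$, the only terms surviving the limit are $\tfrac12\,\p_i\p_jU(y,0)\,z^iz^j$ (the quadratic piece of $t^2U$, using that $X_{crit}$ is the minimum locus of $U$) and $B_{\mu\nu}(y,0)\,\psi^\mu\psi^\nu$ coming from $2t\,Q(V)$, while $S_{\mathrm{bos}}$, the $O(z^3)$ and higher pieces of $U$, and the original fermion bilinear are all $O(1/t)$. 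Therefore $Z=\lim_{t\to\infty}Z(t)$ equals the Gaussian/Berezin integral of $\exp\!\big[-\tfrac1\hbar(\tfrac12\,\p_i\p_jU\,z^iz^j+B_{\mu\nu}\psi^\mu\psi^\nu)\big]$ over the normal directions, integrated over $X_{crit}$; when $X_{crit}$ is a point this is the claimed Gaussian integral, and in the examples one reaches the same answer already at finite $t$ by changing variables to the functions whose squares assemble $S_{\mathrm{bos}}+2t\,b+t^2U$, exactly as in Section~\ref{sec_bv_induction_superpotential}.

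The step I expect to be the main obstacle is controlling the remainder after rescaling: one must show every term beyond the quadratic one is genuinely $O(1/t)$, which requires $U$ to vanish to exactly second order along $X_{crit}$, the induced fermion bilinear $B$ to be nondegenerate there, and the covariance and divergence-freeness conditions of Definition~\ref{def_quant_suystem_bivector_prop} to rule out anomalous $t$-powers from the Jacobian; a positive-dimensional $X_{crit}$ moreover forces separating zero modes, and, as already visible in the superpotential example, the non-compactness of the body means the Gaussian answer can still depend on the behaviour of the data at infinity. For this reason the clean statement should probably carry a nondegeneracy (Morse--Bott) hypothesis, and an attractive alternative route worth developing in parallel is to realise every bivector-type system as a BV induction~(\ref{eq_BV_induction}) of an off-shell system with one extra even auxiliary field, apply the already-proven off-shell localization there, and finish with the Gaussian auxiliary integral --- precisely the pattern of Section~\ref{sec_bv_induction_superpotential}.
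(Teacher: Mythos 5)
Your overall mechanism is the same one the paper uses to support this statement: run the Lagrangian family $\mathcal{L}_{tV}$ of Theorem~\ref{thm_bivector_localization} to $t\to\infty$ so that the quadratic bivector term dominates and the BV integral degenerates to a Gaussian. Note, however, that the paper does not prove the conjecture in your generality either; it gives supporting evidence for a single supermultiplet, where the covariance condition (\ref{def_pi_2_algebra_inv}), i.e.\ $[Q,\Pi\}=0$, forces the bivector to be \emph{constant}, and then the linear gauge fermion $V=-t\psi x$ (or $V=-t\psi f(x)$) reduces the restricted action to $S+t\psi\bar{\psi}+txf(x)+t^2\Pi x^2$, whose $t\to\infty$ limit is literally $\int dy\,e^{-\Pi y^2}=\sqrt{\pi}/\sqrt{\Pi}$, with the original action $S$ dropping out. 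You never use the constancy of $\Pi$, which is the ingredient that actually makes the paper's limit Gaussian.

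The concrete gap in your write-up is the identification of the limiting locus and of the limiting quadratic data. The refined on-shell relation (\ref{eq_susy_onshell_refined}) constrains $\Pi^{ab}\p_bS$, not $\Pi$ contracted with an arbitrary gauge fermion, so it does \emph{not} force $U=\p_aV\,\p_bV\,\Pi^{ab}$ to vanish on $X_{crit}$; and the bosonic exponent $S_{\mathrm{bos}}+2tb+t^2U$ does not have minimum locus $X_{crit}$. Already in the superpotential example $\Pi=(\ve\psi^\ast-\bar{\ve}\bar{\psi}^\ast)^2$ is constant and, with $V=-tH\psi$, the bosonic exponent is $(W+tH)^2$: as $t\to\infty$ the integral localizes to the zero set of $H$ (the gauge fermion), which drifts with $t$ and is generally disjoint from $X_{crit}$, and the surviving Gaussian covariance is fixed by $\Pi$ (and by $H'$ at the zeros of $H$), not by the Hessian of $U$ along $X_{crit}$. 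So the rescaling bookkeeping and the Morse--Bott hypothesis you hang on ``$U$ vanishes to second order along $X_{crit}$'' are built on the wrong limit and would not close as stated; if you instead choose $V$ so that the zero set of its coefficient functions is the intended localization locus, your local Gaussian analysis becomes the right kind of computation, but that choice must be made explicitly and is not supplied by the bivector axioms. Your fallback route --- realizing a bivector-type system as a BV induction of an off-shell model with one auxiliary field and localizing there --- is consistent with the paper (quadratic $G$ corresponds to linear $K$ in the transform (\ref{eq_laplace_transfrom_relation})) and is arguably the cleaner path to an actual proof, but as written it is an outline rather than an argument.
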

In this section we present the supporting evidence for this conjecture. For the  case a single supermultiplet  the BV action for the on-shell supersymmetric system of  bivector  type  takes the following schematic form 
\be
\mathcal{S} = S(x,\psi) +  \ve \bar{\psi} x^\ast   - \ve  f (x) \psi^\ast  + \ve^2 \Pi(x) \psi^\ast \psi^\ast.
\ee
The definition of an on-shell  model  with BV bivector implies 
\be
[Q, \Pi\}  =0  \Longrightarrow \bar{\psi}  \p_x \Pi =0, 
\ee
 hence $\Pi(x) = \Pi$ is  constant! Using the linear    Lagrangian submanifold, with generator  $V = -t \psi x$, for the BV integral gives as
the BV action 
\be
\mathcal{S}\Big|_{\mathcal{L}_V} = S(x,\psi) + t \psi  \bar{\psi} +  t xf(x)+ t^2 \Pi x^2.
\ee
The BV integral for the partition function in the limit $t \to \infty$ becomes the Gaussian integral 
\be\label{eq_gauss_part_function_single_supermult}
Z =\lim_{t\to \infty} \int_{\mathbb{R}^{2|1}} dxd\psi d\bar{\psi}\;\; e^{-S(x,\psi) - t\psi \bar{\psi}  -   tx f(x) - t^2 \Pi x^2} = \int dy\; e^{ -  \Pi y^2} = \frac{\sqrt{\pi}}{\sqrt{\Pi}}.
\ee
Alternatively, we can use $V_{ f} = -  t \psi f(x)$ to ensure that  the $\cO(t)$ part of the integral is also Gaussian 
\be
\mathcal{S}\Big|_{\mathcal{L}_{ f}} = S(x,\psi) + t f'(x)\bar{\psi}\psi   +   t f(x)^2+ t^2 \Pi f(x)^2.
\ee
The partition function 
\be
\begin{split}
Z& =\lim_{t\to \infty} \int_{\mathbb{R}^{2|1}} dxd\psi d\bar{\psi}\;\; e^{-S(x,\psi) - tf'(x) \bar{\psi}\psi   -   t f^2(x) - t^2 \Pi f^2(x)}\\
& =\lim_{t\to \infty}  t \int dy\; e^{ - (t+t^2 \Pi) y^2 } =\lim_{t\to \infty} \frac{\sqrt{\pi}}{\sqrt{\Pi} + t^{-1}} =\frac{\sqrt{\pi}}{\sqrt{\Pi}}
\end{split}
\ee
is identical to the case (\ref{eq_gauss_part_function_single_supermult}) with linear Lagrangian submanifold choice.

 \subsection{Localization considerations}\label{sec_localization_refined_conject}

For our refined classification from section let us formulate a conjecture(s) about the supersymmetric  partition function simplification.

\begin{enumerate}
\item  {\bf Linear}. Conventional off-shell localization in BV formalism. The supersymmetric partition function  localizes to  zeroes of the supersymmetric vector field. 
\item {\bf Quadratic}.  We can choose a linear Lagrangian submanifold with a parameter to make the antifields parametrically large, so that the BV integral will be dominated by the Gaussian integral with bivector representing the corresponding quadratic form. Surprisigly many of the  on-shell supersymmetic models from physics \cite{Baulieu:1990uv,Alexandrov:2007pd} belong to this class.
\item {\bf Polynomial}.  We can choose a linear Lagrangian submanifold with a parameter to make the antifields parametrically large, so that the BV integral will be dominated by  polynomial polyvector terms. This  might be useful for the theory of polynomial maps and polynomial inverses in context of  the quantum field theory formulation for the Jacobian conjecture \cite{Abdesselam:2002cy}. 
\item  {\bf Formal series}. We can choose a linear Lagrangian submanifold with a parameter to make the antifields parametrically large so that the BV integral will be dominated by  the  polyvector terms.  In section \ref{sec_Nicolai_map} we show that the corresponding partition function  is the same as the   Nicolai transform for the original supersymmetric sytem. The Nicolai map is often limited to the quadratic actions in fermions  with recent attempts to improve for the quartic actions \cite{Casarin:2023xic}. The BV constructions does not have such limitations, so it can serve as a generalization of the Nicolai map construction to the  non-quadratic actions in fermions.
\item  {\bf Obstruction}.  No  solution to the master equation, so we cannot use the BV integration methods.   
\end{enumerate}

\section{Supersymmetry for one-dimensional  integrals}

In this section we recast the  one-dimensional  integrals as  supersymmetric partition functions and study the corresponding BV actions. 

\subsection{Exponential integral}\label{sec_exp_integral_bv}
Let us consider a textbook exponential  integral 
\be\label{eq_exp_integral}
I_h = \int_{\mathbb{R}} dx\; e^{ - \frac{1}{\hbar}h(x)}.
\ee
We can evaluate this integral using the saddle point approximation as a formal series in $\hbar$
\be
I_h  = \hbar^{\frac12}\sum_{x_0:  h'(x_0) = 0} e^{ - \frac{1}{\hbar}V(x_0)}\;\; \sqrt{\frac{2\pi}{h''(x_0)}} (1 +\cO(\hbar)).
\ee
We can introduce the odd variables  $\psi, \bar{\psi}$ to turn an integral  (\ref{eq_exp_integral}) into a supersymmetric partition function 
\be
I_h(\hbar) =\hbar \int_{\mathbb{R}^{1|2}} dxd\psi d\bar{\psi}\; e^{ - \frac{1}{\hbar}(h(x) +  \psi  \bar{\psi} ) }  =  \int_X \mu_X \; e^{-\frac{S}{\hbar}} =Z_h
\ee
for a   supresymmetric system on $X =\mathbb{R}^{1|2} $ with action 
\be\label{eq_cl_action_exponent}
S_h(x,\psi,\bar{\psi}) = h(x) + \psi \bar{\psi}, 
\ee
and integration measure 
\be
\mu_X = dxd\psi d\bar{\psi}.
\ee
The action (\ref{eq_cl_action_exponent})  is invariant under the  the odd vector fields 
\be\label{eq_exp_vect_fields}
Q = \bar{\psi} \frac{\p}{\p x} -  h' \frac{\p}{\p \psi},\;\;\; \bar{Q} = \psi \frac{\p}{\p x} + h' \frac{\p}{\p \bar{\psi}}.
\ee
Note that the odd vector fields (\ref{eq_exp_vect_fields}) are similar to the vector fields for the superpotential case (\ref{eq_vect_fields_susy}). 
The vector fields (\ref{eq_exp_vect_fields}) preserve the integration measure, i.e. 
\be
\hbox{div}_{\mu_X} Q = \hbox{div}_{\mu_X} \bar{Q} =0.
\ee
\begin{Remark} The zeroes of the vector field (\ref{eq_exp_vect_fields})  are critical points of $h$ and match with the saddle points of the integral (\ref{eq_exp_integral}).
\end{Remark}
The vector fields (\ref{eq_exp_vect_fields}) form the  $d=0\;\;\mathcal{N}=2$ on-shell supersymmetry  algebra,  moreover the refined version of the on-shell condition is similar to the superpotential case (\ref{eq_on_shell_superpotential_refined_algebra}). In particular, 
\be\label{eq_refined_supersymmetry_exponen}
\begin{split}
[Q, Q\} &=-2h''(x) \bar{\psi} \frac{\p}{\p \psi} =- 2h''(x) \frac{\p S_h}{\p \psi }  \frac{\p}{\p \psi},\;\;\;  [\bar{Q}, \bar{Q}\} =2h''(x) \psi \frac{\p}{\p \bar{\psi}}= -2h''(x) \frac{\p S_h}{\p \bar{\psi} }  \frac{\p}{\p \bar{\psi}},\\
 [ \bar{Q},Q\} &=  h''(x) \left(\bar{\psi} \frac{\p}{\p \bar{\psi}} -\psi \frac{\p}{\p \psi}  \right) = h''(x) \frac{\p S_h}{\p \psi }  \frac{\p}{\p \bar{\psi}} +h''(x) \frac{\p S_h}{\p \bar{\psi} }  \frac{\p}{\p \psi}.
\end{split}
\ee
Hence by proposition \ref{prop_on_shell_bv_leading}  we have an approximate solution  classical master equation solution 
\be\label{eq_bv_action_saddle_integral}
\mathcal{S}_h= h(x) +  \psi\bar{\psi} - \ve h'(x) \psi^\ast+   \bar{\ve} h'(x)\bar{\psi}^\ast  + ( \bar{\ve} \psi + \ve \bar{\psi}) x^\ast+\frac12 h''(x)  ( \psi^\ast  \ve -  \bar{\psi}^\ast \bar{\ve} )^2.
\ee
We  can extend an approximate BV action  (\ref{eq_bv_action_saddle_integral}) to the full solution of the quantum master equation
\be
\mathcal{S}_h = h(x) +  \psi\bar{\psi} - \ve h'(x)\psi^\ast +  \bar{\ve} h'(x)  \bar{\psi}^\ast + ( \bar{\ve} \psi + \ve \bar{\psi})x^\ast  +\sum_{k=2}^\infty \frac{h^{(k)}(x)}{k!}   ( \bar{\psi}^\ast \bar{\ve} - \psi^\ast  \ve )^k.
\ee
We will present the detailed proof for the more general case  in proposition \ref{prop_sol_quant_master_exp_preexp}. The sum over polyvectors is just a  Taylor expansion for $h(x+ \bar{\ve} \bar{\psi}^\ast-\ve \psi^\ast )$,  
 hence we can simplify the BV action into
\be\label{eq_bv_action_saddle_integral_full}
\mathcal{S}_h = h(x+\bar{\ve} \bar{\psi}^\ast-\ve \psi^\ast )  + \psi \bar{\psi}+ ( \bar{\ve} \psi + \ve \bar{\psi})x^\ast. 
\ee
 The classical action  (\ref{eq_cl_action_exponent}) is a trivial solution to the quantum master equation because it has trivial antifield dependence. 
\begin{Proposition}\label{prop_can_BV_transfrom_exp_action}  The BV action (\ref{eq_bv_action_saddle_integral_full}) is a canonical BV transformation of (\ref{eq_cl_action_exponent}).
\end{Proposition}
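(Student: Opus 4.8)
The plan is to exhibit an explicit canonical BV transformation, in the sense of Definition~\ref{def_can_bv_transform}, connecting the classical action (\ref{eq_cl_action_exponent}) to the BV action (\ref{eq_bv_action_saddle_integral_full}). I would take the interpolating family
\be
\mathcal{S}_t = h\big(x + t(\bar{\ve}\bar{\psi}^\ast - \ve\psi^\ast)\big) + \psi\bar{\psi} + t(\bar{\ve}\psi + \ve\bar{\psi})x^\ast ,\qquad t\in[0,1],
\ee
together with the $t$-independent generator $\mathcal{R}_t \equiv x^\ast(\bar{\ve}\bar{\psi}^\ast - \ve\psi^\ast)$. The first step is the bookkeeping at the endpoints: $\mathcal{S}_0 = h(x) + \psi\bar{\psi} = S_h$ and $\mathcal{S}_1 = \mathcal{S}_h$, and $\mathcal{R}_t$ is odd (since $x^\ast$ is odd while $\psi^\ast,\bar{\psi}^\ast$ are even), as a generator of a canonical BV transformation must be.

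The substance is verifying the flow equation (\ref{def_BV_canonical_diff}), i.e. $\tfrac{d}{dt}\mathcal{S}_t = \{\mathcal{S}_t,\mathcal{R}_t\} - \hbar\Delta_\mu\mathcal{R}_t$ for the standard Berezinian. The left-hand side is immediate by differentiation. For the right-hand side I would use the crucial fact that the even combination $\bar{\ve}\bar{\psi}^\ast - \ve\psi^\ast$ has vanishing BV bracket with $\mathcal{R}_t$, so the Hamiltonian flow of $\mathcal{R}_t$ acts on any function of $x$ alone by the linear shift $x \mapsto x + t\,\p_{x^\ast}\mathcal{R}_t$; in particular $\{h(x + t(\ldots)),\mathcal{R}_t\} = h'(x + t(\ldots))(\bar{\ve}\bar{\psi}^\ast - \ve\psi^\ast)$, which is precisely the $h$-part of $\tfrac{d}{dt}\mathcal{S}_t$. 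A short bracket computation in the $\psi,\bar{\psi}$ sectors gives $\{\psi\bar{\psi},\mathcal{R}_t\} = (\bar{\ve}\psi + \ve\bar{\psi})x^\ast$, reproducing the remaining term; and $\{t(\bar{\ve}\psi + \ve\bar{\psi})x^\ast,\mathcal{R}_t\}$ is proportional to $x^\ast x^\ast = 0$ and hence drops out. Summing these three contributions reproduces $\tfrac{d}{dt}\mathcal{S}_t$ exactly. Finally, $\Delta_\mu\mathcal{R}_t = \p_x\p_{x^\ast}\mathcal{R}_t - \p_\psi\p_{\psi^\ast}\mathcal{R}_t - \p_{\bar{\psi}}\p_{\bar{\psi}^\ast}\mathcal{R}_t = 0$, because $\p_{x^\ast}\mathcal{R}_t$, $\p_{\psi^\ast}\mathcal{R}_t$ and $\p_{\bar{\psi}^\ast}\mathcal{R}_t$ are independent of $x$, $\psi$ and $\bar{\psi}$ respectively; so no $\hbar$-term is generated and the flow equation holds. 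This also explains why $\mathcal{S}_t$ stays $\hbar$-independent throughout, consistently with both endpoints being $\hbar$-independent solutions of the master equation. By Definition~\ref{def_can_bv_transform} this yields $\mathcal{S}_h \sim S_h$.

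I expect the only real obstacle to be the sign bookkeeping in the graded BV bracket of Definition~\ref{def_BV_bracket} with fields and antifields of mixed parity — keeping track, term by term, of the $(-1)^{|\phi^a|}$ prefactors and of the parities of $x^\ast$ (odd) versus $\psi^\ast,\bar{\psi}^\ast$ (even). Everything else is formal: once the conventions of Section~\ref{sec_bv_formalism} are applied consistently, the signs conspire so that the three brackets assemble into $\tfrac{d}{dt}\mathcal{S}_t$ correctly, while the $x^\ast x^\ast = 0$ cancellation and the identity $\Delta_\mu\mathcal{R}_t = 0$ are manifest.
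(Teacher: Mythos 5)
Your proposal is correct: the family $\mathcal{S}_t$ has the right endpoints, $\mathcal{R}_t = x^\ast(\bar\ve\bar\psi^\ast-\ve\psi^\ast)$ is odd, and with the paper's bracket convention $\{\mathcal{S}_t,\mathcal{R}_t\}$ indeed produces $h'(x+t(\bar\ve\bar\psi^\ast-\ve\psi^\ast))(\bar\ve\bar\psi^\ast-\ve\psi^\ast)+(\bar\ve\psi+\ve\bar\psi)x^\ast$, the remaining bracket being proportional to $x^\ast x^\ast=0$, while $\Delta_\mu\mathcal{R}_t=0$ termwise; so the flow equation of Definition \ref{def_can_bv_transform} holds. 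Your route differs from the paper's: the paper does not introduce a generator or an interpolating family at all, but exhibits the finite coordinate change $\Phi:\; x\mapsto x+\bar\ve\bar\psi^\ast-\ve\psi^\ast,\;\psi\mapsto\psi-\ve x^\ast,\;\bar\psi\mapsto\bar\psi+\bar\ve x^\ast$, checks $\Phi^\ast\omega=\omega$, and observes that the classical action written in the new variables is exactly the BV action (\ref{eq_bv_action_saddle_integral_full}). The two arguments are complementary: your $\mathcal{R}_t$ is precisely the Hamiltonian whose time-one flow is the paper's $\Phi$ (the antifields do not flow since $\mathcal{R}_t$ is independent of $x,\psi,\bar\psi$). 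What your version buys is literal conformity with Definition \ref{def_can_bv_transform}, including the explicit check $\Delta_\mu\mathcal{R}_t=0$, so the measure/$\hbar$ part of the equivalence is handled rather than left implicit in a change of variables; what the paper's version buys is brevity, since verifying $\Phi^\ast\omega=\omega$ and substituting into $h(x)+\psi\bar\psi$ is a two-line computation. Your only residual exposure is the sign bookkeeping you flagged, and with the conventions of Definition \ref{def_BV_bracket} the signs do come out as you claim for the stated $\mathcal{R}_t$.
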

\begin{proof}  
 Let us consider  a diffeomorphism 
\be
\Phi:\;\; x \mapsto x +\bar{\ve} \bar{\psi}^\ast-\ve \psi^\ast ,\;\;\psi \mapsto \psi -\ve x^\ast,\;\;\;  \bar{\psi} \mapsto \bar{\psi} +\bar{\ve} x^\ast,
\ee
which  preserves the odd symplectic form, i.e.
\be
\begin{split}
\Phi^\ast \omega  &=  d(x+\bar{\ve} \bar{\psi}^\ast-\ve \psi^\ast)\wedge dx^\ast - d(\psi -\ve x^\ast)\wedge d\psi^\ast  -d(\bar{\psi} +\bar{\ve} x^\ast) \wedge d\bar{\psi}^\ast  \\
&= dx \wedge dx^\ast - d\psi \wedge d\psi^\ast   - d\bar{\psi}\wedge d\bar{\psi}^\ast . 
\end{split}
\ee
The classical action (\ref{eq_cl_action_exponent}) in the new variables  becomes the BV action  (\ref{eq_bv_action_saddle_integral_full}).
\end{proof}
\begin{Remark}  We are yet to understand the significance of the BV action being related to the classical action by a BV canonical transformation. We suspect that there might be potential implications for the BV localization, but an additional 
investigation is required. 
\end{Remark}

\subsection{Saddle point integral}

We can represent the generic saddle point integral  
\be
\int_{\mathbb{R}} dx\; f(x) e^{ - \frac{1}{\hbar}h(x)} =\hbar \int_{\mathbb{R}^{1|2}} dxd\psi d\bar{\psi} \; e^{ -\frac{1}{\hbar} \left( h(x) + f(x) \psi\bar{\psi} \right)}
\ee
as an on-shell supersymmetric system  on $X = \mathbb{R}^{1|2}$ with the action 
\be
S = h(x) + f(x) \psi\bar{\psi}
\ee
and measure $\mu_X = dx d\psi d\bar{\psi}$.
The supersymmetry vector fields are
\be\label{eq_susy_saddle_int}
Q = \bar{\psi} \frac{\p}{\p x} - \frac{h' (x)}{f(x)} \frac{\p}{\p \psi},\;\;\; \bar{Q} = \psi \frac{\p}{\p x} + \frac{h' (x)}{f(x)} \frac{\p}{\p \bar{\psi}}.
\ee
The supersymmetry  algebra closes on-shell and we can write down a refined version, which is similar to the exponential case (\ref{eq_refined_supersymmetry_exponen}) with the different overall factor.
The overall factor can be restored from a single commutator, for example
\be
[Q, Q\} =-2 \frac{h'' (x) f(x) - h'(x) f'(x)}{f(x)^2} \bar{\psi} \frac{\p}{\p \psi} =-2 \frac{h'' (x) f(x) - h'(x) f'(x)}{f(x)^3} \frac{\p S}{\p \psi} \frac{\p}{\p \psi}
\ee
Hence the approximate solution to the classical master equation has  bivector 
\be\label{eq_bivector_saddle_int}
\Pi^{(2)} =\frac12  (\ve\psi^\ast - \bar{\ve} \bar{\psi}^\ast)^2   \frac{1}{ f} \frac{d}{dx} \frac{h' (x)}{f(x)} .  
\ee
We can construct the higher order terms in antifields to arrive into all order solution.

\begin{Proposition}\label{prop_sol_quant_master_exp_preexp} The  solution to the quantum master equation to all orders in antifields  is 
\be\label{eq_saddle_point_quantum_bv_solution}
\mathcal{S} =  f(x) \psi\bar{\psi} +  ( \bar{\ve} \psi + \ve \bar{\psi}) x^\ast  + \sum_{k=0}^\infty  \frac{ ( \bar{\ve} \bar{\psi}^\ast -\ve\psi^\ast )^k}{k!}   \left(\frac{1}{ f} \frac{d}{dx} \right)^{k}  h(x). 
\ee
\end{Proposition}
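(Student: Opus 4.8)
The plan is to verify the quantum master equation $\frac12\{\mathcal{S},\mathcal{S}\}=\hbar\,\Delta_\mu\mathcal{S}$ directly on the ansatz (\ref{eq_saddle_point_quantum_bv_solution}), exploiting the fact that the whole polyvector tail is a resummed Taylor-type series. First I would introduce the shorthand $u = \bar{\ve}\bar{\psi}^\ast - \ve\psi^\ast$ (an even quantity, since $\ve,\bar{\ve}$ are even and $\psi^\ast,\bar{\psi}^\ast$ are even) and the first-order differential operator $L = \tfrac{1}{f}\tfrac{d}{dx}$, so that the polyvector sum is $\sum_{k\ge 0}\tfrac{u^k}{k!}L^k h(x)$. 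The strategy mirrors proposition \ref{prop_bv_action_def_model}: set up the order-by-order structure in powers of the antifields, recognize that $\Pi^{(k)} = \tfrac{u^k}{k!}L^k h$, and use the homotopy (\ref{eq_d_s_homotopy}) together with the recursion (\ref{eq_polyvector_equation}) to push the induction through. Because the BV action is manifestly $\hbar$-independent, the quantum master equation splits into the classical master equation $\{\mathcal{S},\mathcal{S}\}=0$ plus the separate requirement $\Delta_\mu\mathcal{S}=0$.

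For the classical master equation I would organize the check by antifield degree. At degree zero it is trivial ($S$ has no antifields); at degree one it is the statement that $Q$ (and $\bar Q$) preserve the action, which follows since the $k=1$ term $u\,Lh = (\bar\ve\bar\psi^\ast-\ve\psi^\ast)\tfrac{h'}{f}$ encodes exactly the vector fields (\ref{eq_susy_saddle_int}) contracted with antifields; at degree two it reduces to the bivector equation $D_S\Pi^{(2)} = -\tfrac12\{\mathcal{Q},\mathcal{Q}\}$ with $\Pi^{(2)}=\tfrac12 u^2 L h = \tfrac12 u^2 \tfrac1f\tfrac{d}{dx}\tfrac{h'}{f}$, matching (\ref{eq_bivector_saddle_int}). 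For the inductive step at degree $k+1$, I would show that the recursion (\ref{eq_polyvector_equation}) collapses: the terms $\{\Pi^{(l)},\Pi^{(k+1-l)}\}$ for $2\le l\le k-1$ vanish because each $\Pi^{(l)}$ depends on the antifields only through the single even combination $u$ and on $x$ through $L$-iterates, so their BV bracket is proportional to $\{u,u\}$-type contractions which cancel — exactly the phenomenon already used in (\ref{eq_quadr_def_3_vect}) and proposition \ref{prop_bv_action_def_model}. This leaves the linear problem $D_S\Pi^{(k+1)} = -\{\mathcal{Q},\Pi^{(k)}\} = (\bar\ve\psi+\ve\bar\psi)\tfrac{d}{dx}\Pi^{(k)}$, whose right-hand side is $x^\ast$-independent, so the homotopy (\ref{eq_d_s_homotopy}) applies and yields $\Pi^{(k+1)} = \tfrac{1}{k+1}\,u\,L\,\Pi^{(k)} = \tfrac{u^{k+1}}{(k+1)!}L^{k+1}h$, closing the induction. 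The key computation here is that $K$ acting on $(\bar\ve\psi+\ve\bar\psi)\tfrac{d}{dx}\Pi^{(k)}$ produces the factor $\tfrac{1}{k+1}\tfrac{1}{f}\tfrac{d}{dx}$ against $u$; this is the same normalization bookkeeping as in the proof of proposition \ref{prop_bv_action_def_model}, where the $\int_0^\infty dt\, e^{-(n+1)t}$ supplies the $1/(n+1)$.

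Finally, for $\Delta_\mu\mathcal{S}=0$ with the standard Berezinian $\mu = dx\,dx^\ast\,d\psi\,d\psi^\ast\,d\bar\psi\,d\bar\psi^\ast$ and $\Delta_\mu = \p_x\p_{x^\ast} - \p_\psi\p_{\psi^\ast} - \p_{\bar\psi}\p_{\bar\psi^\ast}$, I note that the term $(\bar\ve\psi+\ve\bar\psi)x^\ast$ is killed by each piece of $\Delta_\mu$ separately (it is linear in $x^\ast$ and in $\psi,\bar\psi$), and the polyvector sum depends on $x^\ast$ not at all and on $\psi,\bar\psi$ only through the prefactor $f(x)\psi\bar\psi$ which carries no $\psi^\ast,\bar\psi^\ast$; since each $\Pi^{(k)}$ is built from $u$, $x$ and is independent of $x^\ast$, the operator $\p_x\p_{x^\ast}$ annihilates it, and $\p_\psi\p_{\psi^\ast}$, $\p_{\bar\psi}\p_{\bar\psi^\ast}$ annihilate it because $\Pi^{(k)}$ is either independent of $\psi,\bar\psi$ or linear in them without the conjugate antifield appearing alongside — more precisely, I would invoke the lemma that the homotopy $K$ commutes with $\Delta_\mu$ on the relevant subspace, so $\Delta_\mu\Pi^{(k+1)} = \Delta_\mu(-K\{\mathcal{Q},\Pi^{(k)}\})$ vanishes inductively once $\Delta_\mu\Pi^{(k)}=0$ and $\Delta_\mu\{\mathcal{Q},\Pi^{(k)}\}=0$. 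The main obstacle I anticipate is handling the sign bookkeeping in the general (not-necessarily-linear-in-fermions) intermediate expressions $\{\mathcal{Q},\Pi^{(k)}\}$ and confirming that the homotopy estimate of the excerpt genuinely applies — i.e. that the right-hand sides stay within the $x^\ast$-independent, fermion-linear subspace on which $K$ and its commutation with $\Delta_\mu$ are valid — but since $h(x)$ is a function of $x$ alone, every $\Pi^{(k)}$ is polynomial of degree at most one in $\psi,\bar\psi$ (in fact independent of them) and the subspace condition is automatic, so this obstacle should dissolve on closer inspection.
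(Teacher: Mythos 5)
Your proposal is correct and follows essentially the same route as the paper's proof: an induction in antifield degree in which the recursion (\ref{eq_polyvector_equation}) collapses because the brackets $\{\Pi^{(k)},\Pi^{(l)}\}$ vanish (each $\Pi^{(k)}$ depends only on $x,\psi^\ast,\bar\psi^\ast$), the remaining linear problem $D_S\Pi^{(n+1)}=-\{\mathcal{Q},\Pi^{(n)}\}$ is solved with the homotopy (\ref{eq_d_s_homotopy}) on the $x^\ast$-independent subspace, and the factor $1/(n+1)$ from the $t$-integral produces $\Pi^{(n+1)}=\frac{1}{n+1}\,u\,L\,\Pi^{(n)}$. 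Your additional direct check of $\Delta_\mu\mathcal{S}=0$ matches the paper's induction hypothesis, so no gap remains.
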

\begin{proof} We will prove the proposition using induction in $k$. The classical action, supersymmetry vector fields (\ref{eq_susy_saddle_int}) and the BV bivector (\ref{eq_bivector_saddle_int}) represent  the 
$k=0,1,2$ terms in the sum.  We assume that for all $k\leq n$ the polyvectors are of the form
\be\label{eq_ind_assumpt_polyvect}
\Pi^{(k)}  = \frac{ ( \bar{\ve} \bar{\psi}^\ast-\ve\psi^\ast )^k}{k!}   \left(\frac{1}{ f} \frac{d}{dx} \right)^{k}  h(x) ,
\ee
 The BV bracket between any pair of polyvectors in the form (\ref{eq_ind_assumpt_polyvect}) is trivial, i.e.  $\{ \Pi^{(k)}, \Pi^{(l)}\} = 0$
  and  $\Delta_\mu \Pi^{(k)} = 0$.
 Hence the  equation (\ref{eq_polyvector_equation}) for  the polyvector  $\Pi^{(n+1)}$ becomes a linear problem
\be\label{eq_liner_iteration_polyvectors}
D_S \Pi^{(n+1)} =- \{ \mathcal{Q}, \Pi^{(n)}\}  =(\bar{\ve} \psi +\ve \bar{\psi}) \frac{d}{dx} \Pi^{(n)}.
\ee
The right hand side of the linear problem (\ref{eq_liner_iteration_polyvectors}) is independent on $x^\ast$ hence we can use  the homotopy (\ref{eq_d_s_homotopy})  to evaluate 
\be
\begin{split}
\Pi^{(n+1)} &= -K  \{ \mathcal{Q}, \Pi^{(n)}\}   = \frac{( \bar{\ve}  \bar{\psi}^\ast- \ve  \psi^\ast )^{n+1}}{(n+1)!}  \left(\frac{1 }{f}  \frac{d}{dx}  \right)^{n+1} h(x).
\end{split}
\ee
Our result for $\Pi^{(n+1)}$ matches with the assumptions of induction, hence the proof is complete.
\end{proof}

\subsection{Supersymmetric systems from polynomial class}\label{sec_polynomial_class_bv_system}

Proposition \ref{prop_sol_quant_master_exp_preexp} describes the BV action for the most general on-shell supersymmetric system with a single $d=0$ $\mathcal{N}=2$ on-shell supermultiplet. The structure of the antifield dependence for the 
BV action is determined by  pair of function $f(x)$  and $h(x)$. For a specific choice we can generate  BV action with  polynomial dependence on antifields. 

\begin{Lemma} For a  degree $n$ polynomial $P_n$  and functions  $f(x) = W'(x)$, $h(x) = P_n (W(x))$ the BV action (\ref{eq_saddle_point_quantum_bv_solution}) is a polynomial of degree $n$ in antifields.
\end{Lemma}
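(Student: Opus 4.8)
The plan is to substitute the specific choices $f(x) = W'(x)$ and $h(x) = P_n(W(x))$ into the general BV action (\ref{eq_saddle_point_quantum_bv_solution}) and track the antifield degree of each summand. The key observation is that the operator $\frac{1}{f}\frac{d}{dx} = \frac{1}{W'(x)}\frac{d}{dx}$ is, after the change of variables $y = W(x)$, simply $\frac{d}{dy}$. Indeed, by the chain rule $\frac{d}{dy} = \frac{dx}{dy}\frac{d}{dx} = \frac{1}{W'(x)}\frac{d}{dx}$. Therefore
\be\nn
\left(\frac{1}{f}\frac{d}{dx}\right)^k h(x) = \left(\frac{d}{dy}\right)^k P_n(y)\Big|_{y = W(x)}.
\ee
Since $P_n$ is a polynomial of degree $n$ in $y$, the $k$-th derivative $\left(\frac{d}{dy}\right)^k P_n(y)$ vanishes identically for all $k > n$. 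Hence in the sum $\sum_{k=0}^\infty \frac{(\bar{\ve}\bar{\psi}^\ast - \ve\psi^\ast)^k}{k!}\left(\frac{1}{f}\frac{d}{dx}\right)^k h(x)$ only the terms $k = 0, 1, \dots, n$ survive, and the $k$-th surviving term is exactly $k$-th order in the antifields $\psi^\ast, \bar{\psi}^\ast$ (each factor $\bar{\ve}\bar{\psi}^\ast - \ve\psi^\ast$ contributing antifield degree one).

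The remaining two terms of (\ref{eq_saddle_point_quantum_bv_solution}), namely $f(x)\psi\bar{\psi} = W'(x)\psi\bar{\psi}$ and $(\bar{\ve}\psi + \ve\bar{\psi})x^\ast$, contribute antifield degrees $0$ and $1$ respectively, so they do not raise the total degree above $n$ (assuming $n \geq 1$; the degenerate case $n = 0$ can be noted separately or the statement read with the convention $n\geq 1$). Moreover, the top term $k = n$ is genuinely present and nonzero, since $\left(\frac{d}{dy}\right)^n P_n(y)$ equals $n!$ times the leading coefficient of $P_n$, which is nonzero by assumption that $P_n$ has degree $n$. Thus the BV action is a polynomial in antifields of degree exactly $n$.

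There is essentially no hard obstacle here: the entire content is the chain-rule identity $\frac{1}{W'(x)}\frac{d}{dx} = \frac{d}{dy}$ under $y = W(x)$, together with the elementary fact that repeated differentiation annihilates a polynomial once the order exceeds its degree. The one point requiring a word of care is the invertibility of the substitution $y = W(x)$ — one should assume $W'$ is nowhere vanishing (or work formally/locally, or perturbatively as elsewhere in the paper) so that the operator $\frac{1}{W'}\frac{d}{dx}$ is well defined; this is already implicit in the formula (\ref{eq_saddle_point_quantum_bv_solution}) requiring $f = W'$ to appear in denominators. I would state this hypothesis explicitly at the start of the proof and then the rest is the short computation above.
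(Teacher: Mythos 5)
Your proof is correct and follows essentially the same route as the paper: both rest on the identity $\frac{1}{W'(x)}\frac{d}{dx} = \frac{d}{dW}$, so that the $k$-th term of (\ref{eq_saddle_point_quantum_bv_solution}) involves $P_n^{(k)}(W(x))$ and the sum truncates at $k=n$. The paper merely goes one cosmetic step further, resumming the surviving terms into $P_n\bigl(W(x)+\bar{\ve}\bar{\psi}^\ast-\ve\psi^\ast\bigr)$; your invertibility caveat is harmless but not needed, since the chain rule cancels the $1/W'$ when acting on $P_n(W(x))$.
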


\begin{proof} Proposition \ref{prop_sol_quant_master_exp_preexp} gives us  an explicit form of the polyvectors, while the the BV action for $f(x) = W'(x)$ and $h(x) = P_n (W(x))$ simplifies into
\be\label{eq_BV_action_polyn_type}
\mathcal{S} =  W'(x) \psi\bar{\psi} + ( \bar{\ve} \psi + \ve \bar{\psi})  x^\ast + P_n(W(x) +\bar{\ve} \bar{\psi}^\ast-\ve\psi^\ast). 
\ee
Since $P_n(x)$ is a polynomial of degree $n$, then the BV action is also a  polynomial  of degree $n$ in antifields.
\end{proof}
The classical action 
\be\label{eq_polynomial_on_shell_susy_example}
S = P_n(W(x)) + W'(x) \psi\bar{\psi} 
\ee
is a generalization of the on-shell superpotential system from section \ref{sub_sect_onshel_superpotential}. Note that for our choice of $f(x)$  and $h(x)$ the supersymmetry  vector fields (\ref{eq_susy_saddle_int}) do not have poles. In particular 
\be
Q = \bar{\psi} \frac{\p}{\p x} -P_n'(W(x)) \frac{\p}{\p \psi},\;\;\; \bar{Q} = \psi \frac{\p}{\p x} + P_n'(W(x)) \frac{\p}{\p \bar{\psi}}.
\ee
The partition function for the supersymmetric model (\ref{eq_polynomial_on_shell_susy_example})  simplifies,  using the change of variables $y = W(x)$, i.e. 
\be\label{eq_part_fun_polynomial_bv}
Z = \frac{1}{\hbar} \int dx\; W'\; e^{ - \frac{1}{\hbar}P_n(W(x))}  = \frac{1}{\hbar} \int dy\; e^{- \frac{1}{\hbar}P_n (y)}.
\ee
Note that the partition function depends only on the relative homology class for the curve defined by the function $W(x)$. Also,  similarly to the superpotential case, it is  invariant under the deformations of the function $W(x)$.

The partition function (\ref{eq_part_fun_polynomial_bv})  emerges in the $t\to \infty$ limit of the BV localization  for the Lagrangian submanifold generated by the linear function $V =- t x\psi$.
\be
Z = \lim_{t\to \infty} \int_{\cl_{V}}\sqrt{\mu}\;\; e^{-\frac{1}{\hbar}\mathcal{S}}.
\ee
 
\subsection{Nicolai map}\label{sec_Nicolai_map}

Let us adopt the Nicolai map  construction \cite{Nicolai:1979nr,Nicolai:1980jc} to the case of $d=0$ supersymmetric theory with a single supermultiplet. The partition function for the most general action  can be simplified by integrating out the fermionic (odd) varibles 
\be
Z = \int_{\mathbb{R}^{1|2}} dxd\psi d\bar{\psi}\;\; e^{-\frac{1}{\hbar}( h(x) + f(x) \psi \bar{\psi} )}  =\frac{1}{\hbar} \int_{\mathbb{R}} dx\; f(x)\;e^{-\frac{1}{\hbar} h(x) } .
\ee
We can treat $f(x)$ as a Jacobian for the coordinate transformation to the new variable 
\be\label{eq_Nicolai_coord_transfrom}
 y = \int^x  f(x') dx'
\ee
and express the partition function in the form 
\be\label{eq_Nicolai_transfrom_general}
Z = \frac{1}{\hbar} \int_{\mathbb{R}} dy\;\;e^{-\frac{1}{\hbar} \tilde{h}(y) },\;\;\; \tilde{h}(y) = h (x(y)).
\ee
The partition function (\ref{eq_Nicolai_transfrom_general}) is an integral over just even  variables with the standard measure $dy$ for a classical theory with action $\tilde{h}(y)$. However, an explicit form for the action $\tilde{h}(y)$
requires us to invert the coordinate transform (\ref{eq_Nicolai_coord_transfrom}). In our discussion for BV localization we assumed $f(x) = W'(x)$, hence the coordinate transform simplifies to $y = W(x)$    while  the Nicolai map becomes
$\tilde{h}(W(x)) = h(x)$.

The Nicolai map is usually  constructed perturbatively in coupling constant, what can be illustrated by our example of quadratic deformation for the on-shell theory.
\begin{Example} \label{ex_nicolai_map_pert}
Let us perform the Nicolai transform for the  partition function (\ref{eq_part_funct_quadr_deform_exact}).  The integration over fermions produces a factor  $(1+2gx^2)$ which we use as  the Jacobian   of the coordinate transformation, i.e.  
\be
dy = (1+2gx^2)dx = d \left( x + \frac{2g}{3} x^3\right).
\ee
The perturbative  inverse to the leading order in $g$ is 
\be
x = y - \frac{2g}{3} y^3 +\cO(g^2).
\ee
The action for the purely even system 
\be
\begin{split}
\tilde{h}(y) & =h (x(y))= x^2 +gx^4  =  y^2 - \frac13 g y^4+\cO(g^2).
\end{split}
\ee
The leading order  partition function after the Nicolai transform matches with result of the BV localization (\ref{eq_qudr_model_large_t_part_function}) for the same system using the limiting Lagrangian submanifold  generated by the  linear function.
\end{Example}

An example \ref{ex_nicolai_map_pert} is a consequence of a more general relation.

\begin{Proposition} The BV integral for the quantum master equation solution (\ref{eq_saddle_point_quantum_bv_solution})  over the  linear  Lagrangian submanifold  generated by $V =-t \psi x$ at the $t\to \infty$ limit is identical to the Nicolai transform for  partition function (\ref{eq_Nicolai_transfrom_general}), i.e.
\be
\int_{\cl_V} \sqrt{\mu} \; e^{-\frac1\hbar \mathcal{S}}  =   \frac{1}{\hbar} \int_{\mathbb{R}} dy\;\;e^{-\frac{1}{\hbar} \tilde{h}(y) }.
\ee
\end{Proposition}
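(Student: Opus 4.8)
The plan is to resum the polyvector tail of the BV action (\ref{eq_saddle_point_quantum_bv_solution}) into closed form, restrict it to the Lagrangian submanifold $\mathcal{L}_V$, integrate out the fermions, and identify the surviving one-dimensional integral with the Nicolai-transformed partition function (\ref{eq_Nicolai_transfrom_general}). First I would introduce the Nicolai variable $y=y(x)=\int^x f(x')\,dx'$, for which $\frac{dy}{dx}=f(x)$ and hence $\frac1f\frac{d}{dx}=\frac{d}{dy}$ as differential operators. With $\tilde h(y)=h(x(y))$, an immediate induction gives $\bigl(\frac1f\frac{d}{dx}\bigr)^k h(x)=\tilde h^{(k)}\bigl(y(x)\bigr)$, so the sum over polyvectors in (\ref{eq_saddle_point_quantum_bv_solution}) is exactly the Taylor expansion of $\tilde h$ about $y(x)$. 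Treating this series as a genuine function, just as in the passage to (\ref{eq_bv_action_saddle_integral_full}) in the exponential case, the BV action takes the compact form
\be
\mathcal{S}=f(x)\,\psi\bar\psi+(\bar\ve\psi+\ve\bar\psi)\,x^\ast+\tilde h\bigl(y(x)+\bar\ve\bar\psi^\ast-\ve\psi^\ast\bigr).
\ee

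Next I would restrict to $\mathcal{L}_V$. Taking $\ve=1$ and $\bar\ve=0$ as in the other localization computations (the parameter $\bar\ve$ drops out of the restriction because $\psi^2=0$), the Lagrangian submanifold generated by $V=-t\psi x$ is $\psi^\ast=\frac{\p V}{\p\psi}=-tx$, $x^\ast=\frac{\p V}{\p x}=-t\psi$, $\bar\psi^\ast=0$, and $\sqrt{\mu|_{\mathcal{L}_V}}=dx\,d\psi\,d\bar\psi$ for the standard Berezinian. Using $\bar\psi\,x^\ast=-t\bar\psi\psi=t\psi\bar\psi$, the restricted action is $\mathcal{S}\big|_{\mathcal{L}_V}=\bigl(f(x)+t\bigr)\psi\bar\psi+\tilde h\bigl(y(x)+tx\bigr)$. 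Performing the Berezin integral over $\psi,\bar\psi$, which brings down the factor $f(x)+t$ exactly as in the Nicolai section, and then substituting $u=y(x)+tx$, for which $du=\bigl(f(x)+t\bigr)\,dx$, yields
\be
\int_{\mathcal{L}_V}\sqrt{\mu}\;e^{-\frac{1}{\hbar}\mathcal{S}}=\frac{1}{\hbar}\int_{\mathbb{R}}dx\;\bigl(f(x)+t\bigr)\,e^{-\frac{1}{\hbar}\tilde h(y(x)+tx)}=\frac{1}{\hbar}\int_{\mathbb{R}}du\;e^{-\frac{1}{\hbar}\tilde h(u)},
\ee
which is precisely (\ref{eq_Nicolai_transfrom_general}). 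Since the right-hand side is already $t$-independent, the $t\to\infty$ limit is immediate.

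The hard part is justifying the substitution $x\mapsto u=y(x)+tx$, that is, showing it is a bona fide orientation-preserving diffeomorphism of $\mathbb{R}$; this is exactly where the large-$t$ (equivalently $t\to\infty$) hypothesis enters. If $f$ is bounded below then $f(x)+t>0$ for $t$ large, so $u$ is strictly increasing in $x$, and since convergence of the original partition function forces $h$, hence $\tilde h$, to grow at $\pm\infty$, the map $x\mapsto u$ is surjective onto $\mathbb{R}$ and the final integral converges. Because the body $\mathbb{R}$ is non-compact one cannot invoke Theorem \ref{thm_bv_invariance} directly to assert $t$-independence, so it is precisely this bookkeeping at the non-compact ends that the limit takes care of; for the polynomial-class examples of section \ref{sec_polynomial_class_bv_system}, with $f=W'$ and $W$ having the standard behaviour at infinity, it is automatic.
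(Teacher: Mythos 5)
Your proposal is correct and follows essentially the same route as the paper: resum the polyvector series into $\tilde h\bigl(W(x)+\bar\ve\bar\psi^\ast-\ve\psi^\ast\bigr)$ (the paper writes $f=W'$ so that $\frac1f\frac{d}{dx}=\frac{d}{dW}$), restrict to the Lagrangian $\psi^\ast=-tx$, $x^\ast=-t\psi$ with $\ve=1$, $\bar\ve=0$, integrate out the fermions, and change variables in the remaining one-dimensional integral. The one refinement is that you keep the exact fermionic factor $f(x)+t$, which matches the Jacobian of $u=y(x)+tx$ and shows the restricted BV integral is exactly $t$-independent, whereas the paper discards $W'(x)$ against $t$ and works only in the strict $t\to\infty$ limit; your remarks on monotonicity and surjectivity of the substitution make explicit a point the paper leaves implicit.
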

\begin{proof}
In section \ref{sec_exp_integral_bv} we observed that  the sum over $k$ in the solution to the quantum master equation 
\be
\mathcal{S} =  f(x) \psi\bar{\psi} + x^\ast ( \bar{\ve} \psi + \ve \bar{\psi})   + \sum_{k=0}^\infty  \frac{ (  \bar{\ve} \bar{\psi}^\ast-\ve\psi^\ast)^k}{k!}   \left(\frac{1}{ f} \frac{d}{dx} \right)^{k}  h(x). 
\ee
 when $f(x)=1$  becomes a Taylor expansion and leads to the simplification of the BV action. For $f(x)\neq 1$ we can perform a similar summation if we introduce a new function $W(x)$ such that  $f(x)  = W'(x)$.   The derivatives  simplify into
\be
\frac{1}{ f} \frac{d}{dx} = \frac{1}{ W'(x)} \frac{d}{dx} = \frac{d}{dW}.
\ee
The sum over $k$ is the exponential operator, describing the shift in $W$-variable 
\be
 \sum_{k=0}^\infty  \frac{ ( \bar{\ve} \bar{\psi}^\ast -\ve\psi^\ast )^k}{k!}  \left(\frac{1}{ f} \frac{d}{dx} \right)^{k} = e^{( \bar{\ve} \bar{\psi}^\ast -\ve\psi^\ast ) \frac{d}{dW}}.
\ee
Let us use the function  $\tilde{h}(x)$ from the Nicolai map construction (\ref{eq_Nicolai_transfrom_general}), then the BV action can be written in the following form
\be
\mathcal{S} =  W'(x) \psi\bar{\psi} +( \bar{\ve} \psi + \ve \bar{\psi})  x^\ast   +\tilde{h}(W(x) +  \bar{\ve} \bar{\psi}^\ast-\ve\psi^\ast ).
\ee
We can perform the BV integral for $\ve =1, \;\; \bar{\ve}=0$ over the  linear  Lagrangian submanifold  generated by 
\be
V =-t \psi x.
\ee
The restriction of BV action
\be
\mathcal{S}\Big|_{\mathcal{L}_V} =  W'(x) \psi\bar{\psi} + t \psi \bar{\psi}  +\tilde{h}(W(x) + tx).
\ee
The $t \to \infty$ limit of the BV integral
\be
Z = \int_{\mathbb{R}^{1|2}} dxd\psi d\bar{\psi} \;  e^{-\frac{1}{\hbar}( t \psi \bar{\psi}  +\tilde{h}(W(x) + tx))}  =\frac{1}{\hbar} \int_{\mathbb{R}} tdx \;  e^{-\frac{1}{\hbar}\tilde{h}(W(x) + tx)}  =\frac{1}{\hbar} \int_{\mathbb{R}} dy \;  e^{-\frac{1}{\hbar}\tilde{h}(y)}. 
\ee
\end{proof}
The BV integral for $t=0$  equals to the original partition function for on-shell supersymmetric theory, hence we have an equality 
\be\label{eq_Nicolai_saddle_point_itegral}
Z = \int_{\mathbb{R}^{1|2}} dxd\psi d\bar{\psi} \;\;e^{-\frac{1}{\hbar}( h(x) + f(x) \psi \bar{\psi} )}  = \frac{1}{\hbar} \int_{\mathbb{R}} dy \;  e^{-\frac{1}{\hbar}\tilde{h}(y)}.
\ee
An equality (\ref{eq_Nicolai_saddle_point_itegral}) is the the $d=0$  version  of Nicolai map construction.

\subsection{BV action from superspace}

In section \ref{sec_polynomial_class_bv_system} we described a generalization (\ref{eq_BV_action_polyn_type})  of the on-shell superpotential model (\ref{eq_on_susy_superpotential_bivector}), with the corresponding BV action being  polynomial type. There is a further generalization of the quantum master equation solution, parametrized by the 
pair of function $W$ and $G$.

\begin{Proposition}  For a pair of functions $W(x)$ and  $G(x)$ there is  a solution to the quantum master equation
\be\label{eq_quant_master_solution_general_single_multiplet}
\mathcal{S}=   W'(x) \psi\bar{\psi} +  (\ve \bar{\psi}+ \bar{\ve}\psi)x^\ast + G(W(x)+ \bar{\ve} \bar{\psi}^\ast- \ve \psi^\ast ) .
\ee
\end{Proposition}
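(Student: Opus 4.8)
The plan is to derive the statement as a specialization of Proposition~\ref{prop_sol_quant_master_exp_preexp} combined with a resummation. That proposition produces, for \emph{arbitrary} $f(x)$ and $h(x)$, the all-orders solution~\eqref{eq_saddle_point_quantum_bv_solution} of the quantum master equation~\eqref{quantum_mast_eq},
\[
\mathcal{S}=f(x)\,\psi\bar\psi+(\bar\ve\psi+\ve\bar\psi)\,x^\ast+\sum_{k\ge 0}\frac{(\bar\ve\bar\psi^\ast-\ve\psi^\ast)^k}{k!}\Bigl(\tfrac1f\tfrac{d}{dx}\Bigr)^{\!k}h(x),
\]
for the standard Berezinian on $T^\ast[1]\mathbb{R}^{1|2}$. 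First I would specialize $f(x)=W'(x)$ and $h(x)=G(W(x))$. On any function of $W$ the operator $\tfrac1f\tfrac{d}{dx}=\tfrac1{W'}\tfrac{d}{dx}$ acts as $\tfrac{d}{dW}$, since $\tfrac1{W'}\tfrac{d}{dx}\phi(W(x))=\phi'(W(x))$, the factor $W'$ cancelling identically; iterating, $\bigl(\tfrac1f\tfrac{d}{dx}\bigr)^{k}h=G^{(k)}(W(x))$. Writing $\Theta:=\bar\ve\bar\psi^\ast-\ve\psi^\ast$ for the even combination of antifields, the sum over $k$ is then the Taylor series
\[
\sum_{k\ge 0}\frac{\Theta^{k}}{k!}\,G^{(k)}(W(x))=G\bigl(W(x)+\Theta\bigr),
\]
an identity of formal power series in the (commuting, even) variables $\psi^\ast,\bar\psi^\ast$. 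Since also $f(x)\psi\bar\psi=W'(x)\psi\bar\psi$, formula~\eqref{eq_saddle_point_quantum_bv_solution} collapses to exactly~\eqref{eq_quant_master_solution_general_single_multiplet}, which is therefore a solution of the quantum master equation.

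There is one point that needs care: the $D_S$-homotopy used in the proof of Proposition~\ref{prop_sol_quant_master_exp_preexp} contains $1/f$, which is singular at the critical points of $W$ when $f=W'$. The resulting closed form~\eqref{eq_quant_master_solution_general_single_multiplet} has no poles, so I would also give a direct verification valid without any regularity hypothesis on $W$. Since $\mathcal{S}$ is $\hbar$-independent, the quantum master equation~\eqref{quantum_mast_eq} is equivalent to the classical master equation~\eqref{cl_mast_eq}, $\{\mathcal{S},\mathcal{S}\}=0$, together with $\Delta_\mu\mathcal{S}=0$. The latter is immediate: in $\Delta_\mu=\p_x\p_{x^\ast}-\p_\psi\p_{\psi^\ast}-\p_{\bar\psi}\p_{\bar\psi^\ast}$ each summand differentiates a coordinate and its conjugate antifield, but no monomial of $\mathcal{S}$ contains both $x$ and $x^\ast$, both $\psi$ and $\psi^\ast$, or both $\bar\psi$ and $\bar\psi^\ast$ (the term $G(W(x)+\Theta)$ depends only on $x,\psi^\ast,\bar\psi^\ast$, and differentiating $x^\ast$ out of the $x^\ast$-term leaves only $\psi,\bar\psi$). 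For $\{\mathcal{S},\mathcal{S}\}=0$ the key identity is $\p_xG(W(x)+\Theta)=W'(x)\,G'(W(x)+\Theta)$; computing $\p_x\mathcal{S},\p_{x^\ast}\mathcal{S},\p_\psi\mathcal{S},\p_{\psi^\ast}\mathcal{S},\p_{\bar\psi}\mathcal{S},\p_{\bar\psi^\ast}\mathcal{S}$ and substituting into the bracket, the $W''\psi\bar\psi$ contribution dies by $\psi^2=\bar\psi^2=0$, the $G'(W+\Theta)$-proportional terms from the $x$-sector cancel against those from the $\psi$- and $\bar\psi$-sectors, and the remaining $x^\ast$-linear terms cancel between the $\psi$- and $\bar\psi$-sectors (using $(x^\ast)^2=0$ and that $\Theta$, hence $G'(W+\Theta)$, is even); the signs are fixed by the bracket conventions recorded after Definition~\ref{def_BV_bracket}.

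The main obstacle is therefore purely bookkeeping: making sure the manipulation with $1/W'$ and the Taylor resummation are read as identities of formal power series (equivalently, as expansions in a coupling, as in the examples of Section~\ref{sec_exp_integral_bv}), and checking explicitly that the closed form carries no poles so that the direct check above needs no assumptions on $W$. As a final remark, parallel to Proposition~\ref{prop_can_BV_transfrom_exp_action}, I would note that at least when the series in $G$ terminates, $\mathcal{S}$ is the image of the classical action $S=G(W(x))+W'(x)\psi\bar\psi$ under a canonical BV transformation, which gives an alternative, conceptual argument; but this is not needed for the statement above.
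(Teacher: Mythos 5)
Your proposal is correct, but it is organized differently from the paper. The paper proves this proposition by the direct computation you relegate to your ``backup'' argument: it writes the only non\nobreakdash-trivial part of $\{\mathcal S,\mathcal S\}$ as $\{\,W'\psi\bar\psi+(\ve\bar\psi+\bar\ve\psi)x^\ast,\;W(x)+\bar\ve\bar\psi^\ast-\ve\psi^\ast\,\}\,G'(W+\bar\ve\bar\psi^\ast-\ve\psi^\ast)$ via the chain rule and checks that the three elementary brackets cancel (the $W''\psi\bar\psi$ piece dying by nilpotency and the $x^\ast$-linear pieces by $\ve\bar\ve-\bar\ve\ve=0$), then observes $\Delta_\mu\mathcal S=0$ trivially because no monomial contains a field together with its own antifield --- exactly your second argument. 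Your primary route, specializing Proposition \ref{prop_sol_quant_master_exp_preexp} to $f=W'$, $h=G(W)$ and resumming the Taylor series into $G(W+\Theta)$, is not how the paper proves this proposition, though the paper uses precisely that resummation elsewhere (the polynomial-class lemma and the Nicolai-map section); you are right to flag that the homotopy behind that proposition carries a $1/W'$ and so the specialization is only a formal-series derivation near non-critical points, which is why the pole-free direct verification is the one that actually establishes the statement without hypotheses on $W$. One small inaccuracy: the cancellation of the $x^\ast$-linear terms does not use $(x^\ast)^2=0$; it is the evenness of $\ve,\bar\ve$ that makes $(\ve\bar\ve-\bar\ve\ve)x^\ast$ vanish. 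This does not affect the validity of your argument.
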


\begin{proof} The non-trivial part of BV brackets
\be\nn
\begin{split}
\{  \mathcal{S},\mathcal{S} \}&=  \{  W'(x) \psi\bar{\psi}+   (\ve \bar{\psi}+ \bar{\ve}\psi)x^\ast , G(W(x)+ \bar{\ve} \bar{\psi}^\ast-  \ve \psi^\ast )  \}  \\
& =  \{  W'(x) \psi\bar{\psi}+   (\ve \bar{\psi}+ \bar{\ve}\psi)x^\ast , W(x)+\bar{\ve} \bar{\psi}^\ast - \ve \psi^\ast  \} G'(W(x)+\bar{\ve} \bar{\psi}^\ast- \ve \psi^\ast )  = 0.
\end{split}
\ee
The  last equality follows from the individual BV brackets
\be
\begin{split}
&\{    (\ve \bar{\psi}+ \bar{\ve}\psi)x^\ast , W(x) \}  =- W'(x) (\ve \bar{\psi}+ \bar{\ve}\psi),  \\
& \{  W'(x) \psi\bar{\psi} ,   \bar{\ve} \bar{\psi}^\ast -\ve \psi^\ast \}  = W'(x) (\ve \bar{\psi}+ \bar{\ve}\psi) , \\
& \{     (\ve \bar{\psi}+ \bar{\ve}\psi)x^\ast ,  \ve \psi^\ast - \bar{\ve} \bar{\psi}^\ast  \} =  (\ve \bar{\ve} - \bar{\ve}\ve)x^\ast  =0.
\end{split}
\ee
The BV action (\ref{eq_quant_master_solution_general_single_multiplet}) satisfies the $\Delta_\mu \mathcal{S} = 0$ for the standard Berezinian $\mu = dxdx^\ast d\psi d\psi^\ast d\bar{\psi} d\bar{\psi}^\ast$
in rather trivial way due to antifield derivatives of it being independent on the corresponding fields.

\end{proof}

In section \ref{sec_bv_induction_superpotential} we observed that the BV action of quadratic type (\ref{eq_on_susy_superpotential_bivector}) for the on-shell theory of superpotential is related to the BV action of the linear type for the off-shell theory of superpotential  (\ref{eq_off_susy_superpotential})  by the BV induction (\ref{eq_BV_induction}).   We can further generalize the BV induction to the  family of solutions (\ref{eq_quant_master_solution_general_single_multiplet}) parametrized by the 
pair of functions $W$  and $G$. 

The first step is to generalize the  off-shell supersymmetric actions constructed using the superfield formalism from section \ref{sec_superspace_formalism} by adding an additional function $K$ for an auxiliary field $F$. In particular  
\be\label{eq_superspace_action_higher_der}
S = -\int d\theta d\bar{\theta}  \left(\hat{x}  K(\mathfrak{\bar{D}}\mathfrak{D}\hat{x})  +  H(\hat{x}) \right) = F K(F) + H'(x) F + H''(x) \psi\bar{\psi}.
\ee 
\begin{Remark} Note that the superspace action (\ref{eq_superspace_action_higher_der}) is the  $d=0$ superspace version of the higher derivative theory. 
\end{Remark}
We replace $H'(x) = W(x)$ and use the proposition \ref{prop_quant_master_solution_off_shell}   to write down the corresponding   solution to the quantum master equation
\be\label{eq_general_superfield_master_solution}
\mathcal{S} =  F K(F) + W(x) F + W'(x) \psi\bar{\psi} + (\ve \bar{\psi}+ \bar{\ve}\psi) x^\ast + F (  \bar{\ve} \bar{\psi}^\ast-\ve \psi^\ast).
\ee
\begin{Proposition} The  two families of solutions to quantum master equation (\ref{eq_quant_master_solution_general_single_multiplet}) and (\ref{eq_general_superfield_master_solution}) are related by the BV induction (\ref{eq_BV_induction}), 
while the functions $G$ and $K$ are related by the Laplace-like transform
\be\label{eq_laplace_transfrom_relation}
\mathcal{C}(\hbar) e^{- \frac{1}{\hbar} G(y, \hbar)} = \int  dF\;  e^{- \frac{1}{\hbar} \left( FK(F) + y F \right)}.
\ee
\end{Proposition}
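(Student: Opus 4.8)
The plan is to perform the fiber BV integral over the auxiliary field $F$, exactly as in Section~\ref{sec_bv_induction_superpotential} where the special case $FK(F)=-\frac14F^2$, $G(y)=y^2$ was worked out. First I would record the product structure of the relevant odd symplectic spaces: the space $\mathcal{M}=\mathbb{R}^{4|4}$ carrying the BV action (\ref{eq_general_superfield_master_solution}), with coordinates $x,F,\psi,\bar{\psi}$ and their antifields, factorizes as $\mathcal{M}=\mathcal{M}'\times\mathcal{M}''$, where $\mathcal{M}'=\mathbb{R}^{3|3}$ is the odd symplectic space of the on-shell model (\ref{eq_quant_master_solution_general_single_multiplet}) and $\mathcal{M}''=\mathbb{R}^{1|1}$ is the odd cotangent bundle of the auxiliary field, with coordinates $(F,F^\ast)$. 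Since all three Berezinians are the standard ones, the induced half-density on the Lagrangian submanifold $\mathcal{L}''=\{F^\ast=0\}\subset\mathcal{M}''$ is $\sqrt{\mu''|_{\mathcal{L}''}}=dF$, so the fiber BV integral (\ref{def_fiber_bv_int}) appearing in (\ref{eq_BV_induction}) reduces to an ordinary integral over $F\in\mathbb{R}$ times $\sqrt{\mu'}$.

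Next I would restrict the BV action (\ref{eq_general_superfield_master_solution}) to $\mathcal{L}''$ — which changes nothing, since $\mathcal{S}$ does not depend on $F^\ast$ — and collect its $F$-dependence, writing $y=W(x)+\bar{\ve}\bar{\psi}^\ast-\ve\psi^\ast$:
\be
\mathcal{S}\big|_{\mathcal{L}''} = W'(x)\psi\bar{\psi} + (\ve\bar{\psi}+\bar{\ve}\psi)x^\ast + FK(F) + y\,F .
\ee
Then (\ref{eq_BV_induction}) becomes
\be
\mathcal{C}(\hbar)\,e^{-\frac1\hbar\mathcal{S}^{ind}}\sqrt{\mu'} = e^{-\frac1\hbar\left(W'(x)\psi\bar{\psi}+(\ve\bar{\psi}+\bar{\ve}\psi)x^\ast\right)}\left(\int_{\mathbb{R}} dF\; e^{-\frac1\hbar(FK(F)+yF)}\right)\sqrt{\mu'},
\ee
and by the defining relation (\ref{eq_laplace_transfrom_relation}) the parenthesized integral equals $\mathcal{C}(\hbar)\,e^{-\frac1\hbar G(y,\hbar)}$. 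Cancelling the common prefactor $\mathcal{C}(\hbar)$ and reading off the exponent gives
\be
\mathcal{S}^{ind} = W'(x)\psi\bar{\psi} + (\ve\bar{\psi}+\bar{\ve}\psi)x^\ast + G\big(W(x)+\bar{\ve}\bar{\psi}^\ast-\ve\psi^\ast,\hbar\big),
\ee
which is precisely (\ref{eq_quant_master_solution_general_single_multiplet}). That $\mathcal{S}^{ind}$ solves the quantum master equation is then automatic: on one hand it follows from the proposition establishing that (\ref{eq_quant_master_solution_general_single_multiplet}) is a solution; on the other it follows from the BV--Stokes theorem \ref{thm_bv_stokes} applied to (\ref{eq_BV_induction}), since (\ref{eq_general_superfield_master_solution}) is a solution by Proposition~\ref{prop_quant_master_solution_off_shell}.

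The step I expect to require the most care is giving meaning to the ``operator-valued shift'' $y=W(x)+\bar{\ve}\bar{\psi}^\ast-\ve\psi^\ast$. Because $\psi^\ast,\bar{\psi}^\ast$ are even coordinates, $\bar{\ve}\bar{\psi}^\ast-\ve\psi^\ast$ is not nilpotent, so $G(W(x)+\bar{\ve}\bar{\psi}^\ast-\ve\psi^\ast,\hbar)$ is a genuine infinite series in the antifields; accordingly one must first define $G(y,\hbar)$ as a function (or formal $\hbar$-series) of a scalar argument $y$ through (\ref{eq_laplace_transfrom_relation}) and only afterwards expand it around $y=W(x)$. I would therefore phrase the equivalence at the level of formal antifield series, under the standing assumption that the Laplace-type integral in (\ref{eq_laplace_transfrom_relation}) converges — after a contour rotation if necessary, as already illustrated by $\mathcal{C}(\hbar)=\sqrt{-4\pi\hbar}$ in the Gaussian case — and is analytic in $y$ on a neighbourhood of $W(\mathbb{R})$. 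A minor remark worth including is that although (\ref{eq_general_superfield_master_solution}) is $\hbar$-independent, the induced $G$ can acquire $\hbar$-dependence, in agreement with the general BV induction statement following (\ref{eq_BV_induction}).
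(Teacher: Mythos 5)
Your proposal is correct and follows essentially the same route as the paper: restrict to the Lagrangian submanifold $F^\ast=0$, perform the fiber BV integral over $F$, and identify the resulting exponent with $G$ via the Laplace-like transform (\ref{eq_laplace_transfrom_relation}). The additional remarks on the product structure of the odd symplectic space, the formal-series interpretation of the antifield shift, and the quantum master equation for the induced action are sensible refinements but do not change the argument.
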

\begin{proof}
The BV induction (\ref{eq_BV_induction}) for the  Lagrangian submanifold $F^\ast=0$  gives us an integral 
\be
\int dF\; \exp-\frac1\hbar \left(F K(F) + W(x) F + W'(x) \psi\bar{\psi} + (\ve \bar{\psi}+ \bar{\ve}\psi)  x^\ast+ F (  \bar{\ve} \bar{\psi}^\ast-\ve \psi^\ast) \right)
\ee
The result of an integration gives us an induced BV action 
\be
\mathcal{S}_{ind} =   W'(x) \psi\bar{\psi} + (\ve \bar{\psi}+ \bar{\ve}\psi)  x^\ast+ G(W(x)+ \bar{\ve} \bar{\psi}^\ast - \ve \psi^\ast),
\ee
where  we introduced a function $G(y, \hbar)$ defined by
\be
\mathcal{C}(\hbar) \cdot e^{- \frac{1}{\hbar} G(y, \hbar)} = \int  dF\;  e^{- \frac{1}{\hbar} \left( FK(F) + y F \right)}.
\ee
Hence the proof is complete.
\end{proof}
The Laplace-like  transform (\ref{eq_laplace_transfrom_relation}) for the linear function $K$ gives a quadratic function $G$ which is $\hbar$-independent.  While for the higher degree polynomial function $F(K)$ the corresponding function $G(y, \hbar)$
is not a polynomial function with a complicated $\hbar$-dependence.  Similarly, the polynomial BV actions ($G$ is polynomial function) even though can be induced from the BV description of the  off-shell supersymmetric system, but the off-shell action for such system will have a very complicated auxiliary field dependence, the function   $K(F, \hbar)$ is not a polynomial and has a non-trivial $\hbar$-dependence.

\begin{Remark}  In the classical   ($\hbar \to 0$)  limit the Laplace-like  transform  (\ref{eq_laplace_transfrom_relation})  becomes  the Legendre transform. The Legendre transform in general will give us a non-trivial $\hbar$-dependence for the $G(y, \hbar)$. However in some cases the Legendre transform can be 1-loop exact, i.e. the $G(y, \hbar)$ is at most linear in $\hbar$. This phenomenon was studied and completely classified for the case of one variable by Kontsevich and Odesskii \cite{Kontsevich:2023glg}.
\end{Remark}

\section*{Acknowledgments}

We are grateful to  Yasha Neiman  and Pavel Mnev   for many discussions on the topics presented in this paper.  The work  A.L. is supported  by the Basic Research Program of the National Research University Higher School of Economics and by Shanghai Institute for Mathematics and Interdisciplinary Sciences. The work of V.L. is   supported by the Arnold Fellowship at the London Institute for Mathematical Sciences.

\bibliography{BV_loc_ref}{}
\bibliographystyle{utphys}

\end{document}